\def\asmall{0} 
\def\shownotes{0}  
\def\conference{0}
\newcommand{\ra}[1]{\renewcommand{\arraystretch}{#1}}
\setlist{nolistsep}
\newtheorem{claim}{Claim}[section]%
\newtheorem{theorem}{Theorem}[section]
\newtheorem{lemma}[theorem]{Lemma}
\newtheorem{proposition}[theorem]{Proposition}
\newtheorem{corollary}[theorem]{Corollary}
\newtheorem{definition}[theorem]{Definition}
\newcommand{\opt}{\mathrm{opt}}
\newcommand{\E}{\mathrm{E}}
\newcommand{\border}{\mathrm{border}}
\newcommand{\sink}{\mathrm{sk}}
\newcommand{\me}{\mathcal{E}}
\providecommand{\vs}{vs. }
\providecommand{\ie}{\emph{i.e.,} }
\providecommand{\eg}{\emph{e.g.,} }
\providecommand{\etal}{\emph{et al.}\xspace}
\providecommand{\etc}{\emph{etc.}}      
\providecommand{\mypara}[1]{\smallskip\noindent\emph{#1} }
\providecommand{\myparab}[1]{\smallskip\noindent\textbf{#1} }
\newcommand{\authnote}[2]{{ $\ll$\textsf{\footnotesize #1 notes: #2}$\gg$}}
\newcommand{\authnote}[2]{}
\newcommand{\Snote}[1]{{\authnote{Sharon}{#1}}}
\newcommand{\Znote}[1]{{\authnote{Zhenming}{#1}}}
\begin{document}

\title{The Diffusion of Networking Technologies}
\author{Sharon Goldberg\\Boston University \\{\tt goldbe@cs.bu.edu}
\thanks{Supported by NSF grant S-1017907 and a gift from Cisco.}
\and \quad
Zhenming Liu\\Princeton University \\{\tt zhenming@cs.princeton.edu}  \thanks{Supported by NSF grants CCF-0915922 and IIS-0964473.}
}

\date{\today}

\setcounter{page}{0}

\maketitle
\begin{abstract}
There has been significant interest in the networking community on the impact of cascade effects on the diffusion of networking technology upgrades in the Internet. Thinking of the global Internet as a graph, where each node represents an economically-motivated Internet Service Provider (ISP), a key problem is to determine the smallest set of nodes that can trigger a cascade that causes every other node in the graph to adopt the protocol. We design the first approximation algorithm with a provable performance guarantee for this problem, in a model that captures the following key issue: a node's decision to upgrade should be influenced by the decisions of the remote nodes it wishes to communicate with.

Given an internetwork $G(V,E)$ and threshold function $\theta$, we assume that node $u$ \emph{activates} (upgrades to the new technology) when it is adjacent to a \emph{connected component} of active nodes in $G$ of size exceeding node $u$'s threshold $\theta(u)$. Our objective is to choose the smallest set of nodes that can cause the rest of the graph to activate. Our main contribution is an approximation algorithm based on linear programming, which we complement with computational hardness results and a near-optimum integrality gap. Our algorithm, which does \emph{not} rely on submodular optimization techniques, also highlights the substantial algorithmic difference between our problem and similar questions studied in the context of social networks.

\myparab{Keywords.} Linear programming, approximation algorithms, diffusion processes, networks.

\myparab{Bibliographic note.}  An extended abstract of this work appeared in SODA'13.  This is the full version.
\end{abstract}
\newpage

\newpage
\tableofcontents
\newpage

\section{Introduction}\label{sec:intro}

There has been significant interest in the networking community on the impact of cascade effects on the diffusion of technology upgrades in the Internet~\cite{tussle,adoptability,JenYannis,adopt,ozment,JSGHZ08,ECS08A,edelman,newRochPaper,conext07}. %
Thinking of the global Internet as a graph, where each node represents an independent, economically-motivated \emph{autonomous system} (AS), \eg AT\&T, Google, Telecom Italia, or Bank of America, a key problem is to determine the set of nodes that governments and regulatory groups should target as early adopters of the new technology, with the goal of triggering a cascade that causes more and more nodes to  voluntarily adopt the new technology \cite{adopt,adoptability,newRochPaper,conext07}.
Given the effort and expense required to target ASes as early adopters, a natural objective (that has appeared in both the networking literature~\cite{adopt,adoptability,JenYannis} and also that of viral marketing~\cite{DomRich,KKT}) is to find the smallest possible \emph{seedset} of early adopters that could drive a cascade of adoption; doing this would shed light on how best to manage the upgrade from insecure routing~\cite{BGPsurvey} to secure routing~\cite{SBGP,BGPsec}, or from IPv4 to IPv6~\cite{ipv6}, or the deployment of technology upgrades like QoS~\cite{mescal}, fault localization~\cite{euroFL}, and denial of service prevention~\cite{SIFF}.

Thus far, the literature has offered only heuristic solutions to the problem of the diffusion of networking technologies.
In this paper, we design the first approximation algorithm with a provable performance guarantee that optimizes the selection of early adopter nodes, in a model of that captures the following important property: the technologies we study only allow a pair of nodes to communicate if they have a \emph{path} between them consisting of nodes that also use the new technology~\cite{adopt,adoptability,BGPsec,ipv6-nanog,mescal,euroFL,SIFF}.

\myparab{Model.} Consider a graph $G(V, E)$ that represents the internetwork. We use the following progressive process to model the diffusion of a new technology: a node starts out as inactive (using an older version of the technology) and \emph{activates} (adopts the new, improved technology) once it obtains sufficient utility from the new technology. Once a node is active, it can never become inactive. To model the cost of technology deployment, the standard approach~\cite{gran,Schelling,KKT} is to associate a threshold $\theta(u)$ with each node $u$ that determines how large its utility should be before it is willing to activate. A node's utility depends on the \emph{size of the connected components of active nodes adjacent to $u$ in $G$}. Thus, node $u$ activates if the connected component containing $u$ in the subgraph induced in $G$ by  nodes $\{v: v\in V, \text{Node }v \text{ is active}\} \cup \{u\}$ has size at least $\theta(u)$. We study the following optimization problem:

{\em \begin{quote} Given $G$ and the threshold function $\theta: V \rightarrow \{2,...,|V|\}$, what is the smallest \emph{feasible seedset} $S \subseteq V$ such that if nodes in $S$ activate, then all remaining nodes in $V$ eventually activate?
\end{quote}}

\smallskip\noindent
This model of node utility captures two key ideas:
\begin{enumerate}
 \item the traditional notion of ``direct network externalities/effects'' from  economics~\cite{katz1985network,farrell1985standardization}, marketing~\cite{bass} and other areas~\cite{metcalfe},  that supposes an active node that is part of a network of $k$ active nodes has utility that scales with $k$, and
  \item the fact that we are interested in networking technologies that only allow a pair of active nodes $u,v \in G(V,E)$ to communicate if there is path of active nodes between them in $G$.
\end{enumerate}
Our model has much in common with the vast literature on diffusion of innovations, and especially the linear threshold model for diffusion in social networks, articulated by Kempe \etal~\cite{KKT} and extensively studied in many other works.  Indeed, the two models diverge only in the choice of the utility function; ours is non-local, while theirs depends the (weighted) sum of a node's  active \emph{neighbors} in $G$.  Meanwhile, the non-local nature of our utility function has much in common with the classic literature on ``direct network externalities/effects'' \cite{katz1985network,farrell1985standardization,bass,metcalfe} with the important difference that these classic models ignore the underlying graph structure, and instead assume that utility depends on only a \emph{count} of the active nodes. We shall now see that these differences have a substantial effect on our algorithmic results.

\subsection{Our results.}\label{sec:ourResults}

\noindent
Our main result is an approximation algorithm based on  linear programming that consists  of two phases.  The first is a linearization phase that exploits combinatorial properties to encode our problem as an integer program (IP) with a 2-approximate solution, while the second is a randomized rounding algorithm 
that operates by restricting our search space to \emph{connected seedsets}, \ie seedsets that induce a connected subgraph of $G$ .
%
We have:
\begin{theorem}[Main result]\label{thm:mainInformal} Consider a networking technology diffusion problem $\{G(V,E), \theta\}$ where the smallest seedset has size $\opt$, the graph has diameter $r$ (\ie $r$ is the length of ``longest shortest path'' in $G$), and there are at most $\ell$ possible threshold values, \ie $\theta:V\to \{\theta_1,...,\theta_\ell\}$. Then there is a polynomial time algorithm that returns a seedset $S$ of size
 $O( r\ell \log|V| \cdot \opt)$.
\end{theorem}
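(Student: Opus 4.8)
The plan is a two-phase linear-programming algorithm.

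\emph{Phase 1 (reduction to connected seedsets, then linearization).} First I would observe that it suffices to find a small \emph{connected} feasible seedset: given any feasible seedset $S^\ast$ with $|S^\ast|=\opt$, the subgraph $G[S^\ast]$ has at most $\opt$ connected components, and joining them by shortest paths in $G$ (each of length at most $r$) yields a set that is still feasible — extra seeds never hurt — is connected, and has size $O(r\cdot\opt)$. This is where the factor $r$ is paid. (From Section~\ref{sec:ourResults} it appears the authors instead fold the connectedness step into the rounding and obtain an IP whose integral optimum is within $2$ of $\opt$; the final bookkeeping is the same.) The point of restricting to connected seedsets is that then the active set stays connected throughout the cascade and can be grown one vertex at a time — this is exactly where the hypothesis $\theta\ge 2$ is used, since it forces every activating vertex to already have an active neighbor. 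Consequently, ``$S$ is a connected feasible seedset'' is equivalent to the existence of an ordering $v_1,\dots,v_{|V|}$ of $V$ with $S=\{v_1,\dots,v_{|S|}\}$, with $v_i$ adjacent to $\{v_1,\dots,v_{i-1}\}$ for all $i\ge 2$, and with $\theta(v_i)\le i$ for every $v_i\notin S$. This characterization I would encode as an integer program in $0/1$ variables $x_{v,t}$ (``$v$ is active by step $t$''): monotonicity in $t$; at most one new activation per step; an adjacency constraint forcing each vertex that first becomes active at step $t$ to touch the active set of step $t-1$; a threshold constraint $\sum_w x_{w,t}\ge\theta(v)$ at the first step $t$ at which $v$ is active; and a single-commodity flow gadget, routing one unit from a root to every seed through seed vertices only, that certifies $\{v:x_{v,0}=1\}$ is connected. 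The objective is $\min\sum_v x_{v,0}$, and by the reduction above its integral optimum is $O(r\cdot\opt)$, while every integral solution yields a genuine feasible seedset.

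\emph{Phase 2 (LP relaxation and randomized rounding).} I would relax the $x$'s and the flow to nonnegative reals, solve the LP, and round the optimal fractional solution $x^\ast$. Scaling the fractional seed masses $x^\ast_{v,0}$ up by $\Theta(\log|V|)$, I would use the scaled values \emph{simultaneously} — as the abstract advertises — as (i) independent inclusion probabilities for the seedset and (ii) a now-scaled-up fractional flow, which with high probability supports an integral connected subgraph of $G$ linking the sampled vertices, so that the output remains a connected seedset. To argue the resulting set triggers a complete cascade I would stratify the vertices by their threshold value (only $\ell$ values occur), and for each stratum combine the LP threshold constraints with a Chernoff bound, taking a union bound over the $\le|V|$ vertices and $\ell$ strata, to conclude that every vertex eventually sees an adjacent connected active component of the size it demands. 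The losses multiply — $O(r)$ from enforcing connectedness, a constant from linearization, $O(\ell)$ from the strata, and $O(\log|V|)$ from the union bound — giving a feasible seedset of size $O(r\ell\log|V|\cdot\opt)$ with high probability; $O(\log|V|)$ independent repetitions (or a standard derandomization) remove the failure probability.

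\emph{Main obstacle.} The crux is making the linearization and the rounding cooperate. The cascade is a sequential, non-local, non-submodular process in which active connected components \emph{merge}, so it is delicate both (a) to capture it with polynomially many linear inequalities whose integer optimum is only $O(r\cdot\opt)$ (or, in the authors' formulation, within $2$ of $\opt$) and (b) to show the LP relaxation has a small integrality gap. The sharpest difficulty is connectivity: a vertex activates only when a sufficiently large \emph{connected} active component appears next to it, yet independent randomized rounding of the $x_{v,t}$ by itself destroys connectivity of the grown active set. Recovering it is exactly what forces the dual ``network-flow'' reading of the same fractional variables; and one must then propagate the invariant ``a connected active component of size $\ge\theta_j$ sits next to every threshold-$\theta_j$ vertex'' through all $\ell$ threshold levels, in cascade order, without the errors compounding beyond the claimed bound. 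I expect this last step — coupling the probabilistic and flow views level by level — to require the most care.
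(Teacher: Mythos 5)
There is a genuine gap, and it sits exactly where the paper's technical work lives. Your Phase~1 is fine (joining the components of an optimal seedset by shortest paths gives a connected feasible seedset of size $O(r\cdot\opt)$, and with $\theta\ge 2$ a connected seedset yields a connected cascade; the paper instead restricts only the \emph{activation sequence} to be connected, pays a factor $2$ there via Lemma~\ref{lem:connectSequence}, and defers the $O(r)$ to the rounding). But your Phase~2 does not go through as described. First, the LP you write down has no constraints tying the \emph{activation-time} mass to flows from seeds: your flow gadget only certifies that the integral seedset is connected. Such an LP inherits the ``recirculation of mass'' pathology and the chain gadgets of Appendix~\ref{sec:gap1}: an $\epsilon$-fraction of seed mass can fractionally activate a length-$w$ tail, satisfying your adjacency and threshold constraints with total fractional seed cost $O(1)$ while every integral solution needs $\Omega(n)$ seeds. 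So the integrality gap of your relaxation is polynomial in $n$, and no rounding scheme applied to it can yield $O(r\ell\log|V|\cdot\opt)$. The paper's augmented IP adds, for every pair $(i,t)$ with $t\ge\theta(v_i)$, a multi-sink flow constraint on the time-expanded network $\mathcal H$ (Lemma~\ref{lem:flowip}), and the rounding analysis hinges on the border-node Lemma~\ref{lem:border}, which converts those flow demands into a lower bound on the seed weight that can trigger $v_i$ by time $t$ (Proposition~\ref{prop:flow}). Nothing in your proposal plays this role.

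Second, your feasibility argument---``per threshold stratum, LP constraints plus a Chernoff bound plus a union bound give success with high probability''---asserts exactly what is false. Even with the flow-augmented LP, the number of nodes active by time $t$ in the rounded process is not a sum of independent indicators with the right mean; all one can prove is that a \emph{single} sample makes a \emph{single} threshold $\theta_j$ good (and the sequence feasible) with probability only $\Theta(\epsilon/n)$ (Lemma~\ref{lem:sample}), which is why the algorithm does $O(n\log n)$ rejection-sampling trials per threshold and then merges the $\ell$ resulting pairs $\{S_j,T_j\}$ by taking unions of seedsets and pointwise minima of activation times; the $\Omega(\ell)$ integrality gap of Appendix~\ref{sec:gap2} shows this $\ell$-fold merge is not an artifact. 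Finally, your claim that scaling the fractional flow by $\Theta(\log|V|)$ ``with high probability supports an integral connected subgraph linking the sampled vertices'' is unsupported---rounding fractional connectivity is itself a nontrivial problem---and the paper avoids it entirely with the deterministic $\proc{Glue}$ step, which is where its single factor of $r$ is paid; in your accounting you have already spent the $r$ in Phase~1, so patching your connectivity step with $\proc{Glue}$ would cost $r^2$, overshooting the claimed bound.
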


\myparab{Relationship to the linear threshold model in social networks. }
Our main result highlights the major algorithmic difference between our work and the linear threshold model in social networks~\cite{KKT}. In the social network setting, Chen~\cite{Chen08} showed that this problem is devastatingly hard, even when $r,\ell= O(1)$; 
to avoid this discouraging lower bound, variations of the problem that exploit submodular properties of the objective have been considered (\eg where thresholds are chosen uniformly at random~\cite{KKT} or see \cite{CCC11,GRS12} and references therein).  Indeed, the ubiquity of these techniques seems to suggest that diffusion problems are tractable \emph{only} when the objective exhibits submodularity properties. Our work provides an interesting counterpoint: our positive result does not rely on submodular optimization, 
and we show that the influence function in our problem, and its natural variations, lacks submodularity properties.

%
%

\myparab{Dependencies on $r$, $\ell$, and $\log |V|$ are necessary.}  Removing our algorithm's  dependence on $r,\ell$, or $\log|V|$ is likely to require a very different set of techniques
because of the following barriers:
\begin{enumerate}
\item \mypara{Computational barrier.}  We use a reduction from Set Cover to show that our problem does not admit any $o(\ln |V|)$-approximation algorithm, even if $r,\ell=O(1)$.

\item \mypara{Combinatorial barrier.}  We present a family of problem instances that prove that any algorithm that returns a connected seedset must pay an $\Omega(r)$-increase in the size of the seedset in the worst case.


\item \mypara{Integrality gap.}  The linear program we use has an integrality gap of $\Omega(\ell)$ so that our rounding algorithm is asymptotically optimal in $\ell$.

\end{enumerate}

\myparab{Quality of approximation. } We interpret the quality of our approximation for typical problem instances.

\mypara{Networking.}  The motivation for our problem is to help centralized authority (\eg a government, a regulatory group) determine the right set of autonomous systems (ASes) in the Internet to target as early adopters for an upgrade to a new networking technology \cite{BGPFCC,obamaIPv6}.  We comment on the asymptotic order of $r$ and $\ell$ when a centralized authority executes this algorithm.  The graph $G$ is the Internet's AS-level graph, which is growing over time, with diameter $r$ that does not exceed $O(\log |V|)$ (see, \eg~\cite{jureDiameter}).   We remark that the empirical data we have about the Internet's AS-level topology~\cite{cyclops,CAIDA,ixp,DIMES} is the result of a long line of Internet measurement research~\cite{tenlessons}.
On the other hand, obtaining empirical data on ASes' thresholds is still subject to ongoing research~\cite{quicksand,BGPFCC}. The following natural assumption and practical constraint restrict the threshold granularity $\ell$: (a) ASes should not be sensitive to small changes in utility (\eg 1000 nodes \vs 1001 nodes), and that (b) in practice, it is infeasible for a centralized authority to obtain information about $\theta(u)$ from every AS $u$ in the Internet, both because this business information is kept private and because, perhaps more importantly, many of these nodes are in distant and possibly uncooperative countries.
Thus, thresholds should be chosen from a geometric progression  $\{(1+\epsilon),(1+\epsilon)^2,....,(1+\epsilon)^\ell\}$ or even restricted to a constant size set $\{5\%,10\%,15\%,20\%,30\%,50\%\}$ as in \cite{adopt,adoptability,ozment} so that $\ell=O(\log |V|)$.  Our approximation ratio is therefore polylogarithmic in $|V|$ in this context.

\mypara{Other settings. }  Since our model is a general, there could be other settings where
$\ell$ may not be $O(\log|V|)$. Here, the performance of our algorithms is governed by the \emph{stability} of the problem instance. Stability refers to the magnitude of the change in the optimal objective value due to a perturbation of the problem instance, and is commonly quantified using \emph{condition numbers} (as in \eg numerical analysis and optimization~\cite{Demmel87,HJ99,LL10}). We naturally expect unstable problem instances (\ie with large $\kappa$) to be more difficult to solve.  Indeed, we can use condition numbers to parameterize our approximation ratio:
%
%
\begin{definition}[Condition number]\label{def:condition}  Consider a problem instance $\Pi = \{G, \theta\}$ and a positive constant $\epsilon$. Let $\Pi^+ = \{G, \theta^+\}$ and $\Pi^- = \{G, \theta^-\}$ be two problem instances on the same graph $G$ where for every $v\in V$, we have $\theta^+(v) = (1+\epsilon)\theta(v)$ and $\theta^-(v) = (1-\epsilon)\theta(v)$. Let $\opt^+$ ($\opt^-$) be the value of the optimal solution for $\Pi^+$ ($\Pi^-$). The \emph{condition number} is $\kappa(\Pi, \epsilon) \triangleq \frac{\opt^+}{\opt^-}.$
\end{definition}
\begin{corollary}\label{cor:little}Let $\epsilon$ be an arbitrary small constant. There exists an efficient algorithm to solve a technology diffusion problem $\Pi = \{G, \theta\}$ whose approximation ratio is $\tilde O(\kappa(\Pi, \epsilon)\cdot r)$.
\end{corollary}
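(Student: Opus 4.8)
The plan is to bootstrap Theorem~\ref{thm:mainInformal} by replacing $\theta$ with a threshold function that takes only $\tilde O(1)$ distinct values, and then to pay for this substitution with the condition number.

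First I would \emph{discretize} the thresholds onto a geometric grid. Pick a base $\beta$ slightly smaller than $1+\epsilon$ (so that subsequently rounding up to an integer still stays within a $(1+\epsilon)$ factor), and let $\theta'(v)$ be the smallest integer of the form $\lceil \beta^{k}\rceil$, $k\in\mathbb{N}$, that is at least $\theta(v)$, clamped down to $|V|$ if necessary; a short check shows this guarantees $\theta(v)\le\theta'(v)\le(1+\epsilon)\theta(v)$ for every $v$, and keeps $\theta'$ a valid threshold function into $\{2,\dots,|V|\}$. Because all thresholds lie in $\{2,\dots,|V|\}$, the rounded function $\theta'$ takes at most $\ell'=O(\log_{1+\epsilon}|V|)=O(\tfrac1\epsilon\log|V|)=\tilde O(1)$ distinct values for a fixed constant $\epsilon$. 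Running the algorithm of Theorem~\ref{thm:mainInformal} on $\Pi'=\{G,\theta'\}$ then produces, in polynomial time, a seedset $S$ that is feasible for $\Pi'$ with $|S|=O(r\,\ell'\log|V|)\cdot\opt(\Pi')=\tilde O(r)\cdot\opt(\Pi')$.

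Next I would invoke the (immediate) \emph{monotonicity} of the cascade in its thresholds: raising any $\theta(v)$ can only slow the diffusion, so the optimal seedset size is non-decreasing under a pointwise increase of the threshold function. This has two consequences: (i) since $\theta'\ge\theta$ pointwise, the seedset $S$ returned above is also feasible for the original instance $\Pi$; and (ii) in the notation of Definition~\ref{def:condition}, $\theta^-\le\theta\le\theta'\le\theta^+$ pointwise yields $\opt^-\le\opt(\Pi)\le\opt(\Pi')\le\opt^+$. Chaining everything together,
\[
|S|\ \le\ \tilde O(r)\cdot\opt(\Pi')\ \le\ \tilde O(r)\cdot\opt^+\ =\ \tilde O(r)\cdot\frac{\opt^+}{\opt^-}\cdot\opt^-\ \le\ \tilde O(r)\cdot\kappa(\Pi,\epsilon)\cdot\opt(\Pi),
\]
which, combined with feasibility of $S$ for $\Pi$, is exactly the claimed $\tilde O(\kappa(\Pi,\epsilon)\cdot r)$ approximation guarantee.

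I do not expect a serious obstacle, since this is a corollary of Theorem~\ref{thm:mainInformal}. The only point requiring mild care is making the geometric rounding simultaneously multiplicatively tight, integer-valued, and respectful of the cap at $|V|$ (which is why the base is taken a hair below $1+\epsilon$); this perturbs $\ell'$ only by a constant and so is invisible to the $\tilde O(\cdot)$. The pointwise monotonicity used in the charging step is immediate from the definition of the activation rule, because enlarging a threshold can only remove a node from the set that eventually activates under any fixed seedset.
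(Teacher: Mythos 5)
Your proposal is correct and follows essentially the same route as the paper's proof in Appendix~\ref{apx:little}: round thresholds up onto a geometric grid of $O(\log_{1+\epsilon}|V|)$ values, run the main algorithm on the rounded instance, and use monotonicity of the thresholds to get feasibility for $\Pi$ together with the chain $\opt^-\le\opt\le\opt'\le\opt^+$ that converts the loss into the factor $\kappa(\Pi,\epsilon)$. The only differences (ceiling vs.\ floor in the grid, and chaining through $\opt^+/\opt^-$ directly rather than via $\opt'/\opt$) are cosmetic.
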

See details in Appendix~\ref{apx:little}.

\smallskip
Finally, we remark that our IP formulation might also be a promising starting point for the design of new heuristics.  Indeed, in
\ifnum \conference=0
Appendix~\ref{sec:exp}
\else
Section~\ref{sec:exp}
\fi
 we ran a generic IP solver to find seedsets on problem instances of non-trivial size; the seedsets we found were often substantially better than those returned by several natural heuristics (including those used in~\cite{adopt,adoptability,JenYannis}).


\ifnum \conference=0
\myparab{Organization.} We present our IP formulation in Section~\ref{sec:linearize}, and describe our rounding algorithm in Section~\ref{sec:roundingAlgo}. All missing proofs are in Appendix~\ref{apx:connectSeq}-\ref{apx:little}. Lower bounds are in Appendix~\ref{apx:hard}.  We also present supplementary material on the (lack of) submodularity/supermodularity properties of our problem (Appendix~\ref{sec:notSubmod}), our experimental results (Appendix~\ref{sec:exp}), and expository examples and figures (Appendix~\ref{apx:example}).
\else
\myparab{Organization.} We present our IP formulation in Section~\ref{sec:linearize} and describe our rounding algorithm in Section~\ref{sec:roundingAlgo}. Our experimental results will
be presented in Section~\ref{sec:exp}.
All missing proofs are in Appendix~\ref{apx:connectSeq}-\ref{apx:little}. Expository
examples are in Appendix~\ref{apx:example}.
The full version of this paper also presents the lower bounds and supplementary material on
(lack of) submodularity/supermodularity properties of our problem.
\fi

\section{\textbf{Linearization \& formulating the IP}}\label{sec:linearize}

In this section we show how to sidestep any potential difficulties that could result from the  non-local nature of our setting. To do this, we restrict our problem in a manner that allows for easy encoding using only linear constraints, while still providing a 2-approximation to our objective.
We need the following notions:

\myparab{Activation sequences. }
Given a seedset $S$, we can define an \emph{activation sequence} $T$ as a permutation from $V$ to $\{1,...,n\}$ where $n=|V|$ that indicates the order in which nodes activate. The $t$-th position in the sequence is referred as the $t$-th timestep.
We allow a seed node to activate at \emph{any} timestep, while a non-seed node $u$ may activate at a timestep $T(u)$ as long as $u$ is part of a connected component of size at least $\theta(u)$ in the subgraph of $G$ induced by $\{u\}\cup\{v: T(v) < T(u)\}$.

\myparab{Connected activation sequences. } A connected activation sequence $T$ is an activation sequence such that at every timestep $t$, the set of active nodes induces a connected subgraph of $G$. %
We may think of $T$ as a spanning tree over the nodes in the graph, where, at every timestep, we add a new node $u$ to the tree subject to the constraint that $u$ has a neighbor that is already part of the tree.

Our IP will find the smallest seedset $S$ that can induce a connected activation sequence. At first glance this could result in a factor of $r$ growth in the seedset size.  However, the following lemma, which may be of independent interest, shows that the seedset size grows at a much smaller rate:

\begin{lemma}\label{lem:connectSequence}
The smallest seedset that can induce a connected activation sequence is at most twice the size of the optimal seedset.
\end{lemma}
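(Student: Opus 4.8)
The plan is to take an optimal (unconstrained) seedset $S^*$ together with an activation sequence $T^*$ that it induces, and to build from it a connected activation sequence whose seedset is at most $2|S^*|$. The key structural observation is that at each timestep of $T^*$, the set of active nodes is a union of connected components of $G$, and a non-seed node $u$ activates precisely because the component it joins already has size at least $\theta(u)$. So I would track, over the run of $T^*$, the family of connected ``clusters'' of active nodes; each cluster starts life containing at least one seed node, clusters grow as nodes activate, and two clusters may merge into one when a node activates adjacent to both. If I can charge the eventual connected seedset to the seeds of $S^*$ with each seed charged at most twice, I am done.

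The main idea for the construction: process the clusters of $T^*$ and, for each cluster, pick one ``anchor'' seed already inside it. I want to make each cluster internally connected in my new sequence — this is free, because within a single cluster the nodes of $T^*$ already form a connected induced subgraph at the moment the cluster is complete, so I can re-order the activations of that cluster into a connected activation subsequence rooted at its anchor (a node $u$ inside the cluster still sees a component of size $\ge\theta(u)$, since sizes only grow). The real difficulty is the moment two clusters $C_1,C_2$ merge in $T^*$ via some node $w$: in $T^*$ the two clusters were allowed to grow independently and only touch at the end, but a connected sequence cannot have two disjoint active components. To repair this I would, whenever a merge happens, promote one of the two anchors to the seedset of the \emph{other} cluster — i.e., when $C_2$'s growth is ``waiting'' to attach to $C_1$, I instead seed $C_2$'s anchor so that $C_2$ can be grown as its own connected piece and then glued to $C_1$ through $w$. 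Viewing the merge history as a forest/tree on the clusters of $T^*$, each merge costs one extra seed; since a binary merge tree on $k$ leaves has $k-1$ internal nodes, and each leaf cluster already contained $\ge 1$ original seed, the total seedset size is at most (number of leaf clusters) $+$ (number of merges) $\le 2\cdot(\text{number of original seeds})$, giving the factor $2$.

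Concretely the steps are: (1) run $T^*$ and record the partition-refinement history of active components, identifying the ``birth'' clusters (each containing an original seed of $S^*$) and the merge events; (2) build the merge tree and fix an anchor seed per birth cluster; (3) define the new activation order by a traversal of the merge tree — fully activate one subtree in a connected fashion, and when its sibling subtree must be attached, declare that sibling's anchor to be a (new) seed so it can be grown independently, then connect the two via the merge node $w$; (4) verify feasibility of every non-seed activation, using that the component a node joins in the new sequence is a \emph{superset} of the one it joined in $T^*$, hence still of size $\ge\theta$; (5) count: $|S_{\text{connected}}| \le |S^*| + (\#\text{merges}) \le 2|S^*|$.

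The step I expect to be the main obstacle is (3)--(4): making the interleaving of subtree activations precise so that the active set is connected at \emph{every} timestep, while simultaneously ensuring each non-seed $u$ still activates legally — in particular handling the case where $u$'s threshold was met in $T^*$ only after contributions from \emph{both} sides of a not-yet-performed merge. The fix is to always complete (connectedly) the entire subtree on one side of a merge before touching the other side, so that when $u$ finally activates, the side it needs has already fully formed; the anchor-promotion is exactly what makes ``completing one side independently'' possible. I would also need a short argument that the re-rooting within a single cluster preserves feasibility, which follows from monotonicity of component sizes along any linear extension consistent with a spanning tree of that cluster.
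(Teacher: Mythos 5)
Your overall skeleton is the same as the paper's (track the merge history of active components, add one extra seed per merge, and bound the number of merges by the number of component ``births'', hence by $|S^*|$), but the construction at the heart of the argument has a genuine flaw: you seed the wrong node. At the moment two active components $C_1,C_2$ merge via $w$ in $T^*$, there are \emph{no} edges of $G$ between $C_1$ and $C_2$ (otherwise they would already be one component of the induced active subgraph), so once $C_1$ is fully active the only way to extend the connected active set toward $C_2$ is to activate $w$ itself --- and $\theta(w)$ may exceed $|C_1|+1$, so $w$ must be added to the seedset. That is exactly what the paper does. Your repair instead seeds ``$C_2$'s anchor'' and grows $C_2$ ``as its own connected piece'' before gluing through $w$; during that phase the active set is $C_1$ together with a disjoint partially-grown $C_2$, which violates the definition of a connected activation sequence at every one of those timesteps (feasibility of the activations is not the issue; connectivity is). Moreover, since the anchor is already an original seed of $S^*$, your per-merge ``extra seed'' adds nothing, so the count $|S^*|+\#\mathrm{merges}$ does not actually describe the set you construct.

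A second gap is in your feasibility step (4). For the cluster attached \emph{second}, the claim ``the component a node joins is a superset of the one it joined in $T^*$'' cannot be maintained: to keep the global active set connected you must enter $C_2$ through the neighbor of $w$, which in general is incompatible with preserving $C_2$'s internal $T^*$-order, so a node of $C_2$ with a large threshold may be forced to activate before its $T^*$-predecessors inside $C_2$. The paper closes this hole by ordering the merged components by decreasing size and activating the largest, then the (seeded) connector $w$, then the others: every non-seed $v\in C_2$ satisfies $\theta(v)\leq |C_2|\leq |C_1|$, and at least $|C_1|+1$ nodes are active before any node of $C_2$ activates. Your proposal needs both corrections --- seed the connector $w$ rather than an anchor, and use the size-ordering argument (or an equivalent) for the later-attached components --- after which the counting you sketch does give the factor $2$.
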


\begin{proof}[Sketch of proof of Lemma~\ref{lem:connectSequence}] (The full proof is in Appendix~\ref{apx:connectSeq}.) We prove that any activation sequence $T^*$ induced by the optimal seedset $\opt$ can be rearranged to form a \emph{connected activation sequence} $T$ if we add at most $|\opt|$ extra nodes to the seedset. To see how, consider a timestep in $T^*$ when two or more connected active components merge into a single component, and notice that whenever this happens, there is exactly one \emph{connector} node that activates and joins these two components. In the full proof we show that by adding every connector to the seedset, we can rearrange $T^*$ to obtain a connected activation sequence.  It remains to bound the number of connectors.  Since every connector node decreases the number of disjoint connected components, and each component must contain at least one seed, then there is at most one connector for each seed node, and the 2-approximation follows.
\end{proof}


\ifnum\conference=0
\vspace{-5pt}
\ifnum\asmall=1\begin{figure}[b]
\else\begin{figure}[htp]\fi
\vspace{1.5cm}
\else\begin{figure*}[htp]
\vspace{1.5cm}
\fi
\centering
\ifnum\conference=0
\begin{pspicture}(0,-1.5)
\else
\begin{pspicture}(0,-2)
\fi
\psframebox{
\small
$\begin{array}{llll}
  &\min& \sum_{i \leq n}\sum_{t < \theta(v_i)}x_{i,t} &\\
\mbox{subject to:} & \forall t,i: &  x_{i, t} \in \{0, 1\} & \\
&\forall i: & \sum_{t \leq n}x_{i, t}  =   1  &\mbox{(permutation constraints)}\\
&\forall t:  & \sum_{i \leq n}x_{i, t}   =   1 & \mbox{(permutation constraints)}\\
& \forall t>1,i: & \sum_{\{v_{i}, v_{i'} \}\in E}\sum_{t' < t}x_{i',t'}  \geq  x_{i,t} & \mbox{(connectivity constraints)}
\end{array}$
}
\end{pspicture}
\vspace{-3mm}
\ifnum\conference=1 \vspace{-7mm} \fi 
\caption{Simple IP for the networking technology diffusion problem.}
\label{fig:ip1}
\ifnum\conference=0
\end{figure}
\else
\vspace{-3mm}\end{figure*}
\fi

\myparab{IP encoding. } The beauty of a connected activation sequence $T$ is that every nonseed node's decision to activate becomes local, rather than global: node $v$ need only check if (a) at least one of its neighbors are active, and (b) the current timestep $t$ satisfies $t\geq\theta(v)$. Moreover, given a connected activation sequence $T$, we can uniquely recover the smallest feasible seedset $S$ that could induce $T$ by deciding that node $u$ is a seed iff $\theta(v)>T(v)$.  Thus, our IP encodes a connected activation sequence $T$, as a proxy for the seedset $S$.
Let $\{v_1, v_2, ..., v_n\}$ be the set of nodes in the network.
Let $x_{i,t}$ be an indicator variable such that $x_{i, t} = 1$ if and only if $T(v_i) = t$. The integer program is presented in Figure~\ref{fig:ip1}.  The \emph{permutation constraints} guarantee that the variables $x_{i, t}$ represent a permutation. The \emph{connectivity constraints} ensure that if $x_{i, t} = 1$ (\ie node $v_i$ activates at step $t$), there is some other node $v_{i'}$ such that $v_{i'}$ (a) is a neighbor of node $v_i$ and and (b) activates at earlier time $t' < t$. Finally, the objective function minimizes the size of the seedset by counting the number of $x_{i, t} = 1$ such that $t < \theta(v_i)$.

We remark that our IP formulation suggests a similarity between our setting and the vehicle routing with time windows problem (~\eg~\cite{desrochers1987vehicle,BCC94,bansal2004approximation,FredericksonW12}).
Consider a time-windows problem, where we are given an undirected metric graph $G$ and time window $[r(u),d(u)]$ for each node $u$, and our objective is to choose a tour for the vehicle through $G$  that visits as many nodes as possible during their respective time windows.  In our setting (restricted to connected activation sequences), the tour becomes a spanning tree, and each node $u$ has time window $[\theta(u),n]$.   Understanding the deeper connection here is an interesting open question.

\section{Rounding algorithm.}\label{sec:roundingAlgo}

Unfortunately, the simple IP of Figure~\ref{fig:ip1} has a devastating $\Omega(n)$ integrality gap
\ifnum\conference=0
(Appendix~\ref{sec:gap1}).
\else
(See the full paper).
\fi
We eliminate this integrality gap by adding extra constraints to the IP of Figure~\ref{fig:ip1}, and refer to the resulting IP as the \emph{augmented IP}. We defer presentation of this IP to Section~\ref{sec:flowIP} and focus now on the high level structure of our rounding algorithm.   

Our rounding algorithm is designed to exploit the relationship between seedset $S$ and connected activation sequences $T$; namely, the fact that we can uniquely recover a $S$ from $T$ by deciding that node $u$ is a seed if $T(u)<\theta(u)$.  As such, it returns \emph{both}  $S$ and $T$ with the following four properties:
\begin{enumerate}
\item\label{prop:consist} \emph{Consistency.}  $S$ and $T$ are \emph{consistent}; namely, $T$ is an activation sequence for the diffusion process induced by $\{G, \theta, S\}$.  (Recall that $T$ is such that any seed $u \in S$ can activate at any time, and any non-seed $u \notin S$ can activate whenever it is connected to an active component of size at least $\theta(u) - 1$.)
\item\label{prop:feas} \emph{Feasibility.} $T$ is such that every node eventually activates.
\item\label{prop:conn} \emph{Connectivity.} $T$ is a connected activation sequence.
\item\label{prop:small} \emph{Small seedset.} The seedset $S$ has ``small'' size, \ie  bounded in  the size of the objective function of the solution to our LP.
\end{enumerate}

\smallskip\noindent
But how should we round the fractional $x_{i,t}$ values returned our LP relaxation to achieve this? Let's first consider two natural approaches for sampling $S$ and $T$:

\myparab{Approach 1: Sample the seedset $S$:}  Recall that in a connected activation sequence, a node that activates at time $t<\theta(u)$ must be a seed. Therefore, we can sample the seedset $S$ by adding each node $v_i$ to $S$ with probability proportional to $\sum_{t<\theta(v_i)}x_{i,t}$.

\myparab{Approach 2: Sample the activation sequence $T$:}  We can instead sample the activation sequence $T$ by deciding that node $v_i$ activates before time $t$ with probability proportional to $\sum_{\tau < t} x_{i,\tau}$.

\smallskip\noindent
However, neither of these approaches will work very well. While Approach 1 guarantees that the seedset $S$ is small (Property~\ref{prop:small}), it completely ignores the more fine-grained information provided by the $x_{i,t}$ for $t\geq\theta(v_i)$ and so its not clear that nonseed nodes will activate at the right time (Property~\ref{prop:feas}).  Meanwhile, Approach 2 guarantees feasibility (Property~\ref{prop:feas}), but by sampling activation times for each node independently, it ignores correlations between the $x_{i,t}$. It is therefore unlikely that the resulting $T$ is connected (Property~\ref{prop:conn}), and we can no longer extract a small seedset (Property~\ref{prop:small}) by checking if $T(u)<\theta(u)$.

Instead, we design a sampling procedure that gives us a coupled pair $\{S, T\}$ where, with high probability, (a) the distribution of $S$ will be similar to that of Approach 1, so that the seedset is small (Property~\ref{prop:small}), while (b) the distribution of $T$ will be similar to Approach 2, so we have feasibility (Property~\ref{prop:feas}), and also (c) that $T$ is connected (Property~\ref{prop:conn}).  However, $S$ and $T$ are not necessarily consistent (Property~\ref{prop:consist}). Later, we show how we use repeated applications of the sampling approach below to correct inconsistencies, but for now, we start by presenting the sampling routine:

\myparab{Approach 3: Coupled sampling. }  We start as in Approach 1, adding each $v\in V$ to $S$ with probability
$\min\left\{1, \alpha\sum_{t < \theta(v)}x_{v,t}\right\}$,
%
where $\alpha$ is a bias parameter to be determined.
We next run \emph{deterministic} processes: $S \gets \proc{Glue}(S)$ followed by $T\gets \proc{GetSeq}(S)$.

The $\proc{Glue}$ procedure, defined below, ensures that $S$ is connected (\ie induces a connected subgraph of $G$), and blows up $|S|$ by an $O(r)$-factor (where $r$ is graph diameter). Meanwhile, $\proc{Get-Seq}$, defined in Section~\ref{sec:whyGetSeqWorks}, returns a connected activation sequence $T$; we remark that $T$ may not be a permutation (many nodes or none could activate in a single timestep), and may not be feasible, \ie activate every node.

\ifnum\conference=1
\begin{figure*}
\fi
\begin{codebox}
\Procname{$\proc{Glue}(S)$}
\li \While $S$ is not connected
\li \Do Let $C$ be the largest connected component in the subgraph induced by $S$.
\li Pick $u \in S \setminus C$. Let $P$ be the shortest path connecting $u$ and $C$ in $G$.
\li Add nodes in $P$ to $S$.
\End
\li \Return S.
\end{codebox}
\ifnum\conference=1
\vspace{-7mm}
\caption{The $\proc{Glue}$ procedure for constructing a connected seedset.}
\vspace{-3mm}
\end{figure*}
\fi

The properties of Approach 3 are captured formally by the following proposition, whose proof (Section~\ref{sec:mainprop}) presents a major technical contribution of our work:

\begin{proposition}\label{prop:flow} Let $\alpha = 24(1+\epsilon)\ln(\frac{4n^2}{\epsilon})$ and $\epsilon$ be a suitable constant. Then there exists an augmented IP and an efficiently computable function $\proc{Get-Seq}(\cdot)$ such that Approach 3 returns $S$ and $T$ (that are not necessarily consistent), where
\begin{enumerate}
\item $T$ is connected,
\item for any $v \notin S$ we have that $T(v) \geq \theta(v)$, and
\item for any $v$ and $t$,
\begin{itemize}
\item if $\sum_{t' \leq t}x_{v,t'}\geq \frac{1}{12(1+\epsilon)}$, then $\Pr[T(v) \leq t] \geq 1 - \frac{\epsilon}{4n^2}$.
\item if $\sum_{t' \leq t}x_{v,t'}< \frac{1}{12(1+\epsilon)}$, then $\Pr[T(v) \leq t] \geq (1+\epsilon)(\sum_{t' \leq t}x_{v,t'})$.
\end{itemize}
\end{enumerate}
\end{proposition}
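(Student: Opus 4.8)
\textbf{Proof plan for Proposition~\ref{prop:flow}.}

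The plan is to build the ``augmented IP'' so that its LP relaxation hands us not only a fractional permutation $x_{i,t}$ but also a certificate that these values can be realized as a \emph{network flow} along a spanning-tree-like structure rooted at the seeds. Concretely, I would add flow variables to the IP of Figure~\ref{fig:ip1}: for each ordered edge $(u,v)\in E$ and each timestep $t$, a variable representing the amount of ``activation flow'' sent from $u$ to $v$ at time $t$, together with conservation constraints tying the net flow into $v$ up to time $t$ to the cumulative value $\sum_{t'\le t}x_{v,t'}$, and capacity constraints forcing any flow into $v$ at time $t$ to originate from a neighbor that is already (fractionally) active before $t$. The seedset-indicator mass $\sum_{t<\theta(v)}x_{v,t}$ plays the role of a source. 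This is exactly the ``simultaneously interpret the fractional values both as probabilities and as network flows'' idea flagged in the introduction. The augmented IP still has a $2\opt$ integral optimum (the connected activation sequence of Lemma~\ref{lem:connectSequence} induces an integral flow), so the LP optimum is at most $2\opt$.

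Given the LP solution, Approach~3 first samples $S$ by independent coin flips with probability $\min\{1,\alpha\sum_{t<\theta(v)}x_{v,t}\}$, then runs $\proc{Glue}$ and $\proc{Get-Seq}$. Part~1 (connectivity of $T$) should be immediate from the definition of $\proc{Get-Seq}$, which by construction only ever activates a node adjacent to the current active component and is seeded by the connected set $\proc{Glue}(S)$; I would state $\proc{Get-Seq}$ so that this is true by fiat. Part~2 ($T(v)\ge\theta(v)$ for $v\notin S$) likewise should follow because a non-seed node is only allowed to activate once the active component reaching it has size $\ge\theta(v)-1$, so its timestep is $\ge\theta(v)$; again this is baked into $\proc{Get-Seq}$'s rule. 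The substance is Part~3, the two-regime bound on $\Pr[T(v)\le t]$.

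For Part~3 the key step is to show that $\proc{Get-Seq}$, run on the sampled seedset, activates $v$ by time $t$ whenever a ``sufficiently large'' fraction of the fractional flow feeding $v$ before time $t$ passes through nodes that were actually sampled into $S$ (or recursively activated). I would argue this by tracing the LP flow backwards from $v$: the cumulative mass $\sum_{t'\le t}x_{v,t'}$ decomposes (via the flow constraints) into flow paths terminating at seed-sources, each of bounded length $\le r$ because of the diameter and the time-layering. Along each such path, every node is independently retained with probability $\ge\alpha\cdot(\text{its seed-mass})$, and one shows that with the stated $\alpha=24(1+\epsilon)\ln(4n^2/\epsilon)$ a Chernoff/union-bound argument over the $O(n)$ relevant paths and $O(n)$ timesteps gives: if the total feeding mass is at least $\frac{1}{12(1+\epsilon)}$ then enough of it survives to push activation through to $v$ by time $t$ with probability $\ge 1-\frac{\epsilon}{4n^2}$; and in the sparse regime (mass $<\frac{1}{12(1+\epsilon)}$) a first-moment argument shows the activation probability is at least a $(1+\epsilon)$ blow-up of the fractional mass. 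The $\proc{Glue}$ step only helps here (it can only add nodes and hence only speed up activation), though one must check it does not destroy the probabilistic independence used in the Chernoff bound — I would handle this by doing the concentration analysis on the \emph{pre-}$\proc{Glue}$ sample $S$ and then noting $\proc{Get-Seq}$ is monotone in its seedset.

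The main obstacle I anticipate is precisely this coupling of the flow-decomposition with the independent sampling: making ``enough surviving mass along short paths implies $v$ activates in $\proc{Get-Seq}$ by time $t$'' into a clean deterministic lemma about $\proc{Get-Seq}$, and then controlling the error probability uniformly over all $(v,t)$ so that a union bound over the $O(n^2)$ pairs still leaves room (hence the $4n^2$ inside the log in $\alpha$). The threshold constant $\frac{1}{12(1+\epsilon)}$ and the factor $24$ in $\alpha$ will come out of balancing the Chernoff deviation against the path length and the number of union-bound events; I would expect to first prove the statement with unspecified constants and only at the end pin them down. A secondary subtlety is ensuring the augmented IP's flow constraints are strong enough to kill the $\Omega(n)$ integrality gap of the bare IP while remaining satisfiable by the true optimum — this is where the design of the extra constraints (deferred to Section~\ref{sec:flowIP}) must be done carefully.
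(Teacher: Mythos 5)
You have the right architecture for the augmented IP and for Parts 1--2: adding flow-feasibility constraints over a time-expanded network, keeping the integral optimum via the connected activation sequence, and designing $\proc{Get-Seq}$ so that connectivity and ``$T(v)\ge\theta(v)$ for $v\notin S$'' hold by construction is exactly how the paper proceeds (the paper enforces the $(i,t)$-flow feasibility via min-cut constraints and a separation oracle rather than explicit edge-flow variables, but that is a formulation detail). The gap is in Part 3, which you yourself flag as the main obstacle: your plan to trace the flow back along paths ``of bounded length $\le r$'' and run a Chernoff/union-bound argument requiring that enough nodes along each path are ``independently retained'' does not work as stated. Flow paths in the time-layered network $\mathcal H$ can have length up to $n$ (the graph diameter does not bound them, since each hop advances the timestep, not the distance), and independent sampling with probabilities $\min\{1,\alpha\omega_j\}$ cannot guarantee that every node (or most mass) along a path survives --- the product of such probabilities is tiny. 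More importantly, no such survival is needed.

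The missing idea is the paper's notion of \emph{border nodes}. Fix a representative solution $\mathcal F_{u,t}$ of the $(u,t)$-flow problem and decompose it into paths; on each path take the \emph{last} vertex $X_{j,\tau}$ lying to the left of the threshold line (i.e.\ with $\tau<\theta(v_j)$). The crucial deterministic observation about $\proc{Get-Seq}$ is that if even \emph{one} such border node $v_j$ is sampled into $S$ (hence active at time $1$ after $\proc{Glue}$), then every subsequent vertex on that path sits at a timestep at or beyond its threshold, so $\proc{Get-Seq}$ propagates activation along the path for free and $T(u)\le t$; no further sampling along the path is required. The probabilistic work then reduces to bounding $\Pr[\text{some border node is a seed}]=1-\prod_{v_j\in B(u,t)}(1-p_j)$, using the border-node lemma $\sum_{v_j\in B(u,t)}\omega_j\ge\sum_{\theta(u)\le\tau\le t}x_{u,\tau}$ (total border capacity dominates the demand). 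In the sparse regime a first-order approximation of the product, combined with the event that $u$ itself was sampled (which covers the mass $\sum_{\tau<\theta(u)}x_{u,\tau}$ that the flow constraints do not see), gives the $(1+\epsilon)\sum_{\tau\le t}x_{u,\tau}$ bound; in the dense regime the border seed weight is at least $\frac{1}{24(1+\epsilon)}$, so the product is at most $e^{-\alpha/(24(1+\epsilon))}=\frac{\epsilon}{4n^2}$ --- this elementary exponential bound, not a union bound over $O(n^2)$ pairs inside the proposition, is what fixes $\alpha=24(1+\epsilon)\ln(\frac{4n^2}{\epsilon})$. Your remark that the concentration analysis should be done on the pre-$\proc{Glue}$ sample and that $\proc{Get-Seq}$ is monotone in the seedset is correct and consistent with the paper, but without the border-node reduction the two-regime bound of Part 3 does not follow from your sketch.
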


Notice that the third item in Proposition~\ref{prop:flow} suggests that the distribution of $T$ in Approach 3 is ``close'' to that of Approach 2. We also remark that in this item the parameters are not optimized. Thus when we are right at the transition point (\ie $\sum_{t' \leq t}x_{v,t'}= \frac{1}{12(1+\epsilon)}$),  the two cases give quite different bounds.
 In the subsequent section, we apply the ideas we developed thus far to design an algorithm that uses Proposition~\ref{prop:flow} to ``error-correct'' inconsistencies between $S$ and $T$ so that all four properties are satisfied.   Then, in Section~\ref{sec:mainprop} we present the more technically-involved proof of Proposition~\ref{prop:flow}.


\subsection{Resolving inconsistencies via rejection-sampling
\ifnum\conference=1\\\fi}
\label{sec:roundAlgoOverview}

\ifnum\conference=1$\;$\vspace{-3mm}\fi

Recall the threshold function $\theta: V \to \{\theta_1,...,\theta_\ell\}$, and suppose a threshold $\theta_j$ is \emph{good} with respect to $T$ if there are at least $\theta_j-1$ active nodes in $T$ by time $\theta_j - 1$.   The following simple lemma presents the properties we need from our rejection sampling algorithm:

\begin{lemma}\label{lem:goodst} Let $S$ be a seedset and $T$ be an activation sequence. If:
\begin{itemize}
\item (P1). $T$ is connected and feasible (for any $v \in V$, $T(v) \leq n$), and
\item (P2). $T(v) \geq \theta(v)$ for all $v \notin S$, and
\item (P3). Every $\theta_j$ for $j \in [\ell]$ is good with respect to $T(\cdot)$,
\end{itemize}
then $S$ is consistent with $T$ and $S$ is a feasible seedset.
\end{lemma}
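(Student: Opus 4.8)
The plan is to verify the two conclusions—consistency of $S$ with $T$, and feasibility of $S$—directly from the three hypotheses, by checking the activation condition at every timestep $t$ for every node. Recall that $S$ is \emph{consistent} with $T$ when $T$ is a valid activation sequence for the diffusion process $\{G,\theta,S\}$: each seed node may activate at any time, and each non-seed node $v\notin S$ may activate at time $T(v)$ only if, at that moment, $v$ lies in a connected component of active nodes of size at least $\theta(v)$ (counting $v$ itself), i.e. there are at least $\theta(v)-1$ already-active nodes forming a connected set together with $v$. So the whole lemma reduces to showing this local condition holds for each non-seed $v$ at time $T(v)$; feasibility of $S$ as a seedset then follows because (P1) guarantees $T(v)\le n$ for all $v$, so once consistency is established, $T$ witnesses that every node activates.

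The key steps, in order, are as follows. First I would fix an arbitrary non-seed node $v\notin S$ and consider the timestep $t = T(v)$. By (P2), $t = T(v)\ge \theta(v)$, hence $t-1 \ge \theta(v)-1$. Second, I would use (P3): since $\theta(v)$ is one of the values $\theta_1,\dots,\theta_\ell$, it is \emph{good} with respect to $T$, meaning at least $\theta(v)-1$ nodes are active by time $\theta(v)-1$; because $t-1\ge\theta(v)-1$ and activations are monotone in time, there are at least $\theta(v)-1$ nodes active by time $t-1$, i.e. strictly before $v$ activates. Third—and this is the one place the hypotheses must be combined carefully—I would invoke (P1), the \emph{connectivity} of $T$: at every timestep the set of active nodes induces a connected subgraph of $G$. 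In particular, just after $v$ activates at time $t$, the active set $\{u : T(u)\le t\}$ is connected, and it contains $v$ together with the $\ge \theta(v)-1$ nodes active by time $t-1$; so $v$ is part of a single connected active component of size at least $(\theta(v)-1)+1 = \theta(v)$. That is exactly the activation condition for the non-seed node $v$. Since $v$ was arbitrary, $S$ is consistent with $T$, and by (P1) every node has $T(v)\le n<\infty$, so $T$ realizes a diffusion process from $S$ that activates all of $V$; hence $S$ is feasible.

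The main obstacle I anticipate is purely a matter of bookkeeping around the connectivity hypothesis: one must be careful that connectivity is used at the right timestep (the step at or after $v$ activates, so that $v$ itself is included in the connected active set), and that "size at least $\theta(v)$" is obtained by adding $v$ to the $\theta(v)-1$ nodes guaranteed by goodness, rather than double-counting or off-by-one errors between "$\theta_j$ active by time $\theta_j-1$" and "$\theta(v)-1$ active strictly before $T(v)$." There is no real technical depth here—(P1), (P2), (P3) were evidently engineered precisely so that this verification goes through—so the proof is essentially a short, careful chain of inequalities and the observation that a connected active set containing $v$ and at least $\theta(v)-1$ other nodes gives $v$ a large enough active component. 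I would write it as a single paragraph with the three hypotheses invoked in the order (P2) $\to$ (P3) $\to$ (P1).
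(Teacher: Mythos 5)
Your proposal is correct and follows essentially the same argument as the paper's own proof: use (P2) to place $T(v)\geq\theta(v)$, (P3) to get at least $\theta(v)-1$ nodes active by time $\theta(v)-1$, and (P1) to conclude these form a connected active component that $v$ joins, with feasibility of $S$ following from feasibility of $T$. Your version just spells out the monotonicity and off-by-one bookkeeping that the paper leaves implicit.
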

\begin{proof} 
To show that $S$ and $T$ are consistent, we argue that by the time a non-seed $v\notin S$ activates in $T$, there are at least $\theta(v) - 1$ active nodes.  Since $v$ activates at time $T(v)\geq \theta(v)$, this follows because $T(\cdot)$ is connected and each $\theta_j$ is good. Since $T$ is feasible and $S$ is consistent with $T$, we have that $S$ is feasible.
\end{proof}

We construct a pair of $\{S, T\}$ that meets the properties of Lemma~\ref{lem:goodst} in two phases. First, we construct $\ell$ pairs $\{S_1, T_1\}$, ..., $\{S_{\ell}, T_{\ell}\}$ where for each $\{S_j, T_j\}$  we have that (P.1) and (P.2) hold, a \emph{single} threshold $\theta_i$ is good w.r.t. $T_i$, and $S_i$ is ``small'', \ie
$|S_i| \leq 24(1+\epsilon)^2\ln(\frac{4n^2}{\epsilon})r\cdot \opt$.
The second phase assembles these $\ell$ pairs into a single $\{S, T\}$ pair so \emph{all} $\theta_j$ are good w.r.t. $T$, so that (P1)-(P3) hold, and the seedset $S$ is bounded by $O(r \ell \ln n \cdot \opt)$, so our main result follows.

\myparab{Step 1. Rejection-sampling to find $\{S_j, T_j\}$ pairs $\forall j \in [\ell]$.} Thus, while we sample $T_j$ that may not be permutations, the following lemma (proved in Appendix~\ref{apx:proofLemSample}) shows that we can repeat Approach 3 until we find $S_j,T_j$ that satisfy the required properties:

\begin{lemma}[Success of a single trial]\label{lem:sample} Let $S_j$ and $T_j$ be sampled as in Approach 3. For any $t$, let $A_t$ be the number of nodes active in $T_j$ up to time $t$ (inclusive). Then
$\Pr[A_t \geq t \wedge A_n = n] \geq \frac{\epsilon}{2n}$.
\end{lemma}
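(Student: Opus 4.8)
The plan is a first-moment argument applied to the number of \emph{inactive} nodes rather than to $A_t$ directly: the inactive count is nonnegative, so plain Markov applies, and the fact that $A_t\le n$ is what converts a mean estimate into a lower bound on $\Pr[A_t\ge t]$. Write $q_v:=\sum_{s\le t}x_{v,s}$ for the mass the LP solution puts on activating $v$ by time $t$, so that $0\le q_v\le 1$ and, by the permutation constraint $\sum_v x_{v,s}=1$ of the augmented LP, $\sum_v q_v=\sum_{s\le t}\sum_v x_{v,s}=t$. All probabilities below are over the coupled sampling of Approach~3 producing $S_j,T_j$, and we invoke Proposition~\ref{prop:flow} with $T=T_j$.

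First I would dispose of the feasibility event $\{A_n=n\}$. Taking $t=n$ in item~3 of Proposition~\ref{prop:flow} and using $q_v=\sum_s x_{v,s}=1\ge\frac{1}{12(1+\epsilon)}$ for every $v$, each node satisfies $\Pr[T(v)\le n]\ge 1-\frac{\epsilon}{4n^2}$, so a union bound over the $n$ nodes gives $\Pr[A_n<n]\le\frac{\epsilon}{4n}$.

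Next I would bound $\E[Z]$, where $Z:=n-A_t$ is the number of nodes with $T(v)>t$ (note $A_t\ge t\iff Z\le n-t$). Let $B:=\{v:q_v\ge\frac{1}{12(1+\epsilon)}\}$. Item~3 of Proposition~\ref{prop:flow} gives $\Pr[T(v)>t]\le\frac{\epsilon}{4n^2}$ for $v\in B$ and $\Pr[T(v)>t]\le 1-(1+\epsilon)q_v$ for $v\notin B$; since $q_v\le 1$ we also have $\sum_{v\notin B}q_v=t-\sum_{v\in B}q_v\ge\max\{0,\,t-|B|\}$. Summing yields
\[
\E[Z]\ \le\ |B|\,\tfrac{\epsilon}{4n^2}+(n-|B|)-(1+\epsilon)\max\{0,\,t-|B|\}.
\]
As a function of $b=|B|\in[0,n]$ the right-hand side increases on $[0,t]$ and decreases on $[t,n]$, hence is maximized at $b=t$, where it equals $n-t+\tfrac{\epsilon t}{4n^2}\le n-t+\tfrac{\epsilon}{4n}$ (using $t\le n$). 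So $\E[Z]\le n-t+\tfrac{\epsilon}{4n}$ unconditionally.

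Finally I would combine. Markov on $Z\ge 0$ gives $\Pr[Z\ge n-t+1]\le\frac{\E[Z]}{n-t+1}$, hence
\[
\Pr[A_t\ge t]=\Pr[Z\le n-t]\ \ge\ 1-\frac{n-t+\epsilon/(4n)}{n-t+1}\ =\ \frac{1-\epsilon/(4n)}{n-t+1}\ \ge\ \frac{1-\epsilon/(4n)}{n},
\]
using $n-t+1\le n$. Subtracting the feasibility failure probability, $\Pr[A_t\ge t\wedge A_n=n]\ge\Pr[A_t\ge t]-\Pr[A_n<n]\ge\frac{1-\epsilon/(4n)}{n}-\frac{\epsilon}{4n}\ge\frac{1}{2n}\ge\frac{\epsilon}{2n}$, the last two steps using that $\epsilon\le 1$ is a suitably small constant. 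The only genuine computation is the bound on $\E[Z]$; the mild subtlety there is keeping the additive $\Theta(\epsilon/n)$ slack uniform in $|B|$ so that the estimate does not deteriorate as $t\to n$ — everything conceptually heavy is already absorbed into Proposition~\ref{prop:flow}.
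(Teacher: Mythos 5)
Your proposal is correct, and it takes a genuinely different route from the paper's proof. The paper first bounds $\Pr[A_n<n]\leq \tfrac{\epsilon}{4n}$ by a union bound (you do the same), but then partitions $V$ into heavy and light nodes, splits into two cases according to the light mass $\rho_t=\sum_{v\in L}\sum_{\tau\leq t}x_{v,\tau}$, and in the case $\rho_t\geq 1$ works with the conjunction of two events (all heavy nodes active, and at least $\rho_t$ light nodes active), bounding the latter by an averaging argument that crucially exploits the $(1+\epsilon)$ gain of Proposition~\ref{prop:flow} to get $\Pr[\me_2]\geq \epsilon\rho_t/n$ and finally $\Pr[A_t\geq t]\geq \tfrac{3\epsilon}{4n}$. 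You instead apply a single Markov inequality to the inactive count $Z=n-A_t$, using only the per-node bounds of Proposition~\ref{prop:flow}, the column permutation constraints $\sum_v q_v=t$, the trivial cap $q_v\leq 1$, and the integrality of $Z$ (which is what turns $\E[Z]\leq n-t+\tfrac{\epsilon}{4n}$ into $\Pr[Z\leq n-t]\geq \tfrac{1-\epsilon/(4n)}{n-t+1}$); your maximization over $|B|$ correctly makes the expectation bound uniform, and the final arithmetic (with $\epsilon\leq 1$, consistent with the paper's ``suitable constant'') gives $\tfrac{\epsilon}{2n}$. What each approach buys: your argument is shorter, avoids the case split on $\rho_t$, requires the target count only in aggregate rather than through the specific event ``all heavy plus $\rho_t$ light,'' and in fact yields the slightly stronger bound $\Pr[A_t\geq t]\geq \tfrac{1-\epsilon/(4n)}{n-t+1}\approx \tfrac1n$ (versus the paper's $\tfrac{3\epsilon}{4n}$), without even essentially needing the $(1+\epsilon)$ boost on light nodes; the paper's two-event decomposition, on the other hand, is the one that generalizes naturally when one wants to track \emph{which} nodes are active (heavy nodes are needed elsewhere with high probability), at the cost of a weaker constant. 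Only cosmetic caveats: you should state explicitly that $t$ ranges over $\{1,\dots,n\}$ so that $n-t+1\in[1,n]$, and that nonnegativity of the relaxed variables is what gives $q_v\leq 1$; neither affects correctness.
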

\noindent To see why, observe that (P1)-(P2) hold by Proposition~\ref{prop:flow}, and $\theta_j$ is good w.r.t. $T_j$ with probability $\frac{\epsilon}{2n}$ by Lemma~\ref{lem:sample}, and $S_j$ has the required size with probability $\geq 1 - \frac 1 {n^{10}}$ by standard Chernoff bounds (the exponent 10 here is chosen arbitrarily). Therefore, we successfully find the required $\{S_j, T_j\}$ with probability $\frac{\epsilon} {2n} - \frac{1}{n^{10}}$ in a single trial.  After $O(n \log n)$ independent trials, we find the required $\{S_j, T_j\}$ with probability $1-1/{n^c}$ for sufficiently large $c$.

\myparab{Step 2. Combine the $\{S_i, T_i\}$ to obtain the final $\{S, T\}$.} We can now construct our final $\{S, T\}$ pair in a rather straightforward way: to construct $S$, we take the union of all the $S_j$'s and then use $\proc{Glue}$ to connect them; that is we take $S \gets \proc{Glue}(\bigcup_{j\leq \ell} S_j)$. To construct $T$, we set $T(v)=1$ for all seeds $v\in S$ and $T(v) = \min_{j \leq \ell}T_j(v)$ ($\forall v\in V\backslash S$).

\noindent To conclude, we need show that this $\{S, T\}$ pair satisfies Lemma~\ref{lem:goodst}.
\begin{itemize}
\item First we show (P1) holds. Since every $T_j$ is feasible, and $T(v) \leq T_j(v)$ by construction, it follows that $T$ is also feasible. Next we show that $T(v)$ is connected by induction over $t$.  As a base case, observe that $T = \min_{j \leq \ell}T_j(v)$ is connected at $t=1$, since the seedset $S = \proc{Glue}(\bigcup_j S_j)$ is connected.  As the induction step, we assume that $T$ is connected up to time $t$ (inclusive) and show that $T$ is also connected up to time $t+1$ (inclusive).  To do this, let $v$ be a node such that $T(v) = t+1$.  It follows that there exists $j\leq\ell$ such that $T_j(v)=t+1$; since $T_j$ is connected, there must be another node $u$ such that $T_j(u) < t+1$ and $u$ and $v$ are neighbors in the graph $G$.  Since $T(u)\leq T_j(u)$, it follows that $v$ is connected to a node (namely node $u$) that is active at time $t + 1$, and the induction step follows.

\item We show that (P2) holds. For all $v \notin S$, we have $v \notin S_j$ for all $j \leq \ell$. This means $T_j(v) \geq \theta(v)$ for all $j$. Therefore, $T(v) \geq \theta(v)$ and (P2) holds.
\item Finally, (P3) holds. For each $j\leq \ell$ we know that $\theta_j$ is good w.r.t to $T_j$.  For all $j\leq \ell$, every node $v$ has $T(v) \leq T_j(v)$ by construction, so that the number of active nodes at time $\theta_j$ in $T$ must be no fewer than the number of active nodes in $T_j$.  (P3) follows since $\theta_j$ is good w.r.t to $T_j$ for every $j\leq \ell$.
\end{itemize}
It follows that Lemma~\ref{lem:goodst} holds and the final seedset $S$ is indeed a feasible seedset.  Since the size of each seedset $S_i$ is bounded by $O(r\log n\cdot\opt )$ (and the gluing in Phase 2 grows the seedset by an additive factor of at most $\ell\cdot r$) it follows that  $S$ has size at most $O(\ell r \log n \cdot \opt )$ and our main result follows.

\subsection{Strengthened IP and coupled sampling}\label{sec:mainprop}
We show how we use a flow interpretation of our problem to prove Proposition~\ref{prop:flow}.

\subsubsection{The need for stronger constraints
\ifnum\conference=1\\\fi}\label{sec:badexample}

\ifnum\conference=1$\;$\vspace{-3mm}\fi

In
\ifnum \conference=0
Appendix~\ref{sec:gap1},
\else
the full paper,
\fi
we show that the LP in Figure~\ref{fig:ip1} has an $\Omega(n)$ integrality gap.  To understand why this gap comes about, let us suppose that each $x_{i, t}$ returned by the LP is a mass that gives a measure of the probability that node $v_i$ activates at time $t$. Consider the following example:

\ifnum \conference=0
\begin{wrapfigure}{r}{10cm}
 \vspace{-12pt}\includegraphics{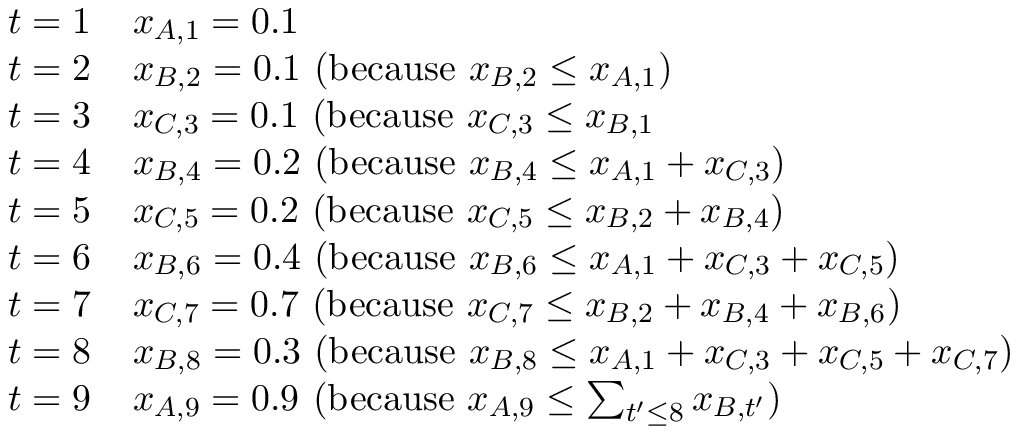} \vspace{-30pt}
\end{wrapfigure}
\else
\begin{figure*}
\begin{center}
\includegraphics{pathosmall} 
\end{center}
\vspace{-7mm}
\caption{A pathological example for the original LP formulation.}\label{fig:pathosmall}
\vspace{-3mm}
\end{figure*}
\fi

\myparab{Pathological example.}
Consider a graph that contains a clique of nodes A,B and C. Suppose the LP returns a solution such that at $t = 1$, node A has mass 0.1, while all other nodes have mass 0. The constraints 
\ifnum\conference=1
in Figure~\ref{fig:pathosmall} 
\fi
 repeatedly allow mass from node $A$ to circulate through nodes $B$ and $C$ and then back to $A$, as shown in the variable assignments beside. Finally, at $t=9$, enough mass has circulated back to $A$, so that $A$ has mass 0.9 and thus ``probability'' 0.9 of activating.  Note that this is highly artificial, as all of this mass originated at $A$ to begin with!  In fact, no matter how we interpret these $x_{i, t}$, the example suggests that this ``recirculation of mass'' is unlikely to give us any useful information about when node $A$ should actually activate.

\ifnum\conference=1 In the full version, we show how to use this ``recirculation of mass'' to construct an $\Omega(n)$-integrality gap for the LP in Figure~\ref{fig:ip1}.
\fi

\subsubsection{The flow constraints
\ifnum\conference=1\\\fi}\label{sec:flowIP}

\ifnum\conference=1$\;$\vspace{-3mm}\fi

Inspired by the example above, we can reduce the size of the integrality gap by thinking of the diffusion process in the context of network flows.  Specifically, we suppose that when a nonseed node $u$ activates at time $T(u)$, a unit flow originates at a seed node and flows to node $u$ along the network induced by the nodes active prior to timestep $T(u)$. We therefore augment the IP of Figure~\ref{fig:ip1} with this idea by introducing \emph{flow constraints}:

\myparab{The flow network.}
For any solution $\{x_{i, t}\}_{i, t \leq n}$, we define a \emph{flow network} $\mathcal H$, with vertex set $V(\mathcal H) = \{X_{i, t}: i , t \in [n]\}$ and edge set $E(\mathcal H)=\{(X_{i, t}, X_{i',t'}): t' > t \wedge \{v_{i'}, v_i\} \in E(G)\}$.
Every node $X_{i,t}$ in the flow network $\mathcal H$ has capacity $x_{i,t}$, while edges in $\mathcal H$ do not have capacity bounds.
We let the line that connects nodes $\{X_{i,t}: t = \theta(v_i)\}$ be the \emph{threshold line}. All the $X_{i,t}$ such that $t < \theta(v_i)$ are flow graph nodes \emph{to the left of the threshold line}; very roughly, these nodes corresponds the region where $v_i$ is a seed. The rest are flow graph nodes \emph{to the right of the threshold line}, and roughly correspond to $v_i$ being a nonseed.  A sample flow graph and its threshold line
appears in Appendix~\ref{apx:example}.

\myparab{Flow constraints. } For now, we suppose the first node to be activated in the optimal solution is known to be $v_1$ (so that $x_{1,1} = 1$); this assumption is removed in Appendix~\ref{apx:flowIP}. For any $i$ and $t \geq \theta(v_i)$, we define the $(i, t)$-flow as the multiple-sink flow problem over the flow network $\mathcal H$, where  the source is $X_{1,1}$ and the sinks are nodes to the right of threshold line, namely $\{ X_{i, \theta(v_i)}$, $X_{i, \theta(v_i) + 1}$, ..., $X_{i, t}\}$. The demand for the sink $X_{i,t}$ is $x_{i,t}$.
Our flow constraints require that every $(i, t)$-flow problem (for all $i$ and all $t \geq \theta(v_i)$) has a solution.
In Appendix~\ref{apx:flowIP} we show that how to implement these flow constraints using the maximum-flow-minimum-cut theorem and a separation oracle. Appendix~\ref{apx:flowIP} presents the implementation of the augmented IP in  Figure~\ref{fig:ipflow}, as well as the proof of the following:

\begin{lemma}\label{lem:flowip} The augmented IP for the technology diffusion problem is such that
\begin{itemize}
\item when $T(v_1) = 1$ in the optimal connected activation sequence, this IP returns the same set of feasible solutions as the simple IP of Figure~\ref{fig:ip1}.
\item the fractional solution for the corresponding relaxed LP satisfies all the $(i, t)$-flow constraints.
\end{itemize}
\end{lemma}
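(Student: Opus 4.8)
The plan is to establish the two bullets separately. For the first bullet, I would show that the augmented IP and the simple IP of Figure~\ref{fig:ip1} have the same set of feasible integral solutions, given that the optimal connected activation sequence has $T(v_1)=1$ (equivalently $x_{1,1}=1$). One direction is immediate: the augmented IP only \emph{adds} constraints to the simple IP, so every feasible solution of the augmented IP is feasible for the simple IP. For the converse, fix any integral $\{x_{i,t}\}$ feasible for the simple IP with $x_{1,1}=1$; this encodes a connected activation sequence $T$. I need to exhibit, for each $i$ and each $t\ge\theta(v_i)$, a feasible $(i,t)$-flow in $\mathcal H$. Here is where I would use connectivity: since $T$ is a connected activation sequence starting at $v_1$, at the moment node $v_i$ activates (time $T(v_i)$) there is a path $v_1=u_0,u_1,\dots,u_k=v_i$ in $G$ all of whose nodes are active no later than $v_i$, and in fact I can choose such a path inside the "activation tree" so that $T(u_0)<T(u_1)<\cdots<T(u_k)=T(v_i)$. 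Pushing one unit of flow along $X_{u_0,T(u_0)}\to X_{u_1,T(u_1)}\to\cdots\to X_{u_k,T(v_i)}$ uses only vertices of capacity $1$ (each $x_{u_j,T(u_j)}=1$) and only edges in $E(\mathcal H)$ (consecutive times strictly increase and consecutive nodes are adjacent in $G$), so it is a valid flow for the single sink $X_{i,t}$ when $t=T(v_i)$. For a general $t\ge\theta(v_i)$, the only sink with positive demand among $\{X_{i,\theta(v_i)},\dots,X_{i,t}\}$ is the one with $x_{i,\cdot}=1$, and $T(v_i)\ge\theta(v_i)$ because $v_i$ is a non-seed (if $T(v_i)<\theta(v_i)$ then $v_i$ is a seed and there is no sink, so the flow is trivial), hence that single unit of demand sits at some $X_{i,T(v_i)}$ with $\theta(v_i)\le T(v_i)\le t$ and is satisfied by the path flow just constructed. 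Thus all $(i,t)$-flow constraints hold, and the two IPs have the same feasible integral solutions.

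For the second bullet, I would argue that the LP relaxation of the augmented IP still admits a feasible fractional point — concretely, I would take the LP-optimal (or any feasible) fractional solution of the \emph{simple} LP and show it satisfies the flow constraints, OR, more cleanly, show that the flow constraints as written are implied by the connectivity constraints of Figure~\ref{fig:ip1} once one works out what the max-flow/min-cut condition says. The natural route: by max-flow--min-cut, the $(i,t)$-flow problem is feasible iff every cut separating the source $X_{1,1}$ from the sink set has capacity at least the total demand $\sum_{\tau=\theta(v_i)}^{t} x_{i,\tau}$ (node capacities, edges uncapacitated). I would then verify this cut condition inductively in $t$ using the fractional connectivity constraint $\sum_{\{v_i,v_{i'}\}\in E}\sum_{t'<t}x_{i',t'}\ge x_{i,t}$, which says exactly that the fractional "mass" reaching $X_{i,t}$ is supported by mass on adjacent nodes at earlier times — i.e., mass can be routed forward. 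Summing these constraints along with the permutation constraint $\sum_i x_{i,t}=1$ and $x_{1,1}=1$ should give that no small cut can isolate the sinks. This is essentially a statement that the "flow polytope" description and the "reachability" description coincide for this layered DAG, so I would phrase it as: the fractional solution of the relaxed simple LP routes $\sum_{t'\le t} x_{i,t'}$ units into the prefix $\{X_{i,\theta(v_i)},\dots,X_{i,t}\}$ along $\mathcal H$, which a fortiori satisfies the weaker demand $\sum_{\tau=\theta(v_i)}^{t}x_{i,\tau}$.

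The main obstacle I anticipate is the second bullet, specifically proving the cut condition for the \emph{fractional} LP. The connectivity constraints are "local" (each bounds $x_{i,t}$ by a sum over neighbors at strictly earlier times), whereas the flow feasibility is a "global" statement about all cuts in $\mathcal H$ simultaneously, so I would need a careful inductive/flow-decomposition argument: order timesteps $t=1,2,\dots,n$ and maintain the invariant that there is a feasible flow delivering, for every $i$, a total of $\sum_{t'\le t}x_{i,t'}$ units distributed among $\{X_{i,t'}:t'\le t\}$, using only vertices up to layer $t$; the inductive step consumes the constraint $\sum_{\{v_i,v_{i'}\}\in E}\sum_{t'<t+1}x_{i',t'}\ge x_{i,t+1}$ to extend the flow into layer $t+1$. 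I would also have to handle the source normalization $x_{1,1}=1$ (total mass in layer $1$ is $1$, all at $X_{1,1}$) as the base case, and note the subtlety, flagged in the excerpt, that this flow feasibility is precisely what rules out the "recirculation of mass" pathology — so the argument must genuinely use that edges only go \emph{forward} in time, which is what prevents mass from returning to its origin. The remaining bookkeeping (the separation-oracle / explicit constraint description deferred to Appendix~\ref{apx:flowIP}) is routine once the cut characterization is in hand.
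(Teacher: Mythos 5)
Your treatment of the first bullet is essentially the paper's own argument: the only nontrivial direction is that an integral connected activation sequence with $x_{1,1}=1$ satisfies every $(i,t)$-flow constraint, and both you and the paper obtain this from a path $v_1,u_1,\dots,v_i$ in $G$ whose activation times strictly increase (the paper proves its existence by induction on $t$; your ``activation tree'' view produces the same path), along which one unit of flow is pushed through unit-capacity vertices of $\mathcal H$, with the cases $T(v_i)<\theta(v_i)$ and $T(v_i)>t$ handled by noting the demand is zero.

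The second bullet is where your proposal has a genuine gap. You read it as a claim about the relaxation of the \emph{simple} LP of Figure~\ref{fig:ip1} and plan to show that the flow (equivalently, cut) conditions are implied by the connectivity constraints. That claim is false, and it is precisely the point of Section~\ref{sec:badexample}: in the pathological example the permutation and connectivity constraints are satisfied by recirculating mass, yet the $(B,4)$-flow problem has demand $0.2$ while the only capacity reachable from the source $X_{A,1}$ is $0.1$. More strongly, the simple LP has an $\Omega(n)$ integrality gap (Appendix~\ref{sec:gap1}), which could not happen if its fractional solutions already satisfied the flow constraints. Your proposed induction breaks exactly at the step where you ``consume'' the constraint $\sum_{\{v_i,v_{i'}\}\in E}\sum_{t'<t+1}x_{i',t'}\ge x_{i,t+1}$: the neighbor mass appearing on the left need not be routable from $X_{1,1}$ under the vertex capacities of $\mathcal H$ (it may itself have been generated by circulation involving $v_i$), so it cannot be used to extend a source-rooted flow; the forward-in-time orientation of $E(\mathcal H)$ does not rule this out. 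What the lemma actually asserts concerns the relaxation of the \emph{augmented} IP of Figure~\ref{fig:ipflow}: there the flow requirements are written as min-cut inequalities over $\mathcal H$ (with each $X_{i,t}$ split into $X^+_{i,t},X^-_{i,t}$ and an aggregate sink $\sink_{i,t}$), so any feasible fractional solution gives every source--sink cut capacity at least the demand, and the max-flow--min-cut theorem (valid for fractional capacities) immediately yields a feasible $(i,t)$-flow for every $i$ and $t\ge\theta(v_i)$. No derivation from the simple LP is needed, nor is one possible.
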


\myparab{Eliminating the integrality gap. }
The flow constraints eliminate the pathological example above, and therefore also the $\Omega(n)$ integrality gap. To see why, notice that the $(B, 4)$-flow problem has total demand 0.2 (\ie $x_{B,2}=0.1$ and $x_{B,4}=0.1$) but there is no way to supply this demand from $X_{A,1}$.

\subsubsection{Why coupled sampling works.
\ifnum\conference=1\\\fi }\label{sec:whyGetSeqWorks}

\ifnum\conference=1$\;$\vspace{-3mm}\fi

In addition to improving the robustness of our IP, the flow constraints also have the following pleasant interpretation that we use in the design our rounding algorithm: if there is a flow $f \in [0, 1]$ from a seed node to a non-seed node $u$ at time $t$, then node $u$ has probability $f$ of activating at time $t$.

\myparab{On connected seedsets. } To ensure that all network flows originate at seed nodes, Approach 3 requires $\proc{Get-Seq}$ to return an activation sequence $T$ where all seed nodes activate before the non-seed nodes.  If we couple this with the requirement that $T$ is connected (so we can use the trick of deciding that node $v$ is a seed if $T(v)<\theta(v)$), it follows that we require a \emph{connected seedset} $S$ (\ie the nodes in $S$ induce a connected subgraph of $G$). Approach 3 achieves this by using $\proc{Glue}$ to connect the nodes it samples into its seedset $S$, and then deterministically generates $T$ using $\proc{Get-Seq}$ as specified 
\ifnum\conference=1
in Figure~\ref{fig:getseq},
\else
below,
\fi
and illustrated in Appendix~\ref{apx:example}.

\ifnum \conference=1
\begin{figure*}
\fi
\begin{codebox}
\Procname{$\proc{Get-Seq}(\mathcal H, S)$}
\li Initialize by flagging each $X_{u,t} \in \mathcal H$ as ``inactive'' by setting $b_{u,t}\gets 0$.
\li $\forall\;u \in S$, $b_{u,t} \gets 1$ for all $t$. $\qquad$\slash\slash\;``Activate'' all $X_{u,t}$ for all $u$ in the seedset. 
\li \For $t \gets 1$ to $n$
\li \Do $\forall \; u$ s.t. $\theta(u) \geq t$:
\li \If
$\left(\exists v, \tau: ((X_{v,\tau}, X_{i,t}) \in E(\mathcal H)) \wedge (b_{v,\tau} = 1) \right)$
\li  \;\;$b_{u,t'} \gets 1$ for $t' \geq t$ $\qquad$\slash\slash\;``Activate'' each $X_{u,t'}$ to the right of timestep $t$. 
\End
\li Obtain $T$ by taking $T(u)\gets \min \left\{t: b_{u,t}=1\right\}$ for every $u \in V$.
\li \Return $T$.
\end{codebox}
\ifnum \conference=1
\vspace{-7mm}
\caption{The $\proc{Get-Seq}$ procedure for constructing the function $T(\cdot)$.}\label{fig:getseq}
\vspace{-3mm}
\end{figure*}
\fi

\myparab{Intuition behind the proof of Proposition~\ref{prop:flow}.} Given the probabilistic interpretation of flows, consider what happens if two disjoint flows $f_1$ and $f_2$ originate from different seeds and arrive simultaneously at node $u$ at time $t$. The total flow at node $u$ at time $t$ is then $f_1 + f_2$. What does this merge of two disjoint flows mean in our probabilistic interpretation?
It turns out that the natural interpretation is already pretty sensible: with probability $f_1$, the technology is diffused via the first flow, and with probability $f_2$ the technology is diffused via the second flow. Now, the probability that the technology is diffused to $u$ via either of these two flows is $1-(1-f_1)(1-f_2)$.  When $f_1,f_2$ are both small, this probability becomes $\approx f_1+ f_2$, so that the total flow can be used to determine node $u$'s activation probability.
%
On the other hand, when $f_1$ or $f_2$ is large, we are fairly confident that $u$ should activate prior to time $t$, and so we can simply decide that $T(u) \leq t$ without incurring a large increase in the size of the seedset. Given that the total demand in the $(u,t)$ flow problem is $\sum_{\theta(u)\leq\tau\leq t} x_{u,\tau}$, it follows that the probability that $u$ is a nonseed and is activated by time $t$ is roughly proportional to this demand. Also, notice that $u$ itself is chosen as a seed with probability $\sum_{\tau<\theta(u)} x_{u,\tau}$ so by combining these events in the appropriate way, we get that $\Pr[T(u) \leq t] \propto \sum_{\tau<t} x_{u,\tau}$ as required by the third item in
Proposition~\ref{prop:flow}.

To formalize this intuition, we first describe how \proc{Get-Seq} in Figure~\ref{fig:getseq} works (see also Appendix~\ref{apx:example}):

\myparab{Get-Seq.}  \proc{Get-Seq} deterministically constructs the activation function $T(\cdot)$ from a seedset $S$ and flow network $\mathcal H$ by first activating all seeds $u\in S$ at timestep $t=1$.  \proc{Get-Seq} then iterates over each timestep $t$, and activating every nonseed node $u \notin S$ where (a) time $t$ is after its threshold, \ie $t\geq \theta(u)$, and (b) there is an edge in $\mathcal H$ to $X_{u,t}$ from  some other $X_{v,t'}$ such that node $v$ is active at time $t'<t$.  Observe that the iterative nature of this procedure, along with the structure of $\mathcal H$ and the fact that the seedset $S$ is connected,  implies that there is also a path in $\mathcal H$ from $X_{1,1}$ to $X_{u,t}$ consisting of ``active'' vertices $X_{v,\tau}$, \ie vertices in $\mathcal H$ such that $T(v) \leq \tau$.

We next introduce a few definitions.  First, for each pair $u$ and $t$ (where $t \geq \theta(u)$), let an arbitrary (but fixed) solution $\mathcal F_{u,t}$ for the $(u,t)$-flow problem be the \emph{representative flow} for the $(u,t)$-flow problem. To help us understand how disjoint flows merge, we use the following notion:

\begin{definition}[Border nodes] Consider the $(u, t)$-flow problem on the flow graph $\mathcal H$ and the corresponding representative flow $\mathcal F_{u,t}$. Let us decompose the flow into paths (in an arbitrary but consistent manner) $\mathcal P_1$, $\mathcal P_2$, ..., $\mathcal P_q$. Consider an arbitrary path $\mathcal P_k$ and let $X_{j,\tau}$ be the \emph{last} node on $\mathcal P_k$ that is to the left of the threshold line. Define $X_{j,\tau} = \border(\mathcal P_k)$.
\begin{itemize}
\item The border nodes for the $(u,t)$-flow problem on flow graph $\mathcal H$ are
$\beta(u, t) \triangleq \{\border(\mathcal P_1), ..., \border(\mathcal P_q)\}$.
\item The border nodes for the $(u,t)$-flow problem on $G$ are $B(u,t) \triangleq  \{v_j: \exists \tau \mbox{ s.t. } X_{j,\tau} \in \beta(u,t)\}$.
\end{itemize}
For notational convenience, when $t < \theta(u)$, we let $\beta(u,t) = B(u,t) = \emptyset$.
\end{definition}

\noindent An expository example of $G$, $\mathcal H$ and their border nodes is
in Appendix~\ref{apx:example}.
Border nodes are useful because $\proc{Get-Seq}$ ensures any nonseed node $u$ activates at time $t>\theta(u)$ whenever a border node in $B(u,t)$ is in the seedset $S$.  Letting $p_j$ be the probability that node $v_j$ is placed in the $S$ in a single run of Approach 3, and defining the \emph{seed weight} of node $v_j$ as $\omega_j \triangleq \sum_{t < \theta(v_j)}x_{j,t}$ so that $p_j=\min\{1,\alpha \omega_j\}$ (recall that $\alpha$ is our sampling bias in Approach 3), it follows that $\Pr[T(u)\leq t]$ is related to $\sum_{v_j\in B(u,t)}\omega_j$.  The following lemma therefore allows us to relate $\Pr[T(u)\leq t]$ to the demand in the $(u,t)$-flow problem $\sum_{\theta(u) \leq \tau \leq t}x_{u,\tau}$, which is the main task of the proof of Proposition~\ref{prop:flow}:

\begin{lemma}[Border node lemma]\label{lem:border}
$\sum_{v_j \in B(u,t)}\omega_j \geq \sum_{\theta(u) \leq \tau \leq t}x_{u,\tau}$
for any $u\in V$ and $t\geq\theta(u)$.
\end{lemma}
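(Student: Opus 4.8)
\textbf{Proof proposal for the Border Node Lemma (Lemma~\ref{lem:border}).}

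The plan is to exploit the max-flow--min-cut structure of the $(u,t)$-flow problem. Recall that the $(u,t)$-flow problem has source $X_{1,1}$, sinks $\{X_{u,\theta(u)},\dots,X_{u,t}\}$ with demands $x_{u,\tau}$, and (by Lemma~\ref{lem:flowip}) the fractional LP solution supplies a feasible flow $\mathcal F_{u,t}$ meeting all these demands; hence the total value routed is exactly $\sum_{\theta(u)\leq\tau\leq t}x_{u,\tau}$, the right-hand side of the claim. First I would fix the representative flow $\mathcal F_{u,t}$ and its chosen path decomposition $\mathcal P_1,\dots,\mathcal P_q$, and observe that each path $\mathcal P_k$ carries some amount $f_k\geq 0$ with $\sum_k f_k=\sum_{\theta(u)\leq\tau\leq t}x_{u,\tau}$. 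The key structural fact is that every path $\mathcal P_k$ must \emph{cross the threshold line}: it starts at $X_{1,1}$, which lies to the left of the threshold line (since $\theta(v_1)\geq 2$, so $t=1<\theta(v_1)$), and ends at a sink $X_{u,\tau}$ with $\tau\geq\theta(u)$, which is to the right of (or on) the threshold line. Therefore $\border(\mathcal P_k)=X_{j(k),\tau(k)}$ is well-defined as the last left-of-threshold node on $\mathcal P_k$, and all the flow $f_k$ on $\mathcal P_k$ passes through this border node.

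Next I would bound, for each fixed border node $X_{j,\tau}\in\beta(u,t)$, the total flow on all paths whose border node is $X_{j,\tau}$. Since $X_{j,\tau}$ has capacity $x_{j,\tau}$ in the flow network $\mathcal H$, and all such paths route their flow through the single vertex $X_{j,\tau}$, the sum of $f_k$ over paths $\mathcal P_k$ with $\border(\mathcal P_k)=X_{j,\tau}$ is at most $x_{j,\tau}$. Summing over all border nodes associated with a fixed graph-node $v_j$ (i.e.\ over all $\tau<\theta(v_j)$ with $X_{j,\tau}\in\beta(u,t)$), the total flow attributed to $v_j$ is at most $\sum_{\tau<\theta(v_j)}x_{j,\tau}=\omega_j$. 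Finally, summing over all $v_j\in B(u,t)$ and using that every path's flow is accounted for by exactly one border node (hence by exactly one $v_j\in B(u,t)$), we get
\[
\sum_{v_j\in B(u,t)}\omega_j \;\geq\; \sum_{v_j\in B(u,t)}\ \sum_{k:\,v_{j(k)}=v_j} f_k \;=\; \sum_{k=1}^{q} f_k \;=\; \sum_{\theta(u)\leq\tau\leq t}x_{u,\tau},
\]
which is exactly the claim. The case $t<\theta(u)$ is trivial since both sides are $0$ by the convention $B(u,t)=\emptyset$ and the empty sum on the right.

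The main obstacle I anticipate is not the counting argument above, which is essentially bookkeeping on a path decomposition, but rather making sure the geometry of the threshold line is handled cleanly: specifically, verifying that $X_{1,1}$ is genuinely to the left of the threshold line (this uses $\theta(v_1)\geq 2$, which holds since all thresholds are in $\{2,\dots,n\}$), and that ``last node to the left of the threshold line'' is unambiguous given that paths move strictly forward in time (edges of $\mathcal H$ go from $X_{i,t}$ to $X_{i',t'}$ only when $t'>t$), so a path can cross the threshold line for node $v_j$ only in one direction and the border node is well-defined. One should also double-check that the capacity argument at $X_{j,\tau}$ is applied to the representative flow $\mathcal F_{u,t}$ (which respects node capacities by feasibility), and that the path decomposition is consistent with this flow so that no flow is double-counted. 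Once these points are pinned down, the inequality follows immediately by the chain displayed above.
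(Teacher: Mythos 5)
Your proof is correct and follows essentially the same route as the paper's: decompose the representative flow $\mathcal F_{u,t}$ into paths, note that the total flow equals the demand $\sum_{\theta(u)\leq\tau\leq t}x_{u,\tau}$, group paths by their border node and apply the node-capacity bound $x_{j,\tau}$, then sum over $\tau<\theta(v_j)$ to get $\omega_j$ for each $v_j\in B(u,t)$. Your additional remarks on why border nodes are well-defined (paths start left of the threshold line at $X_{1,1}$ and move strictly forward in time) are a harmless and correct elaboration of what the paper leaves implicit.
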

\noindent
This lemma, proved in Appendix~\ref{apx:borderLemma}, uses the fact the demand of the $(u,t)$-flow problem is upperbounded by the total capacity of the border nodes $B(u,t)$, which is in turn upperbounded by the total seed weight of the border nodes.  Armed with our border node lemma, we can move on to our main task:
\begin{proof}[Proof of Proposition~\ref{prop:flow}]
One can verify, by the construction of $\proc{GetSeq}$, that the activation function $T$ is always connected and for any $u \notin S$, $T(u) \geq \theta(u)$.  Our main objective here is to prove that $\Pr[T(u)\leq t] \propto \sum_{\tau<t}x_{i,t}$ for every pair $(u,t)$ where $u \in V$ and $t \leq n$.  More specifically, we need to show that:
\begin{equation}\label{eqn:toprove}
\begin{array}{l}
 \mbox{\textbf{Part 1.} If $\sum_{t' \leq t}x_{u,t'}\geq \frac{1}{12(1+\epsilon)}$,}
\ifnum\conference=1  \\ \quad \quad \fi
 \mbox{ then } \Pr[T(u) \leq t] \geq 1 - \frac{\epsilon}{4n^2}.\\
 \mbox{\textbf{Part 2.} If $\sum_{t' \leq t}x_{u,t'}< \frac{1}{12(1+\epsilon)}$,}
\ifnum\conference=1  \\ \quad \quad \fi
  \mbox{ then } \Pr[T(u) \leq t] \geq (1+\epsilon)(\sum_{t' \leq t}x_{u,t'}).
 \end{array}
\end{equation}

\noindent Our proof relies on the observation that $T(u) \leq t$ if at least one of the following events hold:
\begin{description}
\item[$\me_1$:] $u$ is seed (because $\proc{GetSeq}$ activates all seeds at $t=1$)
\item[$\me_2$:] $\exists$ an active border node $v_j \in B(u,t)$ in $G$.  ($\me_2$ implies there exist $\tau <t'\leq t$ such that the border node $X_{j,\tau}$ in $\mathcal H$ is active and $\proc{GetSeq}$ will activate node $X_{u,t'}$ and $u$ activates by time $t$.)
\end{description}
\smallskip

\noindent
We now use the relationship between the capacity of the border nodes and the demand of the $(u,t)$ flow problem (namely, $\sum_{\theta(u)<\tau\leq t}x_{u,\tau}$) to prove Part 2 of (\ref{eqn:toprove}). The proof of Part 1 uses similar techniques, and is deferred to Appendix~\ref{sec:missingpart1}.
Given our observation above we have:

\ifnum \conference=0
{\small
\begin{align*}
& \Pr[T(u)\leq t] \\
& \geq \Pr[ \me_1 \vee \me_2] \notag \geq 1 - \min\{\Pr[\neg\me_1], \Pr[\neg \me_2]\} \notag\\
&\geq 1- \min\Big\{\Pr[\neg \me_1], 1-2(1+\epsilon)\big(\sum_{\theta(u) \leq \tau \leq t}x_{u,\tau}\big)\Big\}
& \text{(Lemma~\ref{lem:bordertoi})}\\
&\geq 1 - \min\Big\{1-2(1+\epsilon)\sum_{\tau < \theta(u)}x_{u,\tau},1-2(1+\epsilon)(\sum_{\theta(u) \leq \tau \leq t}x_{u,\tau})\Big\}
&\text{(Since $\alpha \geq 2(1+\epsilon)$)} \\
& \geq  \max\Big\{ 2(1+\epsilon)\sum_{\tau < \theta(u)}x_{u,\tau}, 2(1+\epsilon)(\sum_{\theta(u) \leq \tau \leq t}x_{u,\tau})\Big\} \geq (1+\epsilon)\sum_{\tau \leq t}x_{u,\tau}
\end{align*}
}
\else
{\small
\begin{equation}\label{eqn:Tleqt}
 \Pr[T(u)\leq t]  \geq \Pr[ \me_1 \vee \me_2]  \geq 1 - \min\{\Pr[\neg\me_1], \Pr[\neg \me_2]\}
\end{equation}
In the forthcoming Lemma~\ref{lem:bordertoi}, we shall develop a bound on $\Pr[\neg \me_2]$. Together with the fact that $\alpha \geq 2(1+\epsilon)$, we have
\ifnum\conference=0
\begin{eqnarray*}
& & \min\{\Pr[\neg\me_1], \Pr[\neg \me_2]\} \\
& \leq &  \min\Big\{1-2(1+\epsilon)\sum_{\tau < \theta(u)}x_{u,\tau},1-2(1+\epsilon)(\sum_{\theta(u) \leq \tau \leq t}x_{u,\tau})\Big\}.
\end{eqnarray*}
\else
{\small
\begin{eqnarray*}
& & \min\{\Pr[\neg\me_1], \Pr[\neg \me_2]\} \\
& \leq &  \min\Big\{1-2(1+\epsilon)\sum_{\tau < \theta(u)}x_{u,\tau},\\
& & \quad \quad 1-2(1+\epsilon)(\sum_{\theta(u) \leq \tau \leq t}x_{u,\tau})\Big\}.
\end{eqnarray*}
}
\fi
Therefore, we can continue the analysis for (\ref{eqn:Tleqt}) and get:
\begin{eqnarray*}
& &  \Pr[T(u)\leq t] \\
& \geq &  \max\Big\{ 2(1+\epsilon)\sum_{\tau < \theta(u)}x_{u,\tau}, 2(1+\epsilon)(\sum_{\theta(u) \leq \tau \leq t}x_{u,\tau})\Big\} \\
& \geq & (1+\epsilon)\sum_{\tau \leq t}x_{u,\tau}
\end{eqnarray*}
}
\fi

Lemma~\ref{lem:bordertoi} applies
the border node Lemma~\ref{lem:border} to relate the probability that at least one border node is in the seedset (\ie $\Pr[\neg\me_2]$) with the demand of the $(u,t)$-flow problem (\ie $\sum_{\theta(u)\leq \tau\leq t}x_{u,\tau}$). Specifically:

\begin{lemma}\label{lem:bordertoi}  For every $u \in V$ and $t\in[n]$ where
$\sum_{\tau \leq t}x_{u,\tau}\leq \frac 1{12(1+\epsilon)}$ we have
\vspace{-2mm}$$\Pr[\me_2]=1-\prod_{v_j\in B(u,t)}(1-p_j) \geq 2(1+\epsilon)\sum_{\theta(u)\leq \tau\leq t}x_{u,\tau}\vspace{-2mm}$$
\end{lemma}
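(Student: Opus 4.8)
The plan is to deduce the bound from the Border node Lemma~\ref{lem:border} together with the inequality $1-x\le e^{-x}$ and a crude lower bound on $1-e^{-x}$; essentially all the combinatorial content has already been extracted by Lemma~\ref{lem:border}, so what is left is a short case analysis.

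First I would establish the stated equality: since Approach~3 places each node $v_j$ into the sampled seedset independently with probability $p_j=\min\{1,\alpha\omega_j\}$ (and $\proc{Glue}$ only adds nodes), the events ``$v_j\in S$'' for $v_j\in B(u,t)$ are independent, so $\Pr[\neg\me_2]=\prod_{v_j\in B(u,t)}(1-p_j)$ and hence $\Pr[\me_2]=1-\prod_{v_j\in B(u,t)}(1-p_j)$. If $t<\theta(u)$ there is nothing to prove (both sides vanish by the convention $B(u,t)=\emptyset$), so assume $t\ge\theta(u)$ and set $D\triangleq\sum_{\theta(u)\le\tau\le t}x_{u,\tau}$, the demand of the $(u,t)$-flow problem; note $D\le\sum_{\tau\le t}x_{u,\tau}\le\frac1{12(1+\epsilon)}$ by hypothesis.

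Next I would split into cases. If some border node $v_j\in B(u,t)$ has $p_j=1$, then $\prod_{v_j\in B(u,t)}(1-p_j)=0$, so $\Pr[\me_2]=1\ge 2(1+\epsilon)D$ because $2(1+\epsilon)D\le\frac16<1$. Otherwise every $p_j=\alpha\omega_j<1$, and using $1-x\le e^{-x}$ followed by Lemma~\ref{lem:border},
\[
\prod_{v_j\in B(u,t)}(1-\alpha\omega_j)\le\exp\Bigl(-\alpha\!\!\sum_{v_j\in B(u,t)}\!\!\omega_j\Bigr)\le e^{-\alpha D},
\]
so $\Pr[\me_2]\ge 1-e^{-\alpha D}$, and it suffices to check $1-e^{-\alpha D}\ge 2(1+\epsilon)D$. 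I would do this by a final split on $\alpha D$: if $\alpha D\ge 1$ then $1-e^{-\alpha D}\ge 1-e^{-1}>\frac16\ge 2(1+\epsilon)D$, using $D\le\frac1{12(1+\epsilon)}$; if $\alpha D<1$ then the elementary bound $1-e^{-x}\ge x-\tfrac{x^2}{2}\ge\tfrac{x}{2}$ on $[0,1]$ gives $1-e^{-\alpha D}\ge\alpha D/2\ge 2(1+\epsilon)D$, since $\alpha=24(1+\epsilon)\ln(4n^2/\epsilon)\ge 4(1+\epsilon)$.

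I do not expect a genuine obstacle here: the real work of converting the $(u,t)$-flow demand into a lower bound on the total seed weight of the border nodes is done by Lemma~\ref{lem:border}, and the rest is bookkeeping with $1-x\le e^{-x}$ and $1-e^{-x}\ge x/2$ on $[0,1]$. The only points that need a little care are separating out the ``$p_j=1$ for some $j$'' case (so the product-to-exponential step is legitimate) and tracking that the numeric slack in the hypothesis $\sum_{\tau\le t}x_{u,\tau}\le\frac1{12(1+\epsilon)}$ and in the size of $\alpha$ suffices for the target constant $2(1+\epsilon)$.
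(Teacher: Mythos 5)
Your proposal is correct, and it runs on the same fuel as the paper's proof -- the border node Lemma~\ref{lem:border} plus $1-x\le e^{-x}$ -- but the bookkeeping is organized differently. The paper handles the difficulty that $\sum_{v_j\in B(u,t)}p_j$ may greatly exceed the demand $D=\sum_{\theta(u)\le\tau\le t}x_{u,\tau}$ by an existential truncation step: after discarding the trivial case where some $p_j=1$, it picks a sequence $p'_j\le p_j$ with $\sum_j p'_j$ equal to exactly $4(1+\epsilon)D\le\tfrac13$, and then invokes a first-order product bound (Lemma~\ref{lem:approx}: $\prod(1-x_i)\le 1-\tfrac12\sum x_i$ when $\sum x_i\le 1$), which is itself proved via $1-x\le e^{-x}$ and $e^{-s}\le 1-s+\tfrac{s^2}{2}$. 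You avoid the truncation entirely: you pass to $\prod(1-p_j)\le e^{-\alpha\sum\omega_j}\le e^{-\alpha D}$ (legitimate once the $p_j=1$ case is set aside, since then $p_j=\alpha\omega_j$), and then split on $\alpha D\gtrless 1$, using $1-e^{-1}>\tfrac16\ge 2(1+\epsilon)D$ in the large case and $1-e^{-x}\ge x/2$ on $[0,1]$ together with $\alpha\ge 4(1+\epsilon)$ in the small case. The two routes are of comparable length; yours exploits the monotonicity of the exponential bound in $\sum\omega_j$ so that a large border weight ``only helps'' without having to construct an auxiliary sequence, while the paper isolates the first-order approximation as a reusable standalone lemma. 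Your numeric checks ($2(1+\epsilon)D\le\tfrac16$, $e^{-x}\le 1-x+\tfrac{x^2}{2}$ on $[0,1]$, $\alpha\ge 4(1+\epsilon)$) all go through, and your justification of the equality $\Pr[\me_2]=1-\prod_{v_j\in B(u,t)}(1-p_j)$ via independence of the initial sampling (with \proc{Glue} only adding nodes) is consistent with how the paper uses the lemma, which only needs the lower-bound direction.
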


\begin{proof}[Sketch of proof]
The idea here is to use a first order approximation of the polynomial $\prod_{v_j \in B(i,t)}(1-p_j)$. For the purpose of exposition, let us assume
at the moment that $\alpha$ is negligibly small and
\vspace{-2mm}
$$\sum_{v_j \in B(u,t)}p_j \approx \sum_{v_j \in B(u,t)}\omega_j \approx \sum_{\theta(u)\leq t' \leq t}x_{u,t'} < \frac{1}{12(1+\epsilon)}
\ifnum \conference=0
\ll 1 \vspace{-2mm}
\fi
$$
where second approximation uses the border node Lemma~\ref{lem:border}. We can then approximate the polynomial $\prod_{v_j \in B(i,t)}(1-p_j)$ by its first order terms, \ie. $\prod_{v_j \in B(u,t)}(1-p_j)\approx 1 - \Theta(\sum_{v_j \in B(u,t)}p_j) \approx 1-\Theta(\sum_{\theta(u)\leq t' \leq t}x_{u,t'} )$, which would complete the proof.

The problem with this argument is that Lemma~\ref{lem:border} only guarantees that $\sum_{v_j \in B(u,t)}p_j > \sum_{\theta(u)\leq t' \leq t}x_{u,t'}$. When $\sum_{v_j \in B(u,t)}p_j$ is substantially larger than the total demand of the $(u,t)$-flow problem, (\eg larger than 1), the first-order approximation becomes inaccurate.  Fortunately, however, we observe that when each individual $p_j$ grows, $\prod_{v_j \in B(u,t)}(1-p_j)$ \emph{decreases}, resulting in even \emph{better} bounds. Roughly speaking, this means when the first-order approximation fails, we are facing an ``easier'' case.   Thus, our strategy will be to \emph{reduce} the case where $\sum_{v_j \in B(u,t)}p_j$ is large, to the case where this quantity is small enough to admit a first-order approximation.
Our implementation of this idea is in Appendix~\ref{apx:bordertoi}.
\ifnum\conference=0
\qed
\fi
\end{proof}
\ifnum\conference=0
\renewcommand{\qedsymbol}{}
\fi
\end{proof}
\ifnum \conference=0
\vspace{-20pt}
\fi

\myparab{Asymptotic optimality of our rounding algorithm. }  We pay a factor of $\ell$ in our rounding algorithm, because we merge $\ell$ different $\{S, T\}$ samples to make sure all the thresholds  are good. But is this really necessary? 
\ifnum \conference=0
In Appendix~\ref{sec:gap2}
\else
In the full paper
\fi
, we show that our rounding algorithm is asymptotically optimal in $\ell$, by presenting an $\Omega(\ell)$ integrality gap for the LP of Figure~\ref{fig:ipflow}. Our problem instance is composed of $\ell$ individual gadgets, where the nodes in gadget $i$ have thresholds chosen from a carefully constructed constant-size set.  We can force these gadgets to be ``independent'', in that sense that if a single $\{S, T\}$ sample causes one of the thresholds in gadget $i$ to be good, we know that whp no threshold in any other gadget can be good.  It follows that merging $\ell$ different $\{S, T\}$ samples, each ensuring that a single threshold is good, is inevitable.

\myparab{Improvement to the approximation ratio.}
Observe that in our rounding procedure we require all $T_j$'s in each of the sampled pairs $\{S_j, T_j\}$ to be feasible (\ie all nodes have to be active at the end of $T_j$). This requirement is not necessary because the merged $T$ will be feasible even if only \emph{one} of the $T_j$ is feasible. We remark here that this observation can be exploited to improve the algorithm so that it returns a feasible seedset of size $\alpha\cdot\opt + \beta$, where $\alpha = O(r(\log n + \ell))$ and $\beta = O(r \ell \log n)$.

\section{Experiments with the IP of Figure~\ref{fig:ip1}}\label{sec:exp}

Given the prevalence of heuristics like ``choose the high degree nodes'' in the literature on technology diffusion in communication networks (\eg \cite{adoptability,JenYannis,adopt}), we sanity-check our approach against several heuristics. Our goal in the following is to \emph{give evidence} that we can find solutions that are substantially different from known heuristics, and to suggest that our IP could be a promising starting point for the design of new heuristics.

We considered problem instances where (a) $G(V,E)$ is 200-node preferential attachment graph with node outdegree randomly chosen from $\{1,2,3,4\}$ \cite{barabassi}, and (b) thresholds $\theta$ randomly chosen from \\$\{\max\{2,c\},2c,3c,..., \lceil \frac{200}{c}\rceil \cdot c\}$.
We ran four groups of experiments with threshold step-length parameter $c$ fixed to $1$, $5$, $10$, and $20$ respectively. For each group, we used a fresh random preferential attachment graph, and repeated the experiment five times with a fresh random instance of the threshold functions.
We solved each of these 20 problem instances using the simple IP formulation presented in Figure~\ref{fig:ip1} (with the extra restriction that the highest degree node must be part of the seedset) and the Gurobi IP solver. We compared the result against  five natural heuristics that iteratively pick a node $u$ with property $X$ from the set of inactive nodes, add $u$ to the seedset $S'$, activate $u$, let $u$ activate as many nodes as possible, and repeats until all nodes are active. We instantiate property $X$ as:\\
(a) \emph{degree:} highest degree,\\
(b) \emph{degree-threshold}: highest (degree)$\times$(threshold),\\
(c) \emph{betweenness:} highest betweenness centrality, \\
(d) \emph{degree discounted:} highest degree in the subgraph induced by the inactive nodes \cite{degDiscount},\\
(e) \emph{degree connected:} highest degree and connected to the active nodes.

\begin{table*}\begin{center}
\hspace{-.5cm}
\ra{1.3}
\begin{footnotesize}
\begin{tabular}{@{}rrrrcrrrcrrr@{}}\toprule
threshold step length: & \multicolumn{2}{c}{ $c= 1$} & & \multicolumn{2}{c}{$c= 5$} &
 & \multicolumn{2}{c}{$c= 10$}& & \multicolumn{2}{c}{$c= 20$}\\
 \cmidrule{2-3} \cmidrule{5-6} \cmidrule{8-9} \cmidrule{11-12}
& Size & Jaccard && Size & Jaccard  && Size & Jaccard && Size & Jaccard\\ \midrule
degree & 11.8 & 0.42  && 20.9 & 0.36  && 24.45 & 0.38 && 41.75 & 0.46 \\
degree-threshold & 8.95 & 0.41  && 15.40 & 0.42  && 19.00 & 0.44 && 33.25 & 0.55 \\
betweenness  & 10.50 & 0.45  && 19.65 & 0.39  && 24.2 & 0.38 && 40.85 & 0.47 \\
degree discounted & 11.2 & 0.39  && 21.55 & 0.34  && 25.35 & 0.36 && 41.60 & 0.45 \\
degree connected & 12.9 & 0.35  && 22.65 & 0.29  && 25.90 & 0.33 && 43.25 & 0.44 \\
ip\_solver & 6.45 & 1  && 11.15 & 1  && 13.75 & 1 && 23.45 & 1 \\
\midrule
degree overlap && 0.44 &  && 0.39 & && 0.37 & && 0.39 \\
betweenness overlap && 0.47 &  && 0.39 & && 0.37 & && 0.40 \\
\bottomrule
\end{tabular}
\end{footnotesize}
\caption{Comparison of the IP of Figure~\ref{fig:ip1} to several heuristics.}\vspace{-3mm}
\label{tab:exp}\end{center}
\end{table*}

For each group, Table~\ref{tab:exp} presents the average seedset size and the average Jaccard index  $\tfrac{|S\cap S'|}{|S \cup S'|}$ between IP seedset $S$ and the heuristic seedset $S'$.  We also compute the fraction of nodes in $S$ that are also part of the top-$|S|$ nodes in terms of (a) degree (the row denoted ``degree overlap''), and (b)  betweenness centrality (``betweenness overlap''). The results of  Table~\ref{tab:exp} do indeed give evidence that our IP can return seedsets that are substantially different (and often better), than the seedsets found via heuristics.

\bigskip\bigskip\bigskip
\section*{Acknowledgements.}
We thank Nadia Heninger, Nicole Immorlica, Prasad Raghavendra, Jennifer Rexford and Santosh Vempala for discussions about earlier incarnations of this model, Michael Mitzenmacher, Michael Schapira and the anonymous reviewers for comments on this draft, and Boaz Barak, Phillipa Gill, David Karger and David Kempe for helpful suggestions.

\newpage
\begin{small}
\bibliographystyle{abbrv}
\bibliography{techDiff}
\end{small}

\appendix

\section{Optimal connected activation sequences provide a 2-approximation}\label{apx:connectSeq}
This section proves Lemma~\ref{lem:connectSequence}. Recall that a connected activation sequence $T$ is such that the set of active nodes at any timestep $t$ induces a connected subgraph of $G$, while a connected seedset is such that all nodes in $S$ induce a connected subgraph of $G$.  Notice that requiring the activation sequence $T$ to be connected is \emph{weaker} than requiring a connected seedset $S$: since $T$ allows a seed to activate \emph{after} a non-seed, the connectivity of $T$ can be preserved by non-seeds whose activation time occurs \emph{between} the activation times of the seed nodes.

We now show that the smallest seedset that gives rise to a feasible connected activation sequence is at most twice the size of the optimal seedset $\opt$.

\begin{proof}[Proof of Lemma~\ref{lem:connectSequence}]
Given an optimal activation sequence $T_\opt$ and seedset $\opt$, we shall transform it into a connected activation sequence $T$. Along the way, we add nodes to the seedset in manner that increases its size by a factor of at most 2.

\myparab{Notation. } Let $G_i(T)$ be the subgraph induced by the first $i$ active nodes in  $T$.  We say a node $u$ is a \emph{connector} in an activation sequence $T$ if the activation of $u$ in $T$ connects two or more disjoint connected components in $G_{T(u) - 1}(T)$ into a single component.

\smallskip
\myparab{Creating a connected activation sequence. } Notice that an activation sequence $T(\cdot)$ is connected if and only if there exists no \emph{connector} in the sequence. Thus, it suffices to iteratively ``remove'' connectors from $T$ until no more connectors remain.

To do this, we initialize our iterative procedure by setting $T\gets T_\opt$. Each step of our procedure then finds the earliest connector $u$ to activate in $T$, adds $u$ to the seedset, and applies the following two transformations (sequentially):

\myparab{  Transformation 1: } First, we transform $T$ so that every component in $G_{T(u)}(T)$ is directly connected to $u$. Let $D(u)$ be the subsequence of $T$ such that every node in
$D(u)$ both activates before $u$, and is part of a component in $G_{T(u)}(T)$ that is \emph{not} connected to $u$.  Transform $T$ so the subsequence $D(u)$ appears immediately \emph{after} node $u$ activates. (This does not harm the feasibility of $T$, because the nodes in $D(u)$ are disconnected from the other nodes in $G_{T(u)}(T)$ that activate before $u$.)

\myparab{  Transformation 2: } Next, we transform the activation sequence so that it is connected up to time $T(u)$. To see how this works, assume that there are only two connected components $C_1$ and $C_2$ in $G_{T(u) - 1}(T)$, where $|C_1| \geq |C_2|$.  Our transformation is as follows:
\begin{enumerate}
\item First, activate the nodes in $C_1$ as in $T(\cdot)$.
\item Then, activate $u$.  (This does not harm feasibility because we added $u$ to the seedset.  Connectivity is ensured because $u$ is directly connected to $C_1$.)
\item Finally, have all the nodes in $C_2$ activate immediately after $u$; the ordering of the activations of the nodes in $C_2$ may be arbitrary as long as it preserves connectivity. (This does not harm feasibility because (a) seed nodes may activate at any time, and (b) any non-seed $v \in C_2$ must have threshold $\theta(v) \leq |C_2| \leq |C_1|$ and our transformation ensures that at least $|C_1|+1$ nodes are active before any node in $C_2$ activates.)
\end{enumerate}
We can easily generalize this transformation to the case where $k$ components are connected by $u$ by letting $|C_1| \geq |C_2| \geq ... \geq |C_k|$ and activate $C_1$, $u$, and the rest of the components sequentially.
At this point, the transformed activation sequence is feasible and connected up to time $t=1+|C_1|+|C_2|+...+ |C_k|$.

\myparab{Seedset growth. } It remains to bound the growth of the seedset due to our iterative procedure. We do this in three steps. First, we observe that the number of extra nodes we added to the seedset is bounded by the number of steps in our iterative procedure.  Next, we iteratively apply the following claim (proved later) to argue that the number of steps in our iterative procedure is upper bounded by number of connectors in the optimal activation sequence, $T_{\opt}$:

\begin{claim}\label{clm:noNewConnectors} Let $T_j$ be the activation sequence at the start of $j^{th}$ step.  The number of connectors in $T_{j+1}$ is less than the number of connectors in $T_{j}$.
\end{claim}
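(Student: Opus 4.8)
The plan is to track \emph{where} connectors can live after one step of the procedure, exploiting the fact that a single step only rearranges a prefix of the activation sequence. Fix the $j$-th step, let $u$ be the earliest connector in $T_j$, and put $m=T_j(u)$. Write $C_1,\dots,C_k$ for the components of $G_{m-1}(T_j)$ adjacent to $u$ (so $k\ge 2$, since $u$ is a connector) and $C_{k+1},\dots,C_p$ for the remaining ones, so $D(u)=C_{k+1}\cup\cdots\cup C_p$. First I would observe that both Transformation~1 and Transformation~2 only permute the first $m$ positions of the sequence: those positions hold exactly the set $A=\{u\}\cup C_1\cup\cdots\cup C_p$, which has $1+(m-1)=m$ elements, and positions $m+1,\dots,n$ are left untouched. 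Since whether a node is a connector depends only on the \emph{set} of its predecessors (which determines the relevant induced subgraph), every node at position $>m$ has the same connector status in $T_j$ and $T_{j+1}$; and because $u$ is the \emph{earliest} connector in $T_j$, every node of $A\setminus\{u\}$ is a non-connector in $T_j$. Hence it suffices to prove that \emph{no} node of $A$ is a connector in $T_{j+1}$: the count then drops by exactly one.

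Next I would walk down the reordered prefix $C_1,u,C_2,\dots,C_k,D(u)$ of $T_{j+1}$, using two structural facts: distinct $C_a$'s have no $G$-edges between them (they are different components of $G_{m-1}(T_j)$), and $u$ is adjacent to $C_1,\dots,C_k$ but not to $C_{k+1},\dots,C_p$. For $w\in C_1$: in $T_{j+1}$ the active set just before $w$ is exactly the $C_1$-nodes preceding $w$ in the same relative order as in $T_j$; in $T_j$ it is that same set together with nodes from $C_2\cup\cdots\cup C_p$, none adjacent to $w$. So $w$ merges precisely the same collection of components in both sequences, and since $w$ was not a connector in $T_j$ it is not one in $T_{j+1}$. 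For $w\in D(u)$ the identical argument works, because the extra active predecessors in $T_{j+1}$ (namely $u$ and $C_1\cup\cdots\cup C_k$) are also all non-adjacent to $w$'s component. The node $u$ is activated right after all of $C_1$, a single connected component adjacent to $u$, so it merely joins one component and is not a connector. Finally, for $w$ in some $C_i$ with $2\le i\le k$, I would invoke the freedom in Transformation~2 to process each $C_i$ in a \emph{connectivity-preserving} order (which exists: $C_i$ is connected and contains a neighbour of $u$, so traverse it from that neighbour); then the active set stays connected throughout this phase, and a node added to an already-connected active set can never be a connector.

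The main obstacle is precisely this last case. The nodes of $C_2,\dots,C_k$ move from \emph{before} $u$ in $T_j$ to \emph{after} $u$ in $T_{j+1}$, so their predecessor sets genuinely change, and if one naively kept the $T_j$-internal order a fresh connector could appear — e.g.\ a node adjacent both to $u$ and to a not-yet-attached piece of its own $C_i$ would suddenly bridge the growing ``blob'' to that piece. This is exactly why Transformation~2 is stated with the clause ``as long as it preserves connectivity''; choosing such an order is what kills the new-connector possibility. Assembling the four cases gives $|\{\text{connectors of }T_{j+1}\}|=|\{\text{connectors of }T_j\}|-1$, which proves the claim.
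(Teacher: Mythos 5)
Your proof is correct and follows essentially the same route as the paper's: both exploit that the transformations only rearrange the prefix occupied by $u$ and the components it merges (so connector status in the untouched suffix is unchanged, since it depends only on the set of predecessors), and then rule out connectors in the rearranged prefix using the absence of edges between distinct components together with the connectivity-preserving ordering allowed in Transformation~2. Your four-case analysis is a bit more explicit than the paper's argument via the intermediate sequence $T^*$ — in particular, you handle the nodes of $C_1$ by comparing which components they are adjacent to rather than asserting connectivity of the $C_1$-prefix — but the underlying ideas coincide.
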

Thus, it suffices to bound the number of connectors in $T_\opt$.  Our third and final step is to show that the number of connectors in $T_\opt$ is bounded by $|\opt|$. To do this, we introduce a potential function $\Phi(t)$ that counts the number of disjoint connected components in $G_{T_\opt(t)}(T)$, and argue the following:
\begin{itemize}
\item For every connector $u$ that activates at time $t$ in $T_\opt$ and joins two or more  components, there is a corresponding decrement in $\Phi$, \ie $\Phi(t)\leq \Phi(t-1)-1$.

\item Next, we have that $\Phi(1)=\Phi(|V|) = 1$, since at the first timestep, there is only one active node, and at the last timestep all the nodes in the graph are active and form a single giant component.  Thus, for every unit decrement in $\Phi$ at some time $t$, there is a corresponding unit increment in $\Phi$ at some other time $t'$.

\item Finally, for any unit increment in $\Phi$, \ie $\Phi(t')=\Phi(t'-1)+1$, it follows that a new connected component appears in $G_{T_\opt(t')}(T)$. This implies that a new seed activates at time $t'$.  Thus, it follows that the number of unit decrements of $\Phi$ is upperbounded by the size of the seedset $|\opt|$.
\end{itemize}
Thus, we may conclude that the number of connectors added to the seedset in our iterative procedure is upperbounded by the number of connectors in $T_\opt$ which is upperbounded by the size of the optimal seedset $\opt$, and the lemma follows.
\end{proof}

The correctness of Claim~\ref{clm:noNewConnectors} is fairly intuitive, given that our transformations always preserve the ordering of the nodes that are not in the components joined by node $u$.  For completeness, we include the proof 
\ifnum\conference=1
in the full version.
\else
here.

\begin{proof}[Proof of Claim~\ref{clm:noNewConnectors}]
We make use of the following observation:

\myparab{Observation 1:} If two activation sequences $T$ and $T'$ have a common suffix, \ie $T = T'$ for timesteps $\tau,\tau+1,...,|V|$, then $T$ and $T'$ contain the same number of connectors after time $\tau-1$.

Let $t=T_j(u)$, where $u$ is the earliest connector in $T_j$. By construction, no connectors exist in $T_j$ prior to time $t$.  Furthermore, we can use Observation 1 to argue that $T_j$ and $T_{j+1}$ contain the same number of connectors after time $t$. Thus, it suffices to show that Transformations 1 and 2 in the $j^{th}$ step of our iterative procedure do not introduce new connectors that activate in prior to time $t$.

Let $T^*$ be the activation sequence after Transformation 1 in the $j^{th}$ step of our iterative procedure, and let $t'=T^*(u)$. We can see that (1) no new connectors activate before time $t'$  in $T^*$ (since, before $t'$ our construction ensures that $T^*$ consists only of active components that are joined by $u$) and (2) no new connectors activate between time $t'+1$ and $t$ inclusive (since (a) $u$ was chosen as the earliest connector in $T_j$, and (b) Transformation 1 preserves the order of the nodes that activate between time $t'+1$ and $t$ inclusive in $T^*$).

Finally, we conclude by arguing that Transformation 2 cannot introduce new connectors by (1) applying Observation 1 to the nodes after $t'$ and (2) observing that after Transformation 2, the nodes that activate before $t'$ create a single connected component, and thus by definition cannot contain any connectors.
\end{proof}

\fi

\section{The augmented integer program (proof of Lemma~\ref{lem:flowip})}\label{apx:flowIP}

We prove Lemma~\ref{lem:flowip} in three parts. First, we show that if we add following two constraints to the IP in Figure~\ref{fig:ip1}:  (a) $x_{1,1} = 1$ and (b) that $(i, t)$-flow problems have feasible solutions for all $i$ and $t \geq \theta(v_i)$, then the resulting IP returns the subset of solutions of the original IP where $T(v_1) = 1$. We also remark on how to remove the assumption that $T(v_1) = 1$ in the optimal $T$.
Second, we show how to encode the flow constraints as an IP.
Finally, we mention why the corresponding relaxed LP is efficiently solvable.

\myparab{Part 1. Any connected activation sequence satisfies the flow constraints}  It suffices to show that for any connected activation sequence, its corresponding integral variables
$\{x_{i,t}\}_{i,t\leq n}$ satisfy the $(i,t)$-flow constraints for all $i$ and $t \geq \theta(v_i)$.

In what follows, we both use $\{x_{i,t}\}_{i,t\leq n}$ and $T(\cdot)$ to represent the activation sequence. Let $\{x_{i,t}\}_{i,t\leq n}$ be a connected activation
sequence. Let us consider an arbitrary $(i, t)$-flow. Let $\tau$ be the time step such that $x_{i,\tau} = 1$.  Recall that the demand in an $(i,t)$-flow problem is $\sum_{\theta(u_i) \leq \tau \leq t}x_{i,\tau}$.
Therefore, when $\tau > t$ or $\tau < \theta(v_i)$, the demand is $0$ and we are done.  We only need to consider the case where $\theta(v_i) \leq \tau \leq t$.  
We claim that when $\{x_{i,t}\}_{i,t\leq n}$ is a connected activation sequence, for any $t$ and $v_k \triangleq T^{-1}(t)$, there exists a path
$v_1v_{i_1}v_{i_2}...v_{i_j}v_k$ such that
$$T(v_1) < T(v_{i_1}) < ... < T(v_{i_{j-1}}) < T(v_{i_{j}}) < T(v_k) = t.$$
This can be seen by induction on $t$. For the base case, $t = 2$ and the path is $v_1v_{k}$. For the induction step, suppose the claim holds for every time step up to $t - 1$. We show that it also holds when $v_{k}$ activates at the $t$-st time step.
Since $\{x_{i,t}\}_{i,t\leq n}$ is connected, there exists a $v_{k'}$ such that there is an edge $\{v_{k'}, v_{k}\} \in E$ and $T(v_{k'}) < T(v_{k})$. By the induction hypothesis, there must be a path $v_1...v_{k'}$ that connects $v_1$ and $v_{k'}$, where the activation time of each node on the path increases monotonically. Thus, the path we seek is $v_1v_{i_1}...v_{k'}v_k$, which completes the proof of the induction step.

We conclude the proof by using the claim we proved by induction.  Namely, there is a path from $v_1$ to  the node $v_{k}$ activating at time $t$.  It follows that we can we push a unit of flow along the path induced in the flow graph $\mathcal H$, namely $X_{1,1}$, $X_{i_1, T(v_{i_1})}$,
 ..., $X_{i_{j}, T(v_{i_j})}$, $X_{k,t}$, so we must have a feasible solution to the $(i,t)$-flow problem.

\mypara{Turning on $v_1$.} We remark that while we have been assuming that $v_1$ is known to activate at $t=1$ in the optimal solution, we can ensure this assumption holds by polynomial-time ``guessing''; run the IP $O(|V|)$ times, relabeling a different node in the graph as $v_1$ in each run, and use the run that returns the smallest seedset.  

\myparab{Part 2. Implementation of the flow constraints.} The $(i,t)$-flow constraints
are enforced via the max-flow-min-cut theorem, \ie by using the fact that the minimum cut between the source and the sinks is the same as the maximum flow. Thus, to ensure every $(i,t)$-flow problem has
a feasible solution, we require the capacity for all the cuts between the source and the sinks to be larger than the demand. The actual implementation is quite straightforward, but we present the details of the IP for completeness:
\begin{itemize}
\item The capacity constraints we have are over the nodes in $\mathcal H$. We use standard techniques to deal with this: we replace each node $X_{i,t}$ in $\mathcal H$ with two nodes $X^+_{i,t}$ and $X^-_{i,t}$ connected by a directed edge of capacity $x_{i,t}$.

\item There are multiple sinks in a $(i, t)$-flow problem. To deal with this, for every $i$ and $t \geq \theta(v_i)$, we introduce a new node $\sink_{i,t}$ to $\mathcal H$ that is connected to every sink $X_{i,\theta(v_i)}$, $X_{i,\theta(v_i) + 1}$, ..., $X_{i,t}$ that sinks \emph{all} the flow in the $(i,t)$-flow problem.
\end{itemize}

\noindent Our implementation is presented in Figure~\ref{fig:ipflow}. Let $S$ and $\overline S$ be two arbitrary partition of the nodes in $\mathcal H$. We let $\delta(S, \overline S)$ be the cut of the partition, \ie the set of edges whose end points are in different subsets of the partition. Also, we let $c(e)$ be the capacity of the edge $e$, \ie $c(\{X^+_{i,t}, X^-_{i,t}\}) = x_{i,t}$ and $c(e) = \infty$ for all other edges.

\ifnum\conference=0
\begin{figure}
\else
\begin{figure*}
\fi
\vspace{3.5cm}
\centering
\begin{pspicture}(0,-3.5)
\psframebox{
$\begin{array}{llll}
 \min  & \sum_{i \leq n}\sum_{t < \theta_i}x_{i,t} &\\
\mbox{subject to:} \\
 \forall i, t: & x_{i, t} \in \{0, 1\} \\
\forall i & \sum_{t \leq n}x_{i, t}  =   1& \mbox{(permut'n constraints)}\\
\forall t &  \sum_{i \leq n}x_{i, t}   =   1& \mbox{(permut'n constraints)}\\
\forall t>1,i: & \sum_{\{v_{i}, v_{i'} \in E\}}\sum_{t' < t}x_{i',t'}  \geq  x_{i,t} & \mbox{(connectivity constraints)}
\\
&  x_{1, 1}   =   1& \mbox{(make $X_{1,1}$ the source)}\\
 {\footnotesize \forall i, t\geq\theta(u_i) \;\forall \mbox{ partitions of $V(\mathcal H)$ } } & &\\
  {\footnotesize  S, \overline{S}, \mbox{s.t.$ X^+_{1,1} \in S$, $\sink_{i,t} \in \overline S$}} & \sum_{e \in \delta(S, \overline S)}c(e) \geq \sum_{\theta(u_i) \leq t' \leq t}x_{i, t'} & \mbox{(flow constraints)}.
\end{array}$}
\end{pspicture}
\vspace{-15mm}
\caption{Integer program for solving the technology diffusion problem.}
\label{fig:ipflow}
\ifnum\conference=1
\vspace{-3mm}
\end{figure*}
\else
\end{figure}
\fi

\myparab{Part 3. The relaxed linear program is efficiently solvable.} Our relaxed LP contains an exponential number of constraints (namely, the flow constraints). Nevertheless, we can use the ellipsoid method to find an optimal solution in polynomial time using a separation oracle~\cite{WS10} that validates if each of the $(i, t)$-flow problems over $\mathcal H$ have solutions, and if not, returns a min-cut constraint that is violated. This oracle can be constructed using algorithms in, \eg \cite{HO92}.

\section{Missing proofs for Section~\ref{sec:roundingAlgo}}\label{apx:missing}

\subsection{Proof of Lemma~\ref{lem:sample} (Success of a single trial)}\label{apx:proofLemSample}

Recall that $A_t$ is number of active nodes by time $t$ (inclusive). We have \Snote{This equation below was buggy before, I think I fixed it}
\ifnum \conference=0
\begin{equation}\label{eqn:nbound}
\Pr[A_n < n] = \Pr[\exists v: T(v) > n] \leq \sum_{v \in V}\Pr[T(v) > n] \leq \frac{n\epsilon}{4n^2} = \frac{\epsilon}{4n}.
\end{equation}
\else
\begin{eqnarray*}
\Pr[A_n < n] &=& \Pr[\exists v: T(v) > n]\\
& \leq & \sum_{v \in V}\Pr[T(v) > n] \\
&\leq & \frac{n\epsilon}{4n^2} \\
&= &\frac{\epsilon}{4n}.
\end{eqnarray*}
\fi
The last inequality holds because of Proposition~\ref{prop:flow}.  It suffices to show that $\Pr[A_t \geq t] \geq \frac{3\epsilon}{4n}$ since $\Pr[A_n = n \wedge A_t \geq t] \geq \Pr[A_t \geq t] - \Pr[A_n < n]$.

Let us partition $V$ into heavy nodes $H$, and light nodes $L$.  We put $v \in H$ when $\sum_{\tau \leq t}x_{v,\tau} \geq \frac{1}{12(1+\epsilon)}$, and $v \in L$ otherwise.  Let's consider two cases, based on the ``weight'' of the light nodes $\rho_t$:
\begin{equation}\label{eq:rhoT}
\rho_t = \sum_{v \in L}\sum_{\tau \leq t}x_{v,\tau}
\end{equation}

\myparab{Case 1. $\rho_t < 1$} \emph{(The light nodes are very light).} Recalling that the permutation constraints of our LP impose that $\sum_{v\in V} \sum_{\tau<t} x_{v,\tau} = t$, it follows that

\ifnum \conference=0
$$t-1 < t - \rho_t = \sum_{v\in V} \sum_{\tau < t} x_{v,\tau} -\sum_{v \in L}\sum_{\tau \leq t}x_{v,\tau} = \sum_{v \in H}\sum_{\tau \leq t}x_{v,\tau}   \leq t$$
\else
\begin{eqnarray*}
& &t-1 \\
&<& t - \rho_t\\
& = &\sum_{v\in V} \sum_{\tau < t} x_{v,\tau} -\sum_{v \in L}\sum_{\tau \leq t}x_{v,\tau} \\
&= &\sum_{v \in H}\sum_{\tau \leq t}x_{v,\tau}\\
& \leq & t.
\end{eqnarray*}
\fi
Using the first and last inequalities and taking the ceiling, we get that
$|H| \geq \left\lceil \sum_{v \in H}\sum_{\tau \leq t}x_{v,\tau} \right\rceil = t.$
Since $|H| \geq t$, if every node in $H$ activates before time $t$ we know that $A_t \geq t$. We write
\ifnum \conference=0
\begin{equation}\label{eqn:Atgt}
\Pr[A_t \geq t] \geq \Pr[ T(v) \leq t, \forall v \in H]\geq 1 - \sum_{v \in H}\Pr[T(v) > t] \geq 1 - \frac{\epsilon}{4n},
\end{equation}
\else
\begin{eqnarray*}
\Pr[A_t \geq t] & \geq & \Pr[ T(v) \leq t, \forall v \in H]\\
&\geq& 1 - \sum_{v \in H}\Pr[T(v) > t] \\
&\geq& 1 - \frac{\epsilon}{4n},
\end{eqnarray*}
\fi
where the last inequality
\ifnum \conference=0
 in (\ref{eqn:Atgt}) 
\fi
holds because of Proposition~\ref{prop:flow}.

\myparab{Case 2. $\rho_t \geq 1$}\emph{(The light nodes are not very light).}  We start by defining two events.
\begin{description}
\item{$\me_1$} is the event that all the heavy nodes are active by time $t$, \ie $T(v) \leq t \; \forall v \in H$.
\item{$\me_2$} is the event that at least $\rho_t$ light nodes are on by time $t$, \ie $|\{ v \in H \wedge T(v) \leq t \}| > \rho_t$. 
\end{description}
When both $\me_1$ and $\me_2$ occur, we have
$$A_t \geq |H| + \rho_t \geq \sum_{v\in H} \sum_{\tau \leq t} x_{v,\tau} +\sum_{v\in L} \sum_{\tau \leq t} x_{v,\tau} = t$$
where both the second inequality and the last equality use the permutation constraints of the LP.
\Snote{I just added this following sentence in, Zhenming can you verify it?}
It follows that $\Pr[A_t > t] \geq \Pr[\me_1 \wedge \me_2] \geq \Pr[\me_2] - \Pr[\neg \me_1]$.  We now bound each event individually.

Let's start by bounding $\Pr[\me_2]$.
Letting $I(\cdot)$ be an indicator variable that sets to 1 iff the parameter is true, we have that
\begin{equation}\label{eq:babab}
\begin{array}{lll}
\E[\sum_{v \in L}I(T(v) \leq t)] &= &\sum_{v \in L}\Pr[T(v) \leq t]\\
&\geq& \sum_{v \in L}\left((1+\epsilon)\sum_{t'\leq t}x_{v,t'}\right) \\
&= &(1+\epsilon)\rho_t
\end{array}
\end{equation}
where the inequality uses Proposition~\ref{prop:flow} as usual. Meanwhile, using the law of total probability we get
\begin{equation}\label{eq:rarar}
\E[\sum_{v \in L}I(T(v) \leq t)] \leq \Pr[\me_2]n + \Pr[\neg \me_2]\rho_t \leq \Pr[\me_2]n + \rho_t
\end{equation}
Combining (\ref{eq:babab})-(\ref{eq:rarar}) we find that $\Pr[\me_2] \geq \frac{\epsilon \rho_t} n \geq \frac{\epsilon}{n}$.
Next, we bound $\Pr[\me_1]$ by observing that
$$\Pr[\neg \me_1] \leq \sum_{v \in H}\Pr[T(v) > t] \leq \tfrac{\epsilon}{4n}$$
using Proposition~\ref{prop:flow} for the last inequality again.
Finally, we combine both bounds to conclude that $\Pr[A_t > t] \geq \Pr[\me_1 \wedge \me_2] \geq \Pr[\me_2] - \Pr[\neg \me_1] \geq \frac{3\epsilon}{4n}$ as required.

\subsection{Proof of Lemma~\ref{lem:border} (Border node lemma)}\label{apx:borderLemma}
Let us decompose the representative flow $\mathcal F_{i,t}$ into paths (in an arbitrary but consistent manner) $\mathcal P_1$, $\mathcal P_2$, ..., $\mathcal P_q$, and let $f_k$ be the volume of the flow on path $P_k$.
\ifnum \conference=0
\begin{align*}
\sum_{\theta(v_i)\leq \tau \leq t} x_{i,\tau}
& = \sum_{k } f_k & \text{ (the demand in the $(i,t)$ flow problem is satisfied)}\\
& = \sum_{ X_{j,\tau}  \in \beta(i,t)} \; \sum_{\border(P_k) = X_{j,\tau} } f_k &\text{(multiple $\border(P_k)$ can map to a single $X_{j,\tau}$)}\\
& \leq \sum_{ X_{j,\tau}  \in \beta(i,t)} x_{j,\tau} &\text{(bounding capacity of $X_{j,\tau}$)}\\
& = \sum_{ v_j \in B(i,t)}\;\sum_{\tau \text{ s.t. } X_{j,\tau} \in \beta(i,j)} x_{j,\tau} &\text{(translating from $\mathcal H$ to $G$)}\\
& \leq \sum_{ v_j \in B(i,t)}\;\sum_{\tau \leq \theta(v_j)} x_{j,\tau}
&\left(\tau \text{ s.t. } X_{j,\tau} \in \beta(i,j) \Rightarrow \tau \leq \theta(v_j)\right)\\
& = \sum_{ v_j \in B(i,t)} w_j
&\text{(definition of $w_j$)}
\end{align*}
\else
\begin{align*}
\sum_{\theta(v_i)\leq \tau \leq t} x_{i,\tau}
& = \sum_{k } f_k \\
&\text{\footnotesize (the demand in the $(i,t)$ flow problem is satisfied)}\\
& = \sum_{ X_{j,\tau}  \in \beta(i,t)} \; \sum_{\border(P_k) = X_{j,\tau} } f_k \\
&\text{\footnotesize(multiple $\border(P_k)$ can map to a single $X_{j,\tau}$)}\\
& \leq \sum_{ X_{j,\tau}  \in \beta(i,t)} x_{j,\tau} \\
&\text{\footnotesize(bounding capacity of $X_{j,\tau}$)}\\
& = \sum_{ v_j \in B(i,t)}\;\sum_{\tau \text{ s.t. } X_{j,\tau} \in \beta(i,j)} x_{j,\tau}\\
&\text{\footnotesize(translating from $\mathcal H$ to $G$)}\\
& \leq \sum_{ v_j \in B(i,t)}\;\sum_{\tau \leq \theta(v_j)} x_{j,\tau}\\
&{\footnotesize\left(\tau \text{ s.t. } X_{j,\tau} \in \beta(i,j) \Rightarrow \tau \leq \theta(v_j)\right)}\\
& = \sum_{ v_j \in B(i,t)} w_j\\
&\text{\footnotesize(definition of $w_j$)}
\end{align*}
\fi
Notice that the last four lines give the total seed weight of the border nodes as an upper bound on their total capacity.

\subsection{Proof of Lemma~\ref{lem:bordertoi}}\label{apx:bordertoi}

\begin{proof}[Proof of Lemma~\ref{lem:bordertoi}]
We shall find a non-negative sequence $p'_j$ ($v_j \in B(u,t)$) such that
\begin{itemize}
\item Condition 1: $\prod_{v_j \in B(u,t)}(1-p_j) \leq \prod_{v_j \in B(u,t)}(1-p'_j)$
\item Condition 2: $\sum_{j \in B(u,t)}p'_j = 4(1+\epsilon)\sum_{\theta(u) \leq \tau \leq t}x_{u,\tau}$.
\end{itemize}
When both conditions hold, we can bound $\prod_{v_j \in B(u,t)}(1-p_j)$ by $\prod_{v_j \in B(u,t)}(1-p'_j)$, which
can then be approximated by its first-order terms. We use existential arguments to find the sequences $p'_j$ (for each $v_j \in B(u,t)$):
We start by recalling that $p_j=\min\{1,\alpha\omega_j\}$ and $\alpha > 4(1+\epsilon)$. It follows that when $\omega_j \geq \frac{1}{4(1+\epsilon)}$ for some $v_j \in B(u,t)$, the $p_j = 1$ and the lemma trivially holds. Thus, we may assume that $4(1+\epsilon)\omega_j \leq 1$ for all $v_j \in B(u,t)$, and we can write
{\small
$$\sum_{v_j \in B(u,t)}p_j \geq 4(1+\epsilon)\sum_{v_j \in B(u,t)}\omega_j \geq 4(1+\epsilon)\sum_{\theta(u) \leq \tau\leq t}x_{u,\tau}.$$}
where the second inequality uses Lemma~\ref{lem:border}.

We now know that there exists a sequence $p'_j$ such that $p_j \geq p'_j$ and $\sum_{j \in B(u,t)}p'_j = 4(1+\epsilon)\sum_{\theta(u) \leq \tau \leq t}x_{u,\tau}$, which meets Condition 1 and Condition 2.
It follows that
$\prod_{v_j \in B(u,t)}(1-p_j) \leq  \prod_{v_j \in B(u,t)}(1-p'_j)$, and
we may complete the proof with the following first-order approximation:

\begin{lemma}[First order approximation]\label{lem:approx}Let $x_1, x_2, ..., x_k$ be real positive values such that 
$\sum_{i \leq k}x_i \leq 1$. Then
$$\prod_{i \leq k}(1-x_i) \leq 1 - \frac 1 2\left(\sum_{i \leq k}x_i\right).$$
\end{lemma}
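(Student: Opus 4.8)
The final statement to prove is Lemma~\ref{lem:approx}: for positive reals $x_1,\dots,x_k$ with $\sum_{i\le k} x_i \le 1$, we have $\prod_{i\le k}(1-x_i) \le 1 - \tfrac12\sum_{i\le k}x_i$.

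\textbf{Approach.} The plan is to prove this by induction on $k$, which is the cleanest route since the product structure telescopes naturally. Let $s = \sum_{i\le k} x_i \le 1$. For the base case $k=1$, the claim reads $1-x_1 \le 1 - \tfrac12 x_1$, which holds since $x_1 > 0$. For the inductive step, suppose the bound holds for $k-1$ terms. Write $\prod_{i\le k}(1-x_i) = (1-x_k)\prod_{i\le k-1}(1-x_i)$. Since $x_1,\dots,x_{k-1}$ are positive with $\sum_{i\le k-1} x_i \le s - x_k \le 1$, the induction hypothesis gives $\prod_{i\le k-1}(1-x_i) \le 1 - \tfrac12(s-x_k)$. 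Hence
\[
\prod_{i\le k}(1-x_i) \le (1-x_k)\Bigl(1-\tfrac12(s-x_k)\Bigr).
\]
It remains to show the right-hand side is at most $1 - \tfrac12 s$. Expanding, $(1-x_k)(1-\tfrac12(s-x_k)) = 1 - \tfrac12(s-x_k) - x_k + \tfrac12 x_k(s-x_k) = 1 - \tfrac12 s - \tfrac12 x_k + \tfrac12 x_k s - \tfrac12 x_k^2 = 1 - \tfrac12 s - \tfrac12 x_k\bigl(1 - s + x_k\bigr)$. So it suffices that $x_k(1 - s + x_k) \ge 0$. Since $x_k > 0$ and $1 - s + x_k \ge 1 - s \ge 0$ (using $s \le 1$), this holds, completing the induction.

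\textbf{Alternative / remark.} One could instead give a direct non-inductive argument: since each $1 - x_i \le e^{-x_i}$ is not tight enough here (it gives $e^{-s}$, and $e^{-s} \le 1 - s/2$ only fails near $s$ small... actually $e^{-s} \le 1-s/2$ does hold for $s\in[0,1]$ since $e^{-s} \le 1 - s + s^2/2 \le 1 - s/2$ when $s\le 1$), so that route also works but requires justifying $e^{-s}\le 1-s+s^2/2$ and then $s^2/2 \le s/2$. The induction is more self-contained, so I would present that.

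\textbf{Main obstacle.} There is essentially no hard step here — the only point requiring a little care is ensuring the induction hypothesis applies, i.e.\ that the sub-sum $\sum_{i\le k-1}x_i$ still satisfies the hypothesis $\le 1$, which follows immediately since we dropped a positive term. The sign analysis $x_k(1-s+x_k)\ge 0$ uses $s \le 1$ crucially, so I would flag that the hypothesis $\sum x_i \le 1$ is exactly what makes the final inequality go through. The constant $\tfrac12$ is not optimal but is all that is needed downstream in Lemma~\ref{lem:bordertoi}.
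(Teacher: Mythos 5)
Your proof is correct, but it takes a genuinely different route from the paper. The paper argues termwise via $1-x\le e^{-x}$, so that $\prod_{i\le k}(1-x_i)\le e^{-s}$ with $s=\sum_{i\le k}x_i$, and then uses $e^{-s}\le 1-s+\tfrac{s^2}{2}\le 1-\tfrac{s}{2}$ for $0\le s\le 1$ --- i.e.\ exactly the ``alternative'' you sketch in your remark (your hesitation there is unfounded; that chain does go through, as you eventually note). Your main argument instead proceeds by induction on $k$, absorbing one factor at a time and checking the sign of the leftover term $\tfrac12 x_k(1-s+x_k)$. The trade-off: your induction is entirely elementary and self-contained (no appeal to the Taylor/alternating-series bound $e^{-s}\le 1-s+s^2/2$), at the cost of a slightly longer bookkeeping step; the paper's version is a three-line computation once the exponential inequality is granted. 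One tiny point you should make explicit in the inductive step: multiplying the induction hypothesis by $(1-x_k)$ requires $1-x_k\ge 0$, which holds since $x_k\le s\le 1$ --- trivial, but it is where the hypothesis enters a second time. With that noted, both proofs are complete and the constant $\tfrac12$ suffices for the downstream use in Lemma~\ref{lem:bordertoi}, as you say.
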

\noindent
When we substitute the $x_i$'s in Lemma~\ref{lem:approx} with $p'_j$s, and use the fact that
$$\sum_{j \in B(u,t)}p'_j = 4(1+\epsilon)\sum_{\theta(u) \leq \tau \leq t}x_{u,\tau} \leq 4(1+\epsilon) \tfrac1{12(1+\epsilon)}=\tfrac13 < 1.$$
we complete the proof because
$\prod_{v_j \in B(u,t)}(1-p'_j) \leq 1 - \frac 1 2 \cdot 4(1+\epsilon)\sum_{\theta(v_j) \leq \tau \leq t}x_{u,\tau}$
\end{proof}

\begin{proof}[Proof of Lemma~\ref{lem:approx} (First order approximation)]
Let $x_1, x_2, ..., x_k$ be real positive values such that \\$\sum_{i \leq k}x_i \leq 1$. Notice that
for any $0 \leq x \leq 1$, we have $(1-x) \leq e^{-x}$. Let $s \triangleq \sum_{i \leq k }x_i$. We have
\ifnum \conference=0
$$\prod_{i \leq k}(1 - x_{i}) \leq \prod_{i \leq k}\exp(x_i) =  \exp(\sum_{i \leq k}x_k) = \exp(s)\leq 1 - s + \frac{s^2}2 \leq 1 - s(1-\frac 1 2) = 1 - \frac s 2.$$
\else
\begin{eqnarray*}
\prod_{i \leq k}(1 - x_{i}) & \leq & \prod_{i \leq k}\exp(x_i) \\
&=&  \exp(\sum_{i \leq k}x_k) \\
&= &\exp(s)\leq 1 - s + \frac{s^2}2 \\
&\leq& 1 - s(1-\frac 1 2) \\
&= &1 - \frac s 2.
\end{eqnarray*}
\fi
\end{proof}

\subsection{First part of Proposition~\ref{prop:flow}}\label{sec:missingpart1}
We now prove the first item in (\ref{eqn:toprove}), \ie we consider a pair $(u,t)$ such that $\sum_{\tau \leq t}x_{u,\tau} \geq \frac{1}{12(1+\epsilon)}$. Let us consider two cases.

\noindent{\emph{Case 1. $\sum_{\tau \leq \min\{\theta(u) - 1, t\}}x_{u,\tau} \geq \frac{1}{24(1+\epsilon)}$}}. In this case, $p_u = 1$ and $u$ is always selected as a seed. Thus, $\Pr[T(u) \leq t] = 1$.

\noindent{\emph{Case 2. $\sum_{\tau \leq \min\{\theta(u) - 1, t\}}x_{u,\tau} < \frac{1}{24(1+\epsilon)}$}} In this case, we can see that $\sum_{\theta(u) \leq \tau \leq t}x_{u,\tau} \geq\frac{1}{24(1+\epsilon)}$.
Therefore, we use the border node Lemma~\ref{lem:border} to get
\begin{equation}\label{eqn:btoi}
\sum_{v_j \in B(u,t)}\omega_j \geq\frac{1}{24(1+\epsilon)}.
\end{equation}
Now, recall that $\Pr[T(u)\leq t] \geq \Pr[\me_2] = 1- \prod_{v_j \in B(u,t)}(1-p_j)$.  Therefore, it suffices to prove that $\prod_{v_j \in B(u,t)}(1-p_j) \leq \frac{\epsilon}{4n^2}$. \Snote{THis last equation used to be geq, which seems like a bug so I fixed it.}

At this point, our analysis deviates from the analysis for the second part of Proposition~\ref{prop:flow}.  There, the $p_j$ values were small enough to allow $\prod_{v_j \in B(u,t)}(1-p_j)$ to be approximated using only first-order terms. Here, we are dealing with the case where $p_j$'s are large. Thus, $\prod_{v_j \in B(u,t)}(1-p_j)$ decays exponentially, and it is more appropriate to approximate it using exponential functions.
By using (\ref{eqn:btoi}) and the following approximation Lemma~\ref{lem:prod} (with $\lambda$ as $\alpha$) we can see that indeed $\prod_{v_j \in B(i,t)}(1-p_j) \leq \frac{\epsilon}{4n^2}$, which completes the proof. \Snote{THis last equation used to be geq, which seems like a bug so I fixed it.}

\begin{lemma}\label{lem:prod}Let $\epsilon$ be an arbitrary
constant. Let $x_1, ..., x_k$ be numbers between $[0, 1]$ such that
$\sum_{i \leq k}x_i= s$, where $s \geq \frac 1 {24(1+\epsilon)}$. Let  $\lambda = 24(1+\epsilon)\ln (\frac{4n^2}{\epsilon})$ and
 $p_i = \min\{\lambda x_i, 1\}$. It follows that
$$\prod_{i \leq n}(1 - \min\{\lambda x_i, 1\}) \leq \frac {\epsilon}{4n^2}.$$
\end{lemma}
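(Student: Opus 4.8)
The plan is to split the index set $[k]$ according to whether the $p_i = \min\{\lambda x_i, 1\}$ have been "clipped" to $1$ or not. Let $J = \{i : \lambda x_i \geq 1\}$ (the clipped coordinates) and $J' = [k]\setminus J$. If $J \neq \emptyset$, then the product already contains a factor $(1 - 1) = 0 \leq \frac{\epsilon}{4n^2}$, and we are immediately done. So the interesting case is $J = \emptyset$, i.e.\ $p_i = \lambda x_i$ for every $i$, and then $\sum_{i\le k} p_i = \lambda s$.

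For that case, I would use the elementary bound $1 - y \leq e^{-y}$, valid for all real $y$ (in particular for each $y = p_i \in [0,1]$), to write
\[
\prod_{i \leq k}(1 - p_i) \;\leq\; \prod_{i\le k} e^{-p_i} \;=\; \exp\!\Big(-\sum_{i\le k} p_i\Big) \;=\; e^{-\lambda s}.
\]
Now plug in $s \geq \frac{1}{24(1+\epsilon)}$ and $\lambda = 24(1+\epsilon)\ln\!\big(\tfrac{4n^2}{\epsilon}\big)$, so that $\lambda s \geq \ln\!\big(\tfrac{4n^2}{\epsilon}\big)$, hence $e^{-\lambda s} \leq e^{-\ln(4n^2/\epsilon)} = \frac{\epsilon}{4n^2}$, which is exactly the claim.

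There is essentially no hard step here; the only thing to be careful about is the clipping, i.e.\ that the monotonicity argument works even when some $\lambda x_i > 1$ — and this is handled trivially because a single clipped coordinate zeroes the product (more generally one can just note $\min\{\lambda x_i,1\} \ge \lambda x_i$ is \emph{false}, so instead one argues: if nothing is clipped, $\sum p_i = \lambda s$ and the exponential bound applies; if something is clipped, the product is $0$). So the only "obstacle", if any, is bookkeeping the two cases cleanly; the analytic content is just $1-y\le e^{-y}$ together with the choice of $\lambda$ making $\lambda s$ exceed the required logarithm.
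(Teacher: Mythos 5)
Your proof is correct and follows essentially the same route as the paper: the same case split on whether some $\lambda x_i \geq 1$ (in which case the product vanishes), and in the remaining case the same final bound $e^{-\lambda s} \leq \frac{\epsilon}{4n^2}$ from the choice of $\lambda$ and $s \geq \frac{1}{24(1+\epsilon)}$. The only (cosmetic) difference is that you obtain $\prod_{i}(1-\lambda x_i) \leq e^{-\lambda s}$ directly from $1-y \leq e^{-y}$ applied to each factor, whereas the paper first observes the product is maximized when all $x_i$ are equal and then bounds $\left(1-\tfrac{\lambda s}{k}\right)^k \leq e^{-\lambda s}$; your shortcut is slightly more direct but not a different argument.
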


\begin{proof}[Proof of Lemma~\ref{lem:prod}]
Let us consider two cases over the values
of $x_i$. In the first case, there exists some $x_i$ such that $\lambda x_i \geq 1$. For this case,
we have
$\prod_{i \leq n}(1 - \min\{\lambda x_i, 1\}) = 0 \leq \frac {\epsilon} {4n^2}.$

In the second case, where all $x_i$ are less than $1/\lambda$, the quantity
$\prod_{i \leq k}(1 - p_i) = \prod_{i \leq k}(1 - \lambda x_i)$ is maximized when $x_1 = x_2 = ... = x_k = \frac s k$.
In other words,
\ifnum \conference=0
$$
\prod_{i \leq k}(1 - \lambda x_i)
 \leq  \left(1 - \frac{\lambda s} k\right)^k\\
  = \left(1 - \frac{\lambda s}{k}\right)^{\frac{k}{\lambda s}\lambda s}\\
  \leq e^{-\lambda s} \\
 \leq  \exp(-\frac{\lambda}{24(1+\epsilon)})  =  \exp(-\ln(\frac{4n^2}{\epsilon})) =  \frac{\epsilon}{4n^2}.
$$
\else
\begin{eqnarray*}
\prod_{i \leq k}(1 - \lambda x_i)
& \leq & \left(1 - \frac{\lambda s} k\right)^k\\
 & = & \left(1 - \frac{\lambda s}{k}\right)^{\frac{k}{\lambda s}\lambda s}\\
 & \leq & e^{-\lambda s} \\
& \leq &  \exp(-\frac{\lambda}{24(1+\epsilon)})  =  \exp(-\ln(\frac{4n^2}{\epsilon})) \\
&= & \frac{\epsilon}{4n^2}.
\end{eqnarray*}
\fi
\end{proof}

\section{Proof of Corollary~\ref{cor:little}}\label{apx:little}
Let us consider an arbitrary technology diffusion problem $\Pi = \{G, \theta\}$.
Let $\Pi^+$ and $\Pi^-$ be the corresponding diffusion problems defined in
Definition~\ref{def:condition}. Recall that $\opt$ is the optimal solution
for $\Pi$, $\opt^+$ is the optimal solution for $\Pi^+$ and $\opt^-$
is the optimal solution for $\Pi^-$. Let
$\mathbf P \triangleq \{\lfloor 1 + \epsilon \rfloor, \lfloor (1 + \epsilon)^2 \rfloor
, ..., \lfloor (1 + \epsilon)^q \rfloor\}$, where $q = \log_{1+\epsilon}n+1$. We next define
a new technology diffusion instance $\Pi' = \{G, \theta'\}$
that uses the same graph and $\theta'(u)$ (for each $u$)
is the smallest number in $\mathbf P$ that is larger than $\theta(u)$. Notice
that $\theta'(u) \leq (1+\epsilon)\theta(u)$. Let $\opt'$ be the size of optimal seed set
for $\Pi'$. We can run our approximation algorithm on $\Pi'$ and get a solution, whose size is at
most $O(\log^2n \cdot r \cdot \opt')$ since the number of thresholds in $\Pi'$ is $\log n$.
Because $\theta'(u) \geq \theta(u)$ for all $u$, a feasible solution in $\Pi'$ is also
a feasible solution in $\Pi$. Thus, the seedset returned by our algorithm is feasible and
$\opt \leq \opt'$. Similarly, we can see that $\opt^- \leq \opt$ and $\opt' \leq \opt^+$.
Therefore, the seedset size can be expressed as $O(\log^2 n \cdot r\opt \frac{\opt'}{\opt})
= O(\log^2 n \cdot r\opt \frac{\opt^+}{\opt^-}) = O(\kappa(\Pi, \epsilon)(\log^2 n) \cdot r \opt)$.

\section{Lower bounds}\label{apx:hard}
\noindent This appendix presents our lower bounds, that can be summarized as follows.
\begin{enumerate}
\item \emph{Computational barrier:} the technology diffusion problem is at least as hard as a Set Cover problem, so that our problem does not admit any $o(ln |V |)$-approximation algorithm.
\item \emph{Combinatorial barrier:} in the worst case, the optimal solution with \emph{a connected seedset} could be $\Omega(r)$ times larger than the optimal solution.
\item \emph{Integrality gaps:} The simple IP (Figure~\ref{fig:ip1} discussed in Section~\ref{sec:linearize}) has an $\Omega(n)$ integrality gap.  The augmented IP (Section~\ref{sec:flowIP} and Figure~\ref{fig:ipflow} of Appendix~\ref{apx:flowIP}) has an $\Omega(\ell)$ integrality gap.
\end{enumerate}

\subsection{Computational barrier}
\Snote{This is the simpler version of this proof, with the join node removed.}

This section proves that the technology diffusion problem is
at least as hard as the set cover problem. Let us recall the definition (of the optimization version) of the set cover problem: given a finite universe $\mathcal U$ and a family $\mathbf S$ of subsets of $\mathcal U$, we are interested in finding the smallest subset $\mathbf T$ of $\mathbf S$ such that $\mathbf T$ is a cover of $\mathcal U$, i.e. $\bigcup_{T \in \mathbf T}T = \mathcal U$. The set cover
cannot be approximated within a factor of $(1-o(1))\ln n$ unless NP has quasi-polynomial time algorithm (see \cite{setcoverApx} and references therein). We have the following lemma.

\begin{lemma}\label{thm:reduction}
Given an $\alpha$-approximation algorithm for the technology diffusion problem with constant number of threshold values $\theta\geq 2$, and constant graph diameter $r\geq 3$, we can obtain an $O(\alpha)$-approximation algorithm for the set cover problem.  Moreover, the reduction holds even if the seedset in the technology diffusion problem is required to be connected.
\end{lemma}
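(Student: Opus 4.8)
I would prove the lemma by an approximation-preserving reduction from the optimization version of Set Cover. Given an instance $(\mathcal U,\mathbf S)$, build a graph $G$ as follows. Create a \emph{set vertex} $a_S$ for every $S\in\mathbf S$; create an \emph{element gadget} for every $e\in\mathcal U$ — in the cleanest version a single vertex $b_e$, but to defeat the obstacle below one takes instead a clique $E_e$ of $N$ ``cheap'' vertices with $N$ huge — and join the element gadget of $e$ to $a_S$ exactly when $e\in S$. Finally add an \emph{auxiliary gadget}: a large clique $Z$, every vertex of which is adjacent to every set vertex $a_S$; this clique doubles as a reservoir that finishes the cascade. All element-gadget vertices get the small threshold $2$ (so an element activates as soon as one covering set is active, i.e.\ as soon as it is ``covered''), while every set vertex and every vertex of $Z$ gets a large threshold $\Theta$, calibrated (see below) so it can fire only at the very end, once essentially all element gadgets are active. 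Then $G$ uses only $O(1)$ distinct thresholds, all $\geq 2$, and $\mathrm{diam}(G)=O(1)$ (set vertices lie within distance $2$ through $Z$, element gadgets within a constant distance through their sets and $Z$), and one can tune this constant to be $\geq 3$ as required. It is convenient to first normalise the Set Cover instance — add one extra element lying in every set (so that every cover induces a connected active subgraph), and regularise the set sizes — which changes $\opt$ by at most a constant factor.

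\textbf{Easy direction} ($\opt_{\mathrm{diff}}\le O(\opt_{\mathrm{SC}})$). Given a minimum cover $\mathbf T$, seed exactly the set vertices $\{a_S:S\in\mathbf T\}$ together with one vertex of $Z$; the latter seed makes the set (which is connected, $Z$ being adjacent to every $a_S$) a legal connected seedset and also ignites the reservoir. I would then trace the diffusion: every element gadget has a seeded neighbour and activates; once all element gadgets are on, the active subgraph is large and connected, so $\Theta$ is met and every remaining set vertex fires, after which all of $Z$ fires. Hence this connected seedset of size $\opt_{\mathrm{SC}}+1$ is feasible.

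\textbf{Hard direction} (a feasible seedset $S^\ast$ yields a cover of size $O(|S^\ast|)$). The crux is that a set vertex $a_S$ can activate only if it is seeded or if it sees an active component of size $\Theta$, and — by the choice of $\Theta$ and the element-gadget multiplicities — the latter cannot happen until essentially all of $a_S$'s element gadgets are already active. Running the diffusion from $S^\ast$ one obtains a ``closure'' description showing that the only elements that ever activate are those covered by the set vertices of $S^\ast$ together with element gadgets seeded directly in $S^\ast$; feasibility forces these to cover $\mathcal U$ up to an additive $O(|S^\ast|)$ slack that can be patched by adding singletons, giving a cover of size $O(|S^\ast|)$. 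Combining the two directions, an $\alpha$-approximation algorithm for the diffusion problem (with $\theta\ge 2$, $O(1)$ thresholds, $O(1)$ diameter) returns a feasible seedset of size $\le\alpha\,\opt_{\mathrm{diff}}=O(\alpha\,\opt_{\mathrm{SC}})$, hence an $O(\alpha)$-approximate set cover; and since the easy direction already produced a \emph{connected} feasible seedset, and the hard direction never used connectivity of $S^\ast$, the reduction holds verbatim when the diffusion seedset is required to be connected. The lemma follows.

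\textbf{Main obstacle.} The delicate point is exactly the claim used in the hard direction: ruling out \emph{shortcut cascades}. In the component-based model, as soon as one set vertex is seeded all its covered element gadgets turn on and form a sizeable active blob, and a single large blob can be enough to push a neighbouring set vertex over its threshold, which covers new elements, and so on — a bootstrap percolation that could activate all of $G$ from a seedset far smaller than any cover, destroying the reduction. This is precisely why the element gadgets are blown up to cliques of size $N$ and the Set Cover instance is regularised so that all sets have (roughly) size $s$: setting $\Theta\approx sN$ then lets a set vertex fire \emph{only} when all $s$ of its element cliques are active, while the still-larger threshold on $Z$ keeps the auxiliary gadget inert until the cascade is genuinely complete, so the active subgraph can never ``clump'' prematurely. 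Getting the thresholds, adjacencies, and gadget sizes to cooperate so that (i) a genuine cover triggers a complete cascade, (ii) no strict sub-cover does, (iii) only $O(1)$ threshold values and $O(1)$ diameter are used, and (iv) a connected seedset suffices — and then formalising the ``closure'' characterisation of which vertices can ever activate — is the technical heart of the proof (the full construction introduces an extra ``join'' vertex to make this book-keeping go through cleanly).
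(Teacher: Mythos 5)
There is a genuine gap, and it sits exactly where you flagged it: the ``shortcut cascade''. The concrete calibration you give ($\Theta\approx sN$ after regularising set sizes, with a universal connector element and the reservoir clique $Z$) does not block it, because in this model a node's threshold is compared against the size of the \emph{entire} adjacent active component, not against how many of its ``own'' element gadgets are active. Seed a single set vertex $a_{S_1}$: its $s$ element cliques plus the universal element's clique activate and form one connected blob of size about $sN$, and via the universal element (or any shared element) every other set vertex is adjacent to that blob, so with $\Theta\approx sN$ every set vertex fires and the whole graph cascades from one seed. So the hard direction's ``closure'' claim, as calibrated, is false; your earlier sentence (``fire only at the very end, once essentially all element gadgets are active'') is the right instinct, but the proposal never pins down thresholds and multiplicities that make it true, and that is the entire content of the lemma. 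A second, smaller issue: ``patch by adding singletons'' presupposes singleton sets exist in the Set Cover instance; the correct patch is to replace any seeded element-gadget vertex by a set vertex covering that element.

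The paper's construction resolves this with a counting trick rather than per-set calibration. Each element $e$ is replicated $m+1$ times ($m=|\mathbf T|$ the number of sets), element copies get threshold $2$, every set vertex gets the near-global threshold $(m+1)n+1$, and the set vertices are joined into a clique (which gives diameter $3$, only two threshold values, and a connected optimal seedset for free — no reservoir $Z$, no universal element, no regularisation). If some element is uncovered by the seeded sets, the active nodes number at most $(n-1)(m+1)+m<(m+1)n$, so no unseeded set vertex can ever fire, and the uncovered element's copies have only inactive neighbours, so they never fire either; the deficit of $m+1$ copies can never be made up by the at most $m$ set vertices. Conversely a cover, used as a seedset, activates all element copies and then all set vertices, and element-type seeds can always be exchanged for covering set vertices without increasing the seedset. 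This makes optimal seedset size equal to optimal cover size and the reduction approximation-preserving; your gadget framework could likely be repaired along the same lines (take the element multiplicity larger than the number of sets and give all expensive vertices the near-global threshold), but as written the key step fails.
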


Thus, we can see
that there is no $c\ln n$ approximation algorithm (for some constant $c$) for the technology diffusion problem.
%

\begin{proof}[Proof of Lemma~\ref{thm:reduction}] Let us consider an arbitrary set cover instance $(\mathcal U, \mathbf T)$, where $m = |\mathbf T|$ is the number of sets in $\mathbf T$.

\begin{figure}\begin{center}
\includegraphics[width=4in]{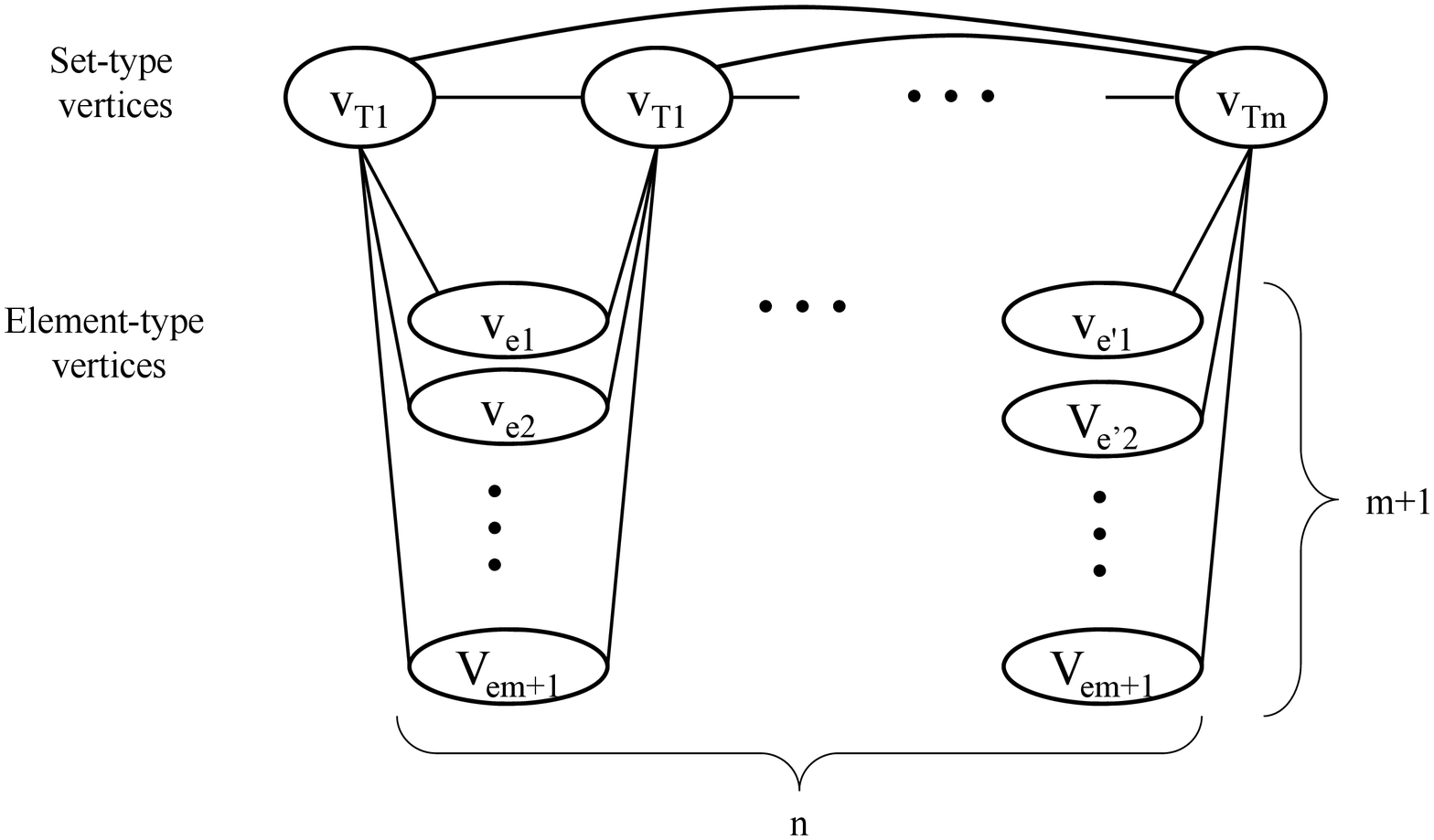}
\caption{Reduction.}\label{fig:reduction}
\end{center}
\end{figure}

\myparab{The reduction. } We construct a technology diffusion problem as described below, and illustrated in Figure~\ref{fig:reduction}:
\begin{itemize}
\item The vertex set consists of the following types of vertices:
\begin{enumerate}
\item The \emph{set type}: for each $T \in \mathbf T$, we shall construct a node $v_T$ in the technology network.
\item The \emph{element type}: for each $e \in \mathcal U$, we shall construct $m + 1$ nodes $v_{e,1}$, $v_{e,2}$, ..., $v_{e, m + 1}$.
\end{enumerate}
\item The edge set consists of the following edges:
\begin{enumerate}
\item For each $T \in \mathbf T$ and $e \in T$, we add the edges $\{v_T, v_{e, 1}\}$, $\{v_T, v_{e, 2}\}$, ..., $\{v_T, v_{e, m + 1}\}$.
\item The set type vertices are connected as a clique. (For each $T\neq T' \in \mathbf T$, we add the edge $\{u_{T}, u_{T'}\}$).
\end{enumerate}
\item The thresholds $\theta(\cdot)$ are set as follows,
\begin{enumerate}
\item For any $e \in \mathcal U$ and $i \leq m + 1$, we set $\theta(v_{e, i}) = 2$.
\item For every $T \in \mathbf T$, we set $\theta(v_{T}) = (m + 1)n + 1$.
\end{enumerate}
\end{itemize}

\myparab{Properties of the reduction. } Notice that our technology diffusion problem has only two types of threshold values. Furthermore, the diameter of the graph we form is exactly $3$ hops (in terms of edges); the maximum distance in this graph is from one $v_{e,i}$ node to another. 
Finally, we show below that the seedset must consist of set-type vertices. Since these vertices form a clique, it follows that there exists
an optimal seedset that is connected.

\myparab{Correctness. } To conclude that the size of the optimal seed set is the same as the size of the optimal cover (which also means that our reduction is approximation-preserving), we establish the following:
\begin{description}
\item[Item 1.] For any feasible cover $\mathbf S$ in the set cover problem, the corresponding seed set $\{v_{S}: S \in \mathbf S\}$ is a feasible solution for the technology diffusion problem.
\item[Item 2.]  Any feasible seedset in the technology diffusion problem \emph{that only consists of set-type vertices} corresponds to a feasible cover in the set cover problem.
\item[Item 3.] Given a feasible seedset that consists of \emph{element type} vertices, 
    there is a feasible seedset of equal or smaller size that consists only of \emph{set type} vertices.  Since the set type vertices form a clique, we have that the optimal solution for the technology diffusion problem is also a connected one.
\end{description}

\myparab{Item 1. }
To show the first item, we simply walk through the activation process: When $\mathcal S$ is a cover, let the seedset be $v_{T_i}$ for all $T_i \in \mathcal S$. Notice that this seedset is \emph{connected}. Upon activating the seedset, the vertices $u_{e, i}$ for all $e \in \mathcal U$ and $i \leq m + 1$ are activated  because they are connected to at least one active seed. Now, there are $(m+1)n$ active nodes, so the rest of the \emph{set type} vertices are activated.

\myparab{Item 2. }
To show the second item, we consider an arbitrary seedset that only consists of the \emph{set type} vertices: $U = \{v_{T_1}, v_{T_2}, ..., v_{T_k}\}$, where $T_1, ..., T_k \in \mathbf T$. We shall show that if $T_1, ..., T_k$ is not a cover, then the seed set cannot be feasible (\ie some nodes will remain inactive in the technology diffusion problem).

Let $e \in \mathcal U/\left(\cup_{j \leq k}T_j\right)$ be an element that is not covered by the sets in $\{T_1, ..., T_k\}$. Let us consider the nodes $v_{e, 1}, v_{e, 2}, ..., v_{e, m + 1}$, and node $v_{T}$ for each $T \notin \{T_1, ..., T_k\}$ in the technology diffusion problem. We claim that none of these vertices will be activated with seedset $U$.  Suppose, for the sake of contradiction, that one or more of these vertices are activated, and consider the first activated vertex among them. There are two cases:
\begin{description}
\item[Case 1.] $v_{T}$ ($T \notin \mathbf T$) is activated first. This is impossible: when $v_{e, i}$ ($i \leq m + 1$) are not activated, the number of activated nodes is at most $(n - 1)(m + 1) + m < (m + 1)n$.

\item[Case 2.] $v_{e, i}$ ($i \leq m + 1$) is activated first. This is impossible because $v_{e, i}$ is only connected with $v_{T}$, where $T \notin \{T_1, ..., T_k\}$ and none these \emph{set type} vertices are activated.
\end{description}

\myparab{Item 3. }
Finally, we move onto the third item. Let us consider a feasible seedset $F$ that does not consist of only \emph{set type} vertices. We show that we can easily remove the \emph{element type} vertices in $F$: let $v_{e, i}$ be an arbitrary vertex in $F$. Then we can remove $v_{e, i}$ from $F$ and add an $v_T$ to $F$ such that $e \in T$. This does not increase the cardinality of $F$. Furthermore, $v_{e, i}$ would still be activated, which implies that the updated $F$ is still be a feasible seed set.
\end{proof}

\subsection{Combinatorial barrier}\label{sec:diameter}
\Znote{Next iteration: change $v_i$ into $u_i$.}

\begin{lemma}\label{lem:limit}For any fixed integer $r$, there exists an instance of technology diffusion problem $\{G, \theta\}$ such that
(a) the diameter of $G$ is $\Theta(r)$, and
(b) the optimal \emph{connected} seedset is at least $\Omega(r)$ larger than the optimal seedset.
\end{lemma}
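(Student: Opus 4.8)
The plan is to construct a graph $G$ with a small ``central gadget'' that admits a cheap but \emph{disconnected} optimal seedset, and then show that any \emph{connected} seedset is forced to pay an additive $\Omega(r)$ cost just to bridge the pieces of that cheap solution. First I would build the graph as follows: take $r$ ``spokes'' $P_1,\dots,P_r$, where each $P_k$ is a path of length $\Theta(r)$ emanating from a common hub region, so that the overall diameter is $\Theta(r)$; at the far end of each spoke attach a small cluster $K_k$ of, say, $c$ nodes (a constant), all with threshold $2$, together with one ``trigger'' node $a_k$ whose threshold is also small. The key design goal is that the endpoints/clusters at the tips of the $r$ spokes can each be activated \emph{locally and independently} with $O(1)$ seeds per spoke, but the only way to connect those $r$ activated regions through the graph is to seed (or route through) the long spoke paths, which are themselves inert (their internal nodes have thresholds large enough that they only activate \emph{after} a large connected component already exists).

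The two claims to establish are then: (a) there is a feasible seedset of size $O(r)$ (a constant number of seeds in each of the $r$ tip-clusters) that is \emph{not} connected, together with a careful check that once all $r$ tips are active the accumulated count $\Theta(rc)$ exceeds the thresholds of the spoke interiors and the hub, so a full cascade completes; and (b) \emph{any} feasible seedset whose induced subgraph is connected must contain $\Omega(r)$ nodes. For (b), the argument is a counting/topology one: a connected seedset that ultimately triggers a full cascade must, at the moment it is seeded, already induce a connected subgraph of $G$; since the only $s$--$t$ paths between distinct spoke tips pass through $\Omega(r)$ interior spoke vertices, and those interior vertices cannot be ``activated into'' the seedset for free (we will choose their thresholds so that no small active component can grow along a spoke), the connected seedset must literally contain a path of length $\Omega(r)$ somewhere — hence $\Omega(r)$ seeds — to link up even two of the tip regions. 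Comparing with the $O(r)$ \emph{disconnected} optimum, and tuning constants so the disconnected optimum is actually, say, $O(r)$ while the connected optimum is $\Omega(r)$ with a strictly larger hidden constant (or, better, arranging that the disconnected optimum is $O(1)$ per tip but the tips genuinely need independent seeding while the connected version needs the full length-$\Theta(r)$ backbone), gives the claimed $\Omega(r)$ gap.

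The delicate part — and the main obstacle — is the threshold bookkeeping that makes the spoke interiors simultaneously (i) \emph{unnecessary} for the disconnected solution, because the tips activate on their own and then the union of all tips forms a component large enough that every interior and hub node's threshold is met, and (ii) \emph{unavoidable} for the connected solution, because before the tips are globally linked no single active component is large enough to creep along a spoke. Concretely I would set the interior spoke thresholds to some value strictly between (the size of one activated tip cluster) and (the combined size of all $r$ tip clusters), so that a lone tip cannot propagate inward but $r$ tips together can flood the whole graph; one must check the cascade order carefully, i.e.\ that after seeding a few nodes per tip, each tip self-activates its constant-size cluster, then — crucially — argue these activations can be interleaved so the \emph{global} active count reaches the interior threshold even though no two tips are yet connected (this is exactly where the non-local nature of our model, \`a la the activation-sequence formalism of Section~\ref{sec:linearize}, does the work for us: a node activates based on the size of the component it joins, and once the interiors start activating they chain the tips together). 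Getting these inequalities mutually consistent, and verifying that the diameter really is $\Theta(r)$ rather than accidentally larger or smaller, is the bulk of the work; everything else is routine.
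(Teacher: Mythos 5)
There is a genuine gap, and it is in the heart of your construction: the cascade argument for your cheap \emph{disconnected} seedset uses the wrong activation rule. In this model a node $u$ activates when the \emph{connected component containing $u$} (in the subgraph induced by the active nodes together with $u$) reaches size $\theta(u)$; the \emph{global} number of active nodes is irrelevant. In your spoke construction, after the $r$ tip clusters self-activate they form $r$ mutually disconnected components, and an interior spoke node adjacent to one tip only ``sees'' a component of size about $c+1$, never the combined mass $\Theta(rc)$ of all tips --- the components can only merge \emph{through} the very interior nodes you made inert. So with interior thresholds strictly above one tip-cluster size, your $O(r)$ disconnected seedset is simply infeasible, and the step where you ``interleave'' activations so the global count crosses the interior threshold does not exist in this model. (Your parenthetical that ``a node activates based on the size of the component it joins'' is the correct rule, but it contradicts the global-count argument you actually rely on.) A secondary issue: even if the construction were repaired, your primary comparison is an $O(r)$ disconnected optimum against an $\Omega(r)$ connected optimum, which is only a constant-factor gap; the lemma (as used in the paper) needs the connected optimum to be a factor $\Omega(r)$ larger, i.e.\ you must drive the unrestricted optimum down to $O(1)$, which you only gesture at.

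For contrast, the paper's proof is much more economical: take a single path $v_1,\dots,v_{2r+1}$, set $\theta(v_1)=\theta(v_{2r+1})=2$, let the thresholds grow linearly toward the middle from both ends ($\theta(v_i)=i$ on the left, mirrored on the right), and give the middle node $\theta(v_{r+1})=2r+1$. Seeding just the two endpoints is feasible, because each cascade creeps inward one node at a time, its component growing exactly in step with the thresholds, and the middle node fires last; so the unrestricted optimum is $O(1)$. Any connected seedset is an interval $\{v_i,\dots,v_j\}$, and one argues it must reach past the middle ($j\ge r+1$, since otherwise the right half can never activate and hence neither can $v_{r+1}$) while also starting far enough left to ever activate $v_{i-1}$, forcing $j-i=\Omega(r)$. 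If you want to salvage your own approach, the lesson is that the gap has to come from a \emph{single} long stretch whose thresholds are tuned so that two far-apart cheap seeds can grow toward each other, not from aggregating many disconnected active regions, which this model never lets you do.
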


\begin{proof}[Proof of Lemma~\ref{lem:limit}] Let $r > 0$ be an arbitrary integer. We define graph $G_r$ as follows (see Figure~\ref{fig:limit}):
\begin{itemize}
\item The vertex set is $\{v_1, ..., v_{2r+1}\}$.
\item The edge set is $\left\{\{v_i, v_{i +1}\}: 1 \leq i  < 2r + 1\right\}$.
\end{itemize}
The threshold function shall be defined as follows,
\begin{itemize}
\item $\theta(v_1) = \theta(v_{2r+1}) = 2$ and $\theta(v_{r+1}) = 2r+1$.
\item For $1 < i \leq r$, $\theta(v_i) = i$. \Snote{Zhenming I think there was a bug here, it used to say  $\theta(v_i) = 1$}
\item For $r + 2\leq i < 2r+1$, $\theta(v_i) = 2r + 2 - i$.
\end{itemize}
It is straightforward to see that the diameter of the graph is $2r = \Theta(r)$. It remains to verify
that the optimal connected solution is $\Theta(r)$ times larger than the optimal solution.

It's easy to see that $\{v_1, v_{2r+1}\}$ is a feasible seedset and therefore, the size of the optimal seed set is $O(1)$. We next show that any feasible connected set has size $\Omega(r)$.

Since the seedset must be connected, wlog we can assume that the seedset is $\{v_i, v_{i+1}, ..., v_j\}$ and by symmetry $i \leq r + 1$.
When $j < r + 1$, node $v_{r+1}$ will never activate (because $v_{r+1}$ has threshold $2r+1$, it only activates when all other nodes are active, but in this case all $r$ nodes to the right of $v_{r+1}$ are inactive).  It follows that a feasible seedset requires $j \geq r + 1$.

When $i = 1$, the size of the seedset is $\Theta(r)$ and the lemma follows. So, we need only consider the case where $i > 1$: symmetry allows us to assume wlog that $r+1-i \geq j - (r+1)$ \ie $\theta(v_{j + 1}) \geq \theta(v_{i - 1})$. Therefore, since we have $j-i+1$ nodes in the seedset, a necessary condition for this seedset to be feasible is thus $j - i + 1 \geq i - 2$. Using the fact that $j \geq r + 1$, we get $i \leq r/2 + 2$ and $j - i = \Omega(r)$, which completes our proof.
\end{proof}

One drawback of this construction is that $\ell = \Theta(n)$. We may modify $\theta(\cdot)$ so that $\ell = \tilde O(1)$ (thus ensuring that our lower bound depends on graph diameter $r$, rather than the number of thresholds $\ell$):
\begin{itemize}
\item When $i \leq n$, set $\theta(u_i) = \max\{2^{\lfloor \log_2 i\rfloor}, 2\}$,
\item when $i = n + 1$, set $\theta(u_i) = 2n+1$, and
\item When $i > n$, set $\theta(u_i) = \max\{2^{\lfloor \log_2 (2n+2-i)\rfloor}, 2\}$.
\end{itemize}
One can use similar arguments to show that the size of the optimal seedset is $O(1)$ while the
size of the optimal connected seedset is $\Theta(r)$.

\begin{figure}\begin{center}
\includegraphics[width=4in]{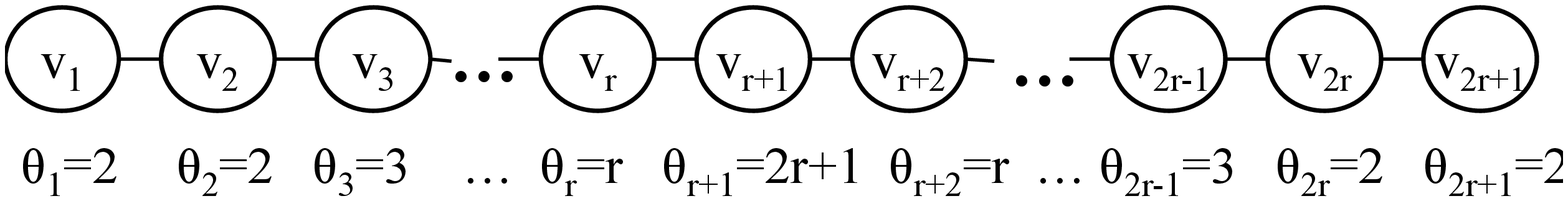}
\caption{An instance of the technology diffusion problem for the proof of Lemma~\ref{lem:limit}.}\label{fig:limit}
\end{center}
\end{figure}

\subsection{Integrality gap}

\subsubsection{Integrality gap for the simple IP of Figure~\ref{fig:ip1}}\label{sec:gap1}
We construct a problem instance with $\ell=O(1)$ where the solution returned by the simple IP of Figure~\ref{fig:ip1} is $O(1)$, while the optimal seedset has size $\Theta(n)$, implying an integrality gap that is polynomial in $n=|V|$.

\myparab{The problem instance.} We let $w$ and $h$ be parameters of the problem instance $\{G,\theta\}$. These parameters control the shape of the graph $G$ and the size of the integrality gap. We will decide the parameters at the end to maximize the integrality gap.  The graph $G$ (see Figure~\ref{fig:ipgap1}) has a node set of size $n=wh+h+1$ that consists of the following nodes:
\begin{itemize}
\item The root node $R$.
\item The ``seed candidates'' $\{s_1, ..., s_h\}$.
\item The ``tail nodes'' $v_{i,j}$ for $i \leq h$ and $j \leq w$.
\end{itemize}

\noindent
The edge set consists of the following two types of edges:
\begin{itemize}
\item all the ``seed candidates'' $s_i$ ($i \in [h]$) are connected with the root $R$.
\item for any specific $i \in [h]$, the nodes $s_i$, $v_{i, 1}$, $v_{i, 2}$, ..., $v_{i, w}$
form a chain. In other words, $\{s_i, v_{i, 1}\} \in E$ and $\{v_{i,j}, v_{i,j+1}\} \in E$ for $1 \leq j \leq w - 1$.
\end{itemize}

\noindent
Hereafter, we shall refer to the chain $s_i$, $v_{i, 1}$, ..., $v_{i, w}$ as the $i$-th tail of the graph.  The threshold function $\theta$ is specified as follows:
\begin{itemize}
\item $\theta(R) = n$.
\item for any $s_i$ we have $\theta(s_i) = n - h + 2$.
\item for any $v_{i,j}$ we have $\theta(v_{i,j}) = 2$.
\end{itemize}

\begin{figure}
\begin{center}
\includegraphics[width=4.4in]{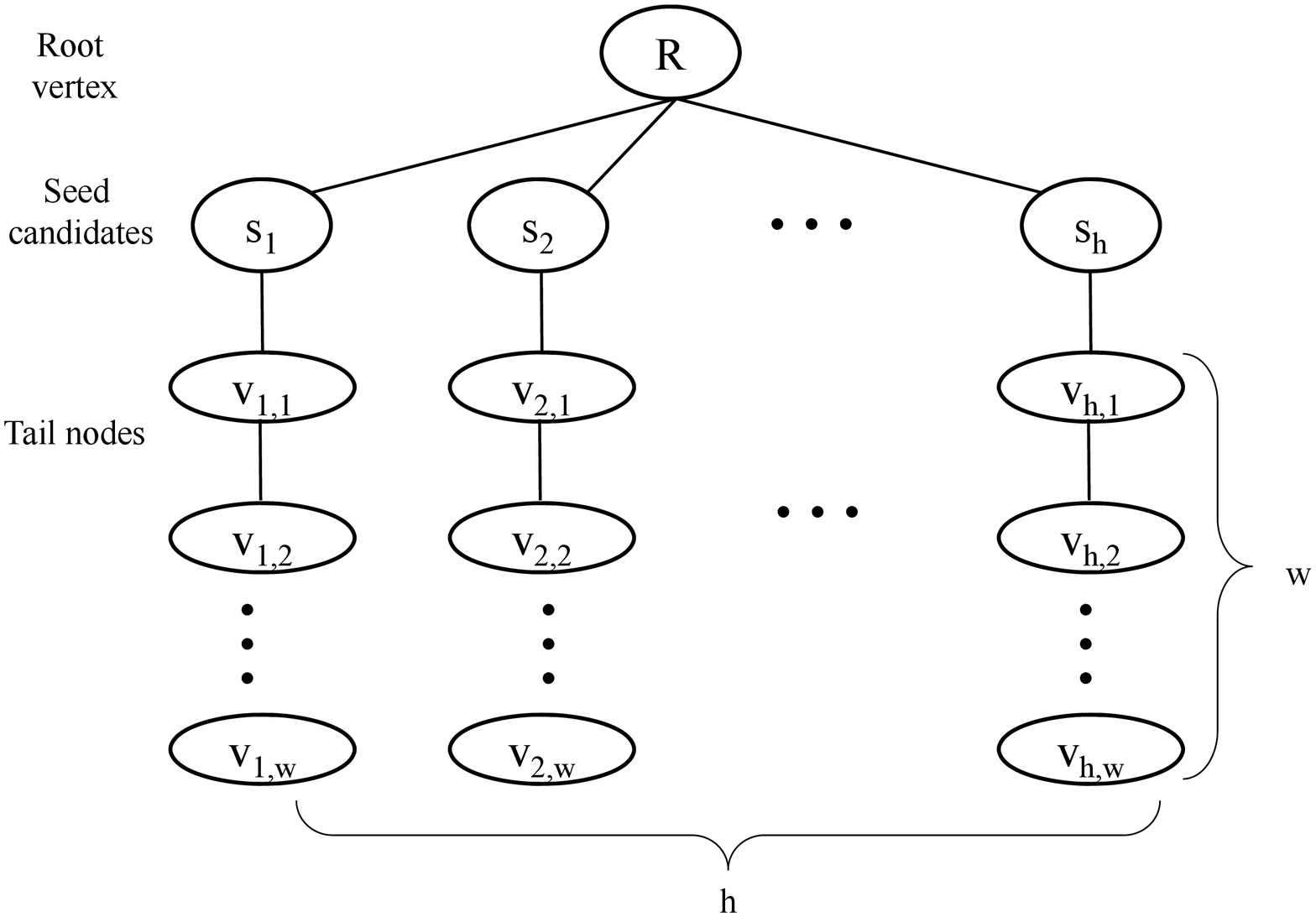}
\caption{The graph for the hard instance with $\Omega(h)$ integrality gap}
\label{fig:ipgap1}
\end{center}
\end{figure}

\smallskip
To exhibit the integrality gap, we shall first construct a feasible fractional solution of constant size, and then show that the optimal integral solution gives rise to a seedset of size $\Theta(h)$.

\myparab{The fractional solution.} Table~\ref{table:frac1} describes a feasible
fractional solution of constant size. We now walk through this solution. In
the solution, we group the rows in the following way:
\begin{itemize}
\item The first row corresponds with the root node.
\item The rest of the rows are grouped by ``stripes''.
A stripe consists of a seed candidate and its corresponding tail.
For example, the first stripe consists of the rows for $s_1$, $v_{1, 1}$, ..., $v_{1, w}$.
\end{itemize}

\begin{sidewaystable}
\caption{A fractional solution for the simple IP formulation}
\label{table:frac1}
\begin{tabular}{lll|l||l|lllll|lllll|c||lllll|}
\multicolumn{5}{c}{} & \multicolumn{5}{c}{First cycle} & \multicolumn{5}{c}{Second cycle}& \multicolumn{1}{c}{$h-2$ other cycles}&\multicolumn{5}{c}{Completion stage}\\
\cline{4-21}
& & $R$ & 1 & 0 & 0 &  0 & 0 & \ldots & 0& 0& 0& 0&\ldots &  0 & \ldots & 0 & 0& 0 & \ldots& 0  \\
\cline{4-21}
 \multirow{6}{*}{\rotatebox{270}{1st stripe}
}&
\multirow{6}{*}{$ \hspace{-.5cm}\left\{
\begin{array}{c}
\\ \\ \\ \\ \\ \\
\end{array}
\right.$
} & $s_1$ & 0 & $\epsilon$ & 0 & 0 & 0 & \ldots & 0 & 0&0 &0 &\ldots &0 &\ldots & $1-\epsilon$ & 0 & 0 &\ldots & 0\\
\cline{4-21}
& & $v_{1,1}$ & 0 & 0 & $\underline{\epsilon}$ & 0 & 0   &  \ldots &   0  & $\underline{\epsilon}$ & 0 & 0   &  \ldots &   0 & \ldots & 0 & 0 & 0 & \ldots & 0\\
  & & $v_{1,2}$ &0 & 0 & 0 & $\underline{\epsilon}$ & 0 & \ldots & 0 &  0 & $\underline{\epsilon}$ & 0 & \ldots & 0 &\ldots & 0 & 0 & 0 & \ldots & 0 \\
& & $v_{1,3}$ &0 & 0 & 0 & 0 & $\underline{\epsilon}$ & \ldots & 0 &  0 & 0 & $\underline{\epsilon}$ & \ldots & 0 & \ldots & 0 & 0 & 0 & \ldots & 0 \\
& &\vdots & & & & & & $\ddots$&  & & & & $\ddots$ & &$\ddots$ & & & & \ldots & \\
& &$v_{1,w}$ & 0 & 0 & 0 & 0 & 0 & \ldots & $\underline{\epsilon}$  & 0 & 0 & 0 & \ldots & $\underline{\epsilon}$ & \ldots & 0 & 0 & 0 & \ldots & 0 \\
\cline{4-21}
 \multirow{6}{*}{\rotatebox{270}{2nd stripe}
}&
\multirow{6}{*}{$ \hspace{-.5cm}\left\{
\begin{array}{c}
\\ \\ \\ \\ \\ \\
\end{array}
\right.$
} & $s_2$ & 0 & $\epsilon$ & 0 & 0 & 0 & \ldots & 0 & 0&0 &0 &\ldots &0 &\ldots & $\epsilon$ & $1-2\epsilon$ & 0 &\ldots & 0\\
\cline{4-21}
& & $v_{2,1}$ & 0 & 0 & $\epsilon$ & 0 & 0   &  \ldots &   0  & $\epsilon$ & 0 & 0   &  \ldots &   0 & \ldots & 0 & 0 & 0 & \ldots & 0\\
  & & $v_{2,2}$ &0 & 0 & 0 & $\epsilon$ & 0 & \ldots & 0 &  0 & $\epsilon$ & 0 & \ldots & 0 &\ldots & 0 & 0 & 0 & \ldots & 0 \\
& & $v_{2,3}$ &0 & 0 & 0 & 0 & $\epsilon$ & \ldots & 0 &  0 & 0 & $\epsilon$ & \ldots & 0 & \ldots & 0 & 0 & 0 & \ldots & 0 \\
& &\vdots & & & & & & $\ddots$&  & & & & $\ddots$ & &$\ddots$ & & & & \ldots & \\
& &$v_{2,w}$ & 0 & 0 & 0 & 0 & 0 & \ldots & $\epsilon$  & 0 & 0 & 0 & \ldots & $\epsilon$ & \ldots & 0 & 0 & 0 & \ldots & 0 \\
\cline{4-21}
 \multirow{6}{*}{\rotatebox{270}{3rd stripe}
}&
\multirow{6}{*}{$ \hspace{-.5cm}\left\{
\begin{array}{c}
\\ \\ \\ \\ \\ \\
\end{array}
\right.$
} & $s_3$ & 0 & $\epsilon$ & 0 & 0 & 0 & \ldots & 0 & 0&0 &0 &\ldots &0 &\ldots & $0$ & $2\epsilon$ & $1-3\epsilon$ &\ldots & 0\\
\cline{4-21}
& & $v_{3,1}$ & 0 & 0 & $\epsilon$ & 0 & 0   &  \ldots &   0  & $\epsilon$ & 0 & 0   &  \ldots &   0 & \ldots & 0 & 0 & 0 & \ldots & 0\\
  & & $v_{3,2}$ &0 & 0 & 0 & $\epsilon$ & 0 & \ldots & 0 &  0 & $\epsilon$ & 0 & \ldots & 0 &\ldots & 0 & 0 & 0 & \ldots & 0 \\
& & $v_{3,3}$ &0 & 0 & 0 & 0 & $\epsilon$ & \ldots & 0 &  0 & 0 & $\epsilon$ & \ldots & 0 & \ldots & 0 & 0 & 0 & \ldots & 0 \\
& &\vdots & & & & & & $\ddots$&  & & & & $\ddots$ & &$\ddots$ & & & & \ldots & \\
& &$v_{3,w}$ & 0 & 0 & 0 & 0 & 0 & \ldots & $\epsilon$  & 0 & 0 & 0 & \ldots & $\epsilon$ & \ldots & 0 & 0 & 0 & \ldots & 0 \\
\cline{4-21}
& & \vdots & \vdots & \vdots & & & & \vdots & & & & & \vdots & \vdots & & & & &\vdots &\\
\cline{4-21}
 \multirow{6}{*}{\rotatebox{270}{h-th stripe}
}&
\multirow{6}{*}{$ \hspace{-.5cm}\left\{
\begin{array}{c}
\\ \\ \\ \\ \\ \\
\end{array}
\right.$
} & $s_h$ & 0 & $\epsilon$ & 0 & 0 & 0 & \ldots & 0 & 0&0 &0 &\ldots &0 &\ldots & $0$ & 0 & 0 &\ldots & $(h-1)\epsilon$\\
\cline{4-21}
& & $v_{h,1}$ & 0 & 0 & $\epsilon$ & 0 & 0   &  \ldots &   0  & $\epsilon$ & 0 & 0   &  \ldots &   0 & \ldots & 0 & 0 & 0 & \ldots & 0\\
  & & $v_{h,2}$ &0 & 0 & 0 & $\epsilon$ & 0 & \ldots & 0 &  0 & $\epsilon$ & 0 & \ldots & 0 &\ldots & 0 & 0 & 0 & \ldots & 0 \\
& & $v_{3,3}$ &0 & 0 & 0 & 0 & $\epsilon$ & \ldots & 0 &  0 & 0 & $\epsilon$ & \ldots & 0 & \ldots & 0 & 0 & 0 & \ldots & 0 \\
& &\vdots & & & & & & $\ddots$&  & & & & $\ddots$ & &$\ddots$ & & & & \ldots & \\
& &$v_{h,w}$ & 0 & 0 & 0 & 0 & 0 & \ldots & $\epsilon$  & 0 & 0 & 0 & \ldots & $\epsilon$ & \ldots & 0 & 0 & 0 & \ldots & 0 \\
\cline{4-21}
\end{tabular}
\end{sidewaystable}

We shall also divide the columns into two parts. The first part is the ``false propagation'' stage, consisting of $2+wh$ columns where we use a small fractional seed set to activate the tail nodes. The second part is the ``completion stage'' consisting of $h-1$ rows \Snote{check this! }, where we fill in the residual mass of the nodes so that the permutation constraints are met.

\mypara{Variable assignments in the fractional solution. } We now describe the assignments in Table~\ref{table:frac1}.
\begin{itemize}
\item $x_{R, 1} = 1$, i.e. the root is first activated.
\item Let $\epsilon \triangleq 1/h$. For each stripe $\{s_i, v_{i, 1}, ..., v_{i, w}\}$, we assign values in the false propagation region as follows:
    $x_{s_i, 2} = \epsilon$ and $x_{v_{i,j}, j + 2 + k w} = \epsilon$ for all $j \leq w$ and $0 \leq k < h$. The rest of the variables in this region are set to $0$. This assignment exhibits a periodic pattern, so that mass can circulate back and forth along a tail until all nodes in the tail are activated at the end of the false propagation stage. (Refer to the underlined values in the first stripe of Table~\ref{table:frac1}).
\item \Snote{Zhenming, I found some small bugs here with your column numbers and changed them. Please check them. } Finally, we fill in the variables in the completion stage so that the permutation constraints are met. Notice that at time $n - h + 2$, only the rows that correspond with the seed candidates $s_i$ ($i \leq h$) do not sum up to 1. We use the columns in the completion stage to fill in the extra mass using a ``greedy'' approach. 
    In other words, at the column $n - h + 2$, we first fill in the unused mass (namely $1-\epsilon$) from $s_1$. Then we fill in the unused mass from $s_2$ as much as possible, subject to the constraint that the column sums to 1 (namely $\epsilon$). Next, we move to the next column (the $n - h + 3$-rd column). Then we fill in the mass from $s_2$ and as much mass as possible from $s_3$ to this column. This process continues until all mass from $s_i$ ($i \leq h$) is filled.
\end{itemize}

\mypara{The fractional solution is feasible. }
Next, we argue that such assignments are feasible. Since, we satisfied the permutation constraints by construction (Table~\ref{table:frac1}), we only argue that the connectivity constraints are met.
\begin{itemize}
\item We need to start thinking about connectivity when $t = 2$. At this time step,
the connectivity constraints are met because all the seed candidates are connected to the root $R$, which is activated at time $t = 1$.

\item Next, we argue that the connectivity constraints are met in the propagation stage. Let us consider the first cycle in the propagation stage. In the first time step of the first cycle, an $\epsilon$-fraction of mass is activated at $v_{i,1}$ for all $i \leq h$. Since $v_{i,1}$ is connected with $s_i$, and an $\epsilon$ portion of $s_i$ is active prior to the beginning of the 1st cycle, the connectivity constraint is met for this step. For the rest of the timesteps of the first cycle, note that by the time we assign $\epsilon$ to the node $v_{i,j}$, an $\epsilon$ portion of mass is already activated at $v_{i,j-1}$.   Since $\{v_{i,j-1}, v_{i,j}\}\in E$ for all $j < w$, the connectivity constraints are met for the entire first cycle.
The argument for the remaining cycles proceeds in a similar manner.

\item Finally, showing the connectivity holds in the completion stage is trivial: this follows because only seed candidates activate at the this stage, and seed candidates are all connected to the root which has been fully activated since $t = 1$.
\end{itemize}
\noindent
Hence, we can conclude that the fractional solution in Table~\ref{table:frac1} is feasible.

\myparab{The integral solution.}  To prove that the optimal integral solution is a seedset of size $O(h)$, we show that any seedset of size less than $\frac h 5$ will fail to activate all the nodes in the graph. Here, the constant $\frac 1 5$ is chosen rather arbitrarily and is not optimized.

First, we notice that for any feasible set $S$ that contains one or more tail nodes,
we can transform it into a feasible set $S'$ such that (a) $|S'|\leq |S|$ and (b) no tail nodes are in $S'$.  To construct the new seedset, replace each tail node $v_{i,j}$ in $S$ by it's parent seed candidate $s_i$. Since the activation of $s_i$ always causes the activation of $v_{i,j}$ for any $j$, it follows that $S'$ is a feasible seedset whenever $S$ is a feasible seedset.

Thus, we may focus on the seedset that contains only $R$ and/or seed candidates. Wlog, we may assume the seed set is a subset in $U = \{R, s_1, ..., s_{\frac h 5}\}$. Next, we argue that the seedset $U$ fails to activate all the nodes in the graph.
First, we can see that all the tails $v_{i,j}$ ($i \leq \frac h 5$ and $j \leq w$) with parent seed candidates in $U$ will be active.
After they are activated, the total number of activated nodes will be $\frac h 5 + 1 + \frac{wh} 5$.
Now we argue that no other nodes are active because (a) all seed candidates $s_i$ ($i > \frac h 5$) that are not in $U$ cannot be activated since the following holds
\begin{equation}\label{eq:gap1bound22}
(\tfrac h 5 + 1 + \tfrac{wh} 5) + 1 < \theta(s_i)=n-h+2
\end{equation}
 for sufficiently large constant $w$ and sufficiently large $n$, and (b) all tail nodes $v_{i,j}$ ($i > \frac h 5$ and $j \leq w$) cannot be activated until their parent seed candidate is active.

\myparab{A $\Theta(n)$ integrality gap.} We can conclude that the integral solution has a seedset size of $O(h)$ while the fractional solution is $O(1)$. When we set $w$ be a sufficiently large constant and $h= \Theta(n)$ (we only need to ensure that (\ref{eq:gap1bound22}) holds), our integrality gap is $\Theta(n)$.

\subsubsection{Integrality gap for the augmented IP of Figure~\ref{fig:ipflow}.}\label{sec:gap2}
In this section, we shall prove the following theorem.

\begin{theorem}Consider the augmented linear program of of Figure~\ref{fig:ipflow}.
For any sufficiently large $n$ and any $\ell \leq cn^{1/3}$, where $c$ is a suitable constant, there exists a problem instance with an $\Omega(\ell)$ integrality gap.
\end{theorem}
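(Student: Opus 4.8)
The plan is to exhibit a single instance $\Pi=\{G,\theta\}$ assembled from $\ell$ almost-disjoint \emph{gadgets} $G_1,\dots,G_\ell$ attached to a common root $R$ (relabelled $v_1$ so the augmented IP's requirement $x_{1,1}=1$ is met), together with a feasible fractional solution of the LP of Figure~\ref{fig:ipflow} whose objective is $O(1)$, and then to show that every integral seedset for $\Pi$ has size $\Omega(\ell)$; the ratio then yields the $\Omega(\ell)$ integrality gap. Each gadget $G_i$ is a scaled-down variant of the ``chain-of-tails'' instance used for the $\Omega(n)$ gap of Appendix~\ref{sec:gap1}, but equipped with a distinguished node $z_i$ whose threshold $\theta_i$ is of the same order as the size of $G_i$, so that $z_i$ can activate as a non-seed only after roughly $\theta_i-1$ nodes are already active. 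The thresholds used inside $G_i$ are drawn from a constant-size set (essentially $\{2,\theta_i\}$ up to a little padding), so $\Pi$ uses exactly $\ell$ distinct thresholds $\theta_1<\dots<\theta_\ell$, and these are spaced so that the ``windows'' $W_i$ of active-node counts for which $\theta_i$ is \emph{good} in the sense of Lemma~\ref{lem:goodst} are pairwise disjoint; this disjointness is what makes the gadgets behave independently.

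First I would write down the fractional solution. It activates $R$ at $t=1$, deposits a $\Theta(1/\ell)$ amount of fractional seed mass into the entry of each $G_i$, and then fills each gadget in a staggered, periodic pattern so that (a) the permutation constraints hold, (b) every newly activated fractional mass is adjacent to mass activated strictly earlier, giving the connectivity constraints, and (c) inside each gadget the running count of active mass passes through the window $W_i$ at the correct timestep. Since each gadget contributes only $\Theta(1/\ell)$ of fractional weight to the left of its threshold line, the objective is $\Theta(1)$. The delicate point --- and what separates this instance from the $\Omega(n)$ example, which is defeated by the flow constraints --- is that I must also verify every $(v,t)$-flow constraint of Figure~\ref{fig:ipflow} directly: for each sink I would route an explicit flow from $X_{1,1}$ through $R$ to the entry of the relevant gadget and into its interior, with the gadget dimensioned so that the total fractional capacity available to the left of its threshold line is at least the total right-of-threshold demand it must feed and no node capacity is ever exceeded.

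Next I would lower-bound the integral optimum via an \emph{independence} argument: for any feasible integral seedset $S$ and any index $i$, I claim $S$ must contain $\Omega(1)$ nodes lying in $G_i$ (its interior seed candidates). Indeed, to avoid paying for $z_i$ directly, the activation sequence must make $\theta_i$ good, i.e., bring $\approx\theta_i$ nodes to active before $z_i$ activates; by construction the nodes that can become active before $z_i$ \emph{without} a seed inside $G_i$ number strictly fewer than $\theta_i-1$, so $z_i$ cannot activate as a non-seed, and making up the deficit with nodes from other gadgets is impossible, since that would require those gadgets' distinguished nodes to activate, whose good-windows are disjoint from $W_i$. Hence each $G_i$ separately forces $\Omega(1)$ seeds, and because these come from disjoint vertex sets we get $|S|=\Omega(\ell)$; dividing by the $O(1)$ fractional optimum gives the claimed gap.

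The main obstacle I anticipate is constructing the fractional solution so that it simultaneously (i) has $O(1)$ objective, (ii) satisfies all the $(v,t)$-flow constraints --- exactly the constraints engineered to kill ``recirculation of mass'' --- and (iii) makes every window $W_i$ good at once, whereas the IP can satisfy at most one window per disjoint batch of seeds. Reconciling (i) with (ii) is the crux: the entry region of each gadget must carry fractional capacity only $\Theta(1/\ell)$ to the left of its threshold line and yet still feed a comparable right-of-threshold demand, so that the $(v,t)$-flows route cleanly through $R$; lining up the arithmetic of thresholds, timesteps and $\epsilon$-masses so that the permutation, connectivity and flow constraints all hold simultaneously is where the real work lies. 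Finally, the size bookkeeping --- the largest gadget $G_\ell$ has $\Theta(\ell^2)$ nodes and there are $\ell$ of them, so $n=\Theta(\ell^3)$ --- is precisely what pins the hypothesis to $\ell\le cn^{1/3}$.
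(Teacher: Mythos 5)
Your high-level architecture matches the paper's gap instance in Appendix~\ref{sec:gap2}: a root $R$ feeding $\ell$ disjoint gadgets, a fractional solution that seeds $R$ fully plus a $1/\ell$-fraction of each gadget's entry node (objective $O(1)$), staggered per-gadget thresholds to make the gadgets ``independent'' in the integral lower bound, and the size bookkeeping $n=\Theta(\ell^3)$ that produces the hypothesis $\ell\le cn^{1/3}$. However, there is a genuine gap at exactly the point you flag as the crux, and your proposed way of resolving it cannot work. You plan to reuse a chain-of-tails gadget from Appendix~\ref{sec:gap1} and to satisfy the flow constraints by ``dimensioning the gadget so that the total fractional capacity available to the left of its threshold line is at least the total right-of-threshold demand it must feed.'' This requirement is in direct conflict with the $O(1)$ objective: the objective \emph{is} the total left-of-threshold mass, while the right-of-threshold demand of a gadget is of order its size. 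Nor can you sidestep this per flow problem: every tail node has threshold $2$, column $t=1$ is fully occupied by $R$, so the permutation constraints force each tail to place a full unit of mass to the right of its threshold, i.e.\ the $(v,n)$-flow problem for each tail has demand $1$. All paths from $X_{1,1}$ into a chain gadget cross the gadget's entry, and if the only early mass there is the $\Theta(1/\ell)$ seeded fraction, the min cut is $\Theta(1/\ell)$ and these flow constraints are violated --- this is precisely how the flow constraints kill the Appendix~\ref{sec:gap1} instance, so a ``scaled-down variant'' of it inherits the failure.

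The missing idea is the paper's \emph{blocker}: each gadget $i$ contains an extra node $b_i$ adjacent to the root with threshold $\theta(b_i)=(i-1)(w+1)+w/\ell+2$, timed so that in the fractional schedule $b_i$ becomes \emph{fully} active as a non-seed (hence contributing nothing to the objective) immediately after the first $\epsilon$-wave through its gadget; thereafter it is a unit-capacity conduit $R\to b_i\to v_{i,j}$ through which all remaining right-of-threshold demand of the tails is routed. In other words, the flow capacity that feeds the gadget is right-of-threshold mass of a root-adjacent node, not left-of-threshold mass, which is what reconciles your items (i) and (ii). The same staggered blocker thresholds (together with distinct per-gadget tail thresholds $(i-1)(w+1)+3$, so the instance uses $2\ell+2$ threshold values, not $\ell$) are also what drives the integral lower bound: for the smallest untouched gadget $k$, both $s_k$ and $b_k$ have thresholds exceeding the at most $(k-1)(w+1)+\ell/3+1$ achievable active nodes, provided $\ell/3\le w/\ell$, which is where $\ell\le cn^{1/3}$ enters. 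Without introducing some equivalent of the blocker mechanism, your fractional solution does not satisfy the constraints of Figure~\ref{fig:ipflow}, so the proposal as written does not establish the $\Omega(\ell)$ gap.
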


\myparab{The problem instance.}   To simplify the exposition, we will assume that our problem instance $\{G,\theta\}$ is such that our graph $G$ has $|V|=n$ nodes, where $n - 1$ is a multiple of $\ell$, and the range of $\theta$ is $2\ell+2$ different threshold values.  We shall let $w$ be the integer such that $(w+2)\ell + 1 = n$, and let $\epsilon \triangleq 1/\ell$.   Our graph $G$ is described as follows (See Figure~\ref{fig:ipgap2}):
\begin{itemize}
\item the node set consists of the following:
\begin{itemize}
\item The root vertex $R$.
\item The set of ``seed candidate'' $\{s_1, ..., s_\ell\}$.
\item The set of ``blockers'' $\{b_1, ..., b_\ell\}$.
\item The set of ``tails'' $v_{i, j}$, where $i\leq \ell$ and $j \leq w$.
\end{itemize}
\item The edge set consists of the following three types of edges
\begin{itemize}
\item There is an edge between the root and any seed candidate, i.e. $\{R, s_i\} \in E$
for all $i \leq \ell$.
\item There is an edge between the root and any blocker, i.e. $\{R, b_i\} \in E$ for all $i \leq \ell$.
\item For any $i, j$, we  have $\{s_i, v_{i,j}\} \in E$ and $\{b_i, v_{i,j}\} \in E$.
\end{itemize}
\end{itemize}
In what follows, we shall also refer to the subgraph induced by $s_i$, $b_i$, $v_{i, 1}$, ..., $v_{i, w}$ as the $i$-th gadget of the graph.
We set the threshold function $\theta$ as follows:
\begin{itemize}
\item $\theta(R) = n$.
\item $\theta(s_i) = (w+1)\ell + 3$.
\item $\theta(v_{i,j}) = (i - 1)(w+1) + 3$.
\item $\theta(b_i) = (i - 1)(w+1) + w/\ell + 2$.
\end{itemize}

\begin{figure}
\begin{center}
\includegraphics[width=4in]{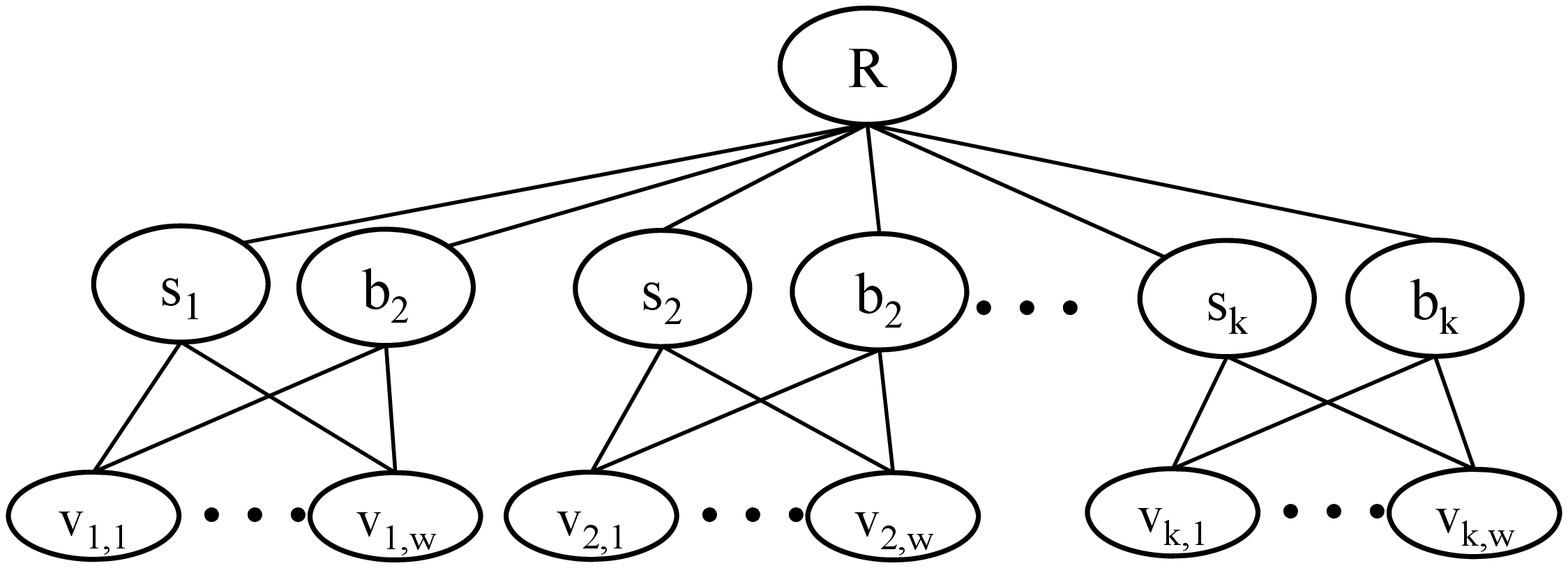}
\caption{The graph for the hard instance with $\Omega(k)$ integrality gap}
\label{fig:ipgap2}
\end{center}
\end{figure}

In what follows, we shall first show that a feasible solution of size $O(1)$ exists for the relaxed LP. Then we shall show that the optimal integral solution is a seedset of size  $\Omega(\ell)$.

\myparab{The fractional solution. } We now construct feasible fractional solution of size $O(1)$.  See the table in Figure~\ref{fig:btable2}.  The intuition behind our construction is to activate the root $R$ at the first time step, i.e. $x_{R, 1} = 1$, and then activate an
$\epsilon$-portion of each seed candidate in the second time step.  This total of $1+\ell\epsilon=2$ mass will be the size of the entire (fractional) seedset.  We will make sure that the rest of the node's mass will activate after their thresholds, and will therefore not contribute to the size of the fraction solution.

We divide our construction into two time stages. The first is the ``false propagation'' stage, where all the nodes except for the seed candidates will be fully activated. The second is the ``completion
stage'' where the remaining inactivated mass from the seed candidates will be activated.
Next, we describe each of these two stages in detail.

\begin{figure}
\begin{tabular}{l|l|l||lll||lll||lll||lll||lllll||}
\cline{4-20}
 \multicolumn{3}{c||}{} & \multicolumn{12}{c||}{Propagation Stage}&  \multicolumn{5}{c||}{Completion Stage} \\
\cline{2-20}
$R$ & 1 & 0 & 0 & \ldots & & 0 & \ldots &  & & \ldots & & 0 & \ldots & & 0 & 0 & 0 & \ldots & 0\\
\cline{2-20}
$s_1$ & 0 & $\epsilon$ & 0 & \ldots & & 0 & \ldots & & & & & 0 & \ldots &  & $1-\epsilon$ & 0 & 0 & \ldots & 0\\
$s_2$ & 0 & $\epsilon$ & 0 & \ldots & & 0 & \ldots & & & & & 0 & \ldots &  & $\epsilon$ & $1-2\epsilon$ & 0 & \ldots & 0\\
\vdots & \vdots & \vdots & & \vdots & &  & \ldots & & & \vdots & &  & \ldots &  &  &  &  & \ldots & \\
$s_\ell$ & 0 & $\epsilon$ & 0 & \ldots & & 0 & \ldots & & & & & 0 & \ldots &  & 0 & 0 & 0 & \ldots & $(\ell-1)\epsilon$\\
\cline{2-20}
$b_1$ & 0 & 0 & \multicolumn{3}{c||}{\multirow{4}{*}{{\huge M}}} & \multicolumn{3}{c||}{\multirow{4}{*}{{\huge 0}}}  & \multicolumn{3}{c||}{\multirow{4}{*}{{\huge \ldots}}} & \multicolumn{3}{c||}{\multirow{4}{*}{{\huge 0}}} & 0 & 0 & 0 & \ldots & 0 \\
$v_{1,1}$ & 0 & 0 & \multicolumn{3}{c||}{} & \multicolumn{3}{c||}{} & \multicolumn{3}{c||}{} & \multicolumn{3}{c||}{} & 0 & 0 & 0 & \ldots & 0\\
\vdots & \vdots & \vdots &  \multicolumn{3}{c||}{} & \multicolumn{3}{c||}{} & \multicolumn{3}{c||}{} & \multicolumn{3}{c||}{} &  &  & & \ldots &  \\
$v_{1,w}$ & 0 & 0 & \multicolumn{3}{c||}{} & \multicolumn{3}{c||}{} & \multicolumn{3}{c||}{} & \multicolumn{3}{c||}{} & 0 & 0 & 0 & \ldots & 0\\
\cline{2-20}
$b_2$ & 0 & 0 & \multicolumn{3}{c||}{\multirow{4}{*}{{\huge 0}}} & \multicolumn{3}{c||}{\multirow{4}{*}{{\huge M}}}& \multicolumn{3}{c||}{\multirow{4}{*}{{\huge \ldots}}} & \multicolumn{3}{c||}{\multirow{4}{*}{{\huge 0}}} & 0 & 0 & 0 & \ldots & 0 \\
$v_{2,1}$ & 0 & 0 & \multicolumn{3}{c||}{} & \multicolumn{3}{c||}{} & \multicolumn{3}{c||}{} & \multicolumn{3}{c||}{} & 0 & 0 & 0 & \ldots & 0\\
\vdots & \vdots & \vdots &  \multicolumn{3}{c||}{} & \multicolumn{3}{c||}{} & \multicolumn{3}{c||}{} & \multicolumn{3}{c||}{} &  &  &  & \ldots & \\
$v_{2,w}$ & 0 & 0 & \multicolumn{3}{c||}{} & \multicolumn{3}{c||}{} & \multicolumn{3}{c||}{} & \multicolumn{3}{c||}{} & 0 & 0 & 0 & \ldots & 0\\
\cline{2-20}
\vdots & \vdots & \vdots & & \vdots & & & \vdots & & & $\ddots$ & & & \ldots & &  &  &  & \ldots &  \\
\cline{2-20}
$b_\ell$ & 0 & 0 & \multicolumn{3}{c||}{\multirow{4}{*}{{\huge 0}}} & \multicolumn{3}{c||}{\multirow{4}{*}{{\huge 0}}} & \multicolumn{3}{c||}{\multirow{4}{*}{{\huge \ldots}}} & \multicolumn{3}{c||}{\multirow{4}{*}{{\huge M}}}& 0 & 0 & 0 & \ldots & 0\\
$v_{\ell,1}$ & 0 & 0 & \multicolumn{3}{c||}{} & \multicolumn{3}{c||}{} & \multicolumn{3}{c||}{} & \multicolumn{3}{c||}{}& 0 & 0 & 0 & \ldots & 0\\
\vdots & \vdots & \vdots &  \multicolumn{3}{c||}{} & \multicolumn{3}{c||}{} & \multicolumn{3}{c||}{} & \multicolumn{3}{c||}{}&  &  &  & \ldots & \\
$v_{\ell,w}$ & 0 & 0 & \multicolumn{3}{c||}{} & \multicolumn{3}{c||}{}& \multicolumn{3}{c||}{} & \multicolumn{3}{c||}{}& 0 & 0 & 0 & \ldots & 0\\
\cline{2-20}
\end{tabular}
\caption{Feasible fractional assignments for the flow based linear program.}
\label{fig:btable2}
\end{figure}

\mypara{False propagation stage.} The false propagation stage consists of $(w+1)\times \ell$ time steps, divided into $\ell$ \emph{blocks}, each consisting of $(w+1)$ time steps. Notice that the thresholds of the seed candidates $\theta(s_i) \;\forall i$ occur exactly after the false propogation stage ends.
%
%
During $i$-th block of the false propagation stage, the blocker $b_i$ and tail nodes $v_{i,j} \;\forall j\in[w]$ in $i$-th gadget will be fully activated. Since there are exactly $(w+1)$ such nodes, the $i$-th block is a $(w+1)\times(w+1)$ matrix, the only non-zero variables in the $i$-th block will be those of the blocker and tail nodes in the $i$-th gadget. These variables are expressed as the sub-matrices $M$ in Figure~\ref{fig:btable2}.

\mypara{The variable assignments in $M$. } We next describe the variable assignments in $M$ for the $i$-th block, as shown in Table~\ref{table:block}. 
Our variable assignments will keep the invariance that before the $i$-th block, all nodes in the $k$-th gadget, \emph{except} for the seed candidate nodes, are fully activated for every $k \leq i - 1$.

The $i$-th block begins at the $3+(i-1)(w+1)$-th time step and ends at the $2+i(w+1)$-th time step.
The assignments in $M$ are divided into multiple \emph{cycles}, each of which
spans $w/\ell=\epsilon w$ time steps. Notice that the $i$-th block will contain in total $\ell$ cycles, and one extra time step that does not belong to any cycle. This extra time step will be inserted between the end of the first cycle and the beginning of the second cycle and will be used to activate the blocker $b_i$, \ie $x_{b_i, 3+(i-1)(w+1)+w/\ell} = 1$.

In each cycle, every node's mass needs to be incremented by $\epsilon$. We do this using a greedy construction, incrementing the mass of $\ell$ tail nodes by $\epsilon$ in each timestep of a given cycle, so that the column constraints are met for this cycle. For this reason, we need $\frac w \ell$ time steps to fully activate all the tail nodes; this follows because there are in total $\ell$ cycles, so the sum of the active portion of any tail node $v_{i,j}$ in the $i$-th block is $\epsilon \cdot \ell = 1$, so that $v_{i,j}$ is completely activated.

\mypara{Feasibility of the assignments.}  We show why our variable assignments at the propagation stage are feasible, and do not increase the mass of the fractional seedset. Our analysis is based on induction. Recall our invariance that prior to the start of the $i$-th block, all blockers and tail nodes in the $j$-th gadget ($k \leq i - 1$) are fully activated. We shall show that if the invariance holds up to the $(i - 1)$st block, the variable assignments in the $i$-th block are feasible (and do not introduce any mass to the seedset).  Suppose that the invariance holds up to the $(i - 1)$-st block. Then we have that:
\begin{itemize}
\item the seedset does not increase, because mass for each node is assigned after its corresponding threshold.  (This follows since the $i$-th block starts at timestep $(w+1)(i-1)+3$ and the tail nodes have $\theta(v_{i,j}) = (w+1)(i-1)+3$.  Similarly, the blocker $b_i$ is activated at time $(w+1)(i-1)+3+w/\ell$ while $\theta(b_i) = (w+1)(i-1)+3+w/\ell$.)
\item the flow constraints are met.  During the first cycle, (timesteps $(w+1)(i-1)+3$ to $(w+1)(i-1)+w/\ell+2$), we may push a flow of size $\epsilon$ to any tail node $v_{i,j}$ through the path $R$-$s_i$-$v_{i,j}$ since the seed candidate $s_i$ has is an $\epsilon$-portion active.
    Next, at timestep $(w+1)(i-1)+w/\ell+3$ the blocker $b_i$ must receive a unit flow; this is feasible since $b_i$ is directly connected to the root that is fully active at $t=1$.
    Finally, during the remaining cycles (timesteps $(w+1)(i-1)+w/\ell+4$ to $(w+1)i + 2$) we may continue to fully activate the tail nodes $v_{i,j}$ by pushing a up to a unit of flow through the path $R$-$b_i$-$v_{i,j}$.
\end{itemize}
Thus, we may conclude that the assignments at the $i$-th block
are feasible and will not increase the size of the seedset, which further implies that the invariance also holds for the $i$-th block.

\mypara{Completion stage.} We now describe the assignments for the completion stage. The completion stage starts at time $t_2 \triangleq 3+\ell(w+1)$. Since $t_2 \geq \theta(s_i)$, activating the seed candidates in this stage does not increase the side of the seedset. Note further that the only rows that do not sum to $1$ correspond to the seed candidates. We again take a greedy approach to fill in the residual mass from the seed candidates (similar to that used in the completion stage of the  integrality gap presented in Section~\ref{sec:gap1}). At column $t_2$, we let $x_{s_1, t_2} = 1-\epsilon$ and $x_{s_2, t_2} = \epsilon$, filling in the unused row mass from $s_1$, and filling in the unused row mass of $s_2$ as much as possible subject to the column constraint of column $t_2$. We repeat this process for each of the $\ell-1$ remaining columns, until all mass from the seed candidates is used up.

By construction, the completion stage satisfies the row and column permutation constraints, and does not increase the size of the fractional seedset (since all seed candidates activate after their thresholds).  Finally, the flow constraints are satisfied since we can push up to a unit of flow to each $s_i$ along its direct connection to $R$.

\begin{sidewaystable}
\caption{Matrix $\mathbf{M}$, the fractional assignments for a block}
\label{table:block}
\begin{tabular}{l|llllll||l||llllll||llllll||lll||llllll|}
\multicolumn{1}{c}{} & \multicolumn{6}{c}{First cycle} & \multicolumn{1}{c}{} & \multicolumn{6}{c}{Second cycle} & \multicolumn{6}{c}{Third cycle}  & \multicolumn{3}{c}{...} & \multicolumn{6}{c}{$\frac w \ell$-th cycle}\\
\cline{2-29}
$b_i$ & 0 & 0 & 0 & \ldots & 0 & 0 & 1 & 0 & 0 & 0 & \ldots & 0 & 0 & 0 & 0 & 0 & \ldots & 0 & 0 & & \ldots & &0 & 0 & 0 & \ldots & 0 & 0\\
\cline{2-29}
$v_{i,1}$ & $\epsilon$ & 0 & 0 & \ldots & 0 & 0 & 0 & $\epsilon$ & 0 & 0 & \ldots & 0 & 0 & $\epsilon$ & 0 & 0 & \ldots & 0 & 0 & & \ldots & & $\epsilon$ & 0 & 0 & \ldots & 0 & 0 \\
$v_{i,2}$ & $\epsilon$ & 0 & 0 & \ldots & 0 & 0 & 0 & $\epsilon$ & 0 & 0 & \ldots & 0 & 0 & $\epsilon$ & 0 & 0 & \ldots & 0 & 0 & & \ldots & & $\epsilon$ & 0 & 0 & \ldots & 0 & 0 \\
\vdots & \vdots & & & \ldots &  & \vdots & \vdots & \vdots & & & \ldots &  & \vdots & \vdots & & & \ldots &  & \vdots & & \ldots & &\vdots & & & \ldots &  & \vdots\\
$v_{i,\ell}$ & $\epsilon$ & 0 & 0 & \ldots & 0 & 0 & 0 & $\epsilon$ & 0 & 0 & \ldots & 0 & 0 & $\epsilon$ & 0 & 0 & \ldots & 0 & 0& & \ldots & & $\epsilon$ & 0 & 0 & \ldots & 0 & 0 \\
\cline{2-29}
$v_{i,\ell+1}$ & 0 & $\epsilon$ & 0 & \ldots & 0 & 0 & 0 & 0 & $\epsilon$ & 0 & \ldots & 0 & 0 & 0 & $\epsilon$ & 0 & \ldots & 0 & 0& & \ldots & & 0 & $\epsilon$ & 0 & \ldots & 0 & 0 \\
$v_{i,\ell+2}$ & 0 & $\epsilon$ & 0 & \ldots & 0 & 0 & 0 & 0 & $\epsilon$ & 0 & \ldots & 0 & 0 & 0 & $\epsilon$ & 0 & \ldots & 0 & 0 & & \ldots & & 0 & $\epsilon$ & 0 & \ldots & 0 & 0 \\
\vdots & \vdots & & & \ldots &  & \vdots & \vdots & \vdots & & & \ldots &  & \vdots& \vdots & & & \ldots &  & \vdots & & \ldots & &\vdots & & & \ldots &  & \vdots\\
$v_{i,2\ell}$ & 0 & $\epsilon$ & 0 & \ldots & 0 & 0 & 0 & 0 & $\epsilon$ & 0 & \ldots & 0 & 0 & 0 & $\epsilon$ & 0 & \ldots & 0 & 0 & & \ldots & & 0 & $\epsilon$ & 0 & \ldots & 0 & 0 \\
\cline{2-29}
\vdots & \vdots & \vdots & & \ldots & & & \vdots  & \vdots & \vdots & & \ldots & & & \vdots & \vdots & & \ldots & & & & \ldots &  & \vdots & \vdots & & \ldots & & \\
\cline{2-29}
$v_{i,w-\ell+1}$ & 0 & 0 & 0 & \ldots & 0 & $\epsilon$ & 0 & 0 & 0 & 0 & \ldots & 0 & $\epsilon$ & 0 & 0 & 0 & \ldots & 0 & $\epsilon$ & & \ldots & & 0 & 0 & 0 & \ldots & 0 & $\epsilon$\\
$v_{i,w-\ell+2}$ & 0 & 0& 0 & \ldots & 0 & $\epsilon$ & 0 & 0 & 0& 0 & \ldots & 0 & $\epsilon$ & 0 & 0 & 0 & \ldots & 0 & $\epsilon$  & & \ldots & & 0 & 0 & 0 & \ldots & 0 & $\epsilon$\\
\vdots & \vdots & & & \ldots &  & \vdots & \vdots  & \vdots & & & \ldots &  & \vdots& \vdots & & & \ldots &  & \vdots & & \ldots & & \vdots & & & \ldots &  & \vdots  \\
$v_{i,w}$ & 0 & 0 & 0 & \ldots & 0 & $\epsilon$ & 0 & 0 & 0 & 0 & \ldots & 0 & $\epsilon$& 0 & 0 & 0 & \ldots & 0 & $\epsilon$  & & \ldots & & 0 & 0 & 0 & \ldots & 0 & $\epsilon$\\
\cline{2-29}
\end{tabular}
\end{sidewaystable}

\myparab{The integral solution. } To prove that the optimal integral solution is a seedset of size $\Omega(\ell)$, we show that any seedset of size $\frac \ell 3$ fails to activate the whole graph when $w \geq ck^2$ for some suitable constant $c$.
Fix an arbitrary seedset $S$. Let
\begin{equation}
\mathcal I \triangleq \left\{ i : \exists v \in S \mbox{ s.t. $v$ is in the $i$-th gadget}\right\}.
\end{equation}
We shall write $\mathcal I = \{i_1, ..., i_q\}$, where $q \leq \frac \ell 3$. Next, let $k$
be the smallest integer that is not in $\mathcal I$.
We proceed to  construct a superset $S'$ of $S$ and argue that $S'$ still fails to activate the whole graph. The set $S'$ is constructed as follows:
\begin{itemize}
\item Any nodes that are in $S$ are also in $S'$.
\item The root $R$ is in $S'$.
\item Any tail or blocker nodes that are in the first $(k-1)$-st gadgets are in $S'$.
\end{itemize}
We argue that $S'$ will not activate any additional nodes in the graph. For the sake of contradiction, suppose $u \notin S'$ is the first node activated when $S'$ is the seedset. There are two cases:

\mypara{Case 1.} $u$ is in the $k$-th gadget. The topology of the graph $G$ ensures that any tail node $v_{k, j}$ cannot be activated before $s_{k}$ or $b_{k}$. Therefore, $u$ cannot be $v_{k, j}$. One can see that $\min\{\theta(b_{k}), \theta(s_{k})\} = \theta(b_{k}) = (k - 1)(w+1)+3+\epsilon w$.
On the other hand, we only have
$$|S'| \leq (k-1)(w+1)+\frac \ell 3 + 1.$$
active nodes.  Therefore, when $\frac \ell 3 \leq \epsilon w$ so that $\ell < cn^{1/3}$ for a sufficiently small constant $c$, neither $b_{k}$ nor $s_{k}$ can be active, and so we have a contradiction.

\mypara{Case 2.} $u$ is not in the $k$-th gadget. In this case $\theta(u) \geq k(w+1)+3$. On the other hand, $|S'| \leq (k-1)(w+1)+\frac \ell 3 +1$, so that when $\ell < cn^{1/3}$ for sufficiently large $\ell$, the total number of active nodes is less than $\theta(u)$, which is also a contradiction.

\myparab{The integrality gap.} The optimal fractional solution has size $O(1)$, while the optimal integral solution is a seedset of size $\Omega(\ell)$ (when $\ell < cn^{1/3}$ for large enough constant $c$), so our integrality gap is $\Omega(\ell)$.

\subsection{Remark on the role of flow constraints in reducing the integrality gap}

Finally, we remark the role of flow constraints in reducing the integrality gap from $O(n)$ to $O(\ell)$. From the two gap instances we presented, we can see that there are two types of ``bad'' mass that can adversarially impact the quality of the linear program:
\begin{enumerate}
\item The recirculation of ``fake'' mass, as discussed in the pathological example of Section~\ref{sec:badexample}.  We used recirculation of mass to construct the gap instance for the simple IP of Figure~\ref{fig:ip1} in Appendix~\ref{sec:gap1}.

\item A chain of fractional mass. Recall that both our gap instances (Appendix~\ref{sec:gap1} and Appendix~\ref{sec:gap2}), used a seed candidate $s_i$ to connect to a set of $w=1/\epsilon$ tail nodes $v_{i,1}, ..., v_{i,w}$ so that when an $\epsilon$-portion of $s_i$ becomes active, the total active fractional mass is $\epsilon \cdot (w+1) > 1$. Meanwhile, in the integral solution, we need to activate at least one seed to have a full unit of active mass, which creates a gap of size $1/\epsilon$.
\end{enumerate}
The flow constraints eliminate ``bad'' mass of the first type (see Section~\ref{sec:flowIP}), but cannot eliminate the second type.  It turns out that if we only have the second type of ``bad'' mass, the integrality gap becomes $O(\ell)$ instead of $O(n)$.

For ease of exposition, we explain the relationship between the gap and $\ell$ by refering to the problem instance presented in Appendix~\ref{sec:gap2}.  These arguments can also be generalized to other problem instances. Our crucial observation is that the blockers in each of the gadgets have different thresholds. To see why, suppose that two or more gadgets had blockers that did share the same threshold.  Observe that if we add a seed candidate from one of these gadgets to the seedset, all the nodes in all these gadgets will become active (because the blockers all have the same threshold). This means that we need to include fewer nodes in seedset for the optimal integral solution, which reduces the size of the integrality gap. To sum up, the idea behind our gap instance is to to pad $k$ parallel gadgets together to get a gap of size $\Theta(k)$; for this padding to work we need at least $\Theta(k)$ different threshold values, and so the granularity of the threshold function $\ell$ scales linearly with the integrality gap.

\section{Supplement: Our problem is neither submodular nor supermodular}\label{sec:notSubmod}

We wondered about the relationship between the algorithmic properties of our model and the linear threshold model on social networks articulated in~\cite{KKT}.
\cite{Chen08} showed that the problem of selecting an optimal seedset in the linear threshold mode in social networks cannot be approximated within a factor of $O(2^{\log^{1-\epsilon}|V|})$ when the thresholds are deterministic and known to the algorithm. \cite{KKT} got around this lower bound by assuming that nodes' thresholds are chosen uniformly at random \emph{after} the seedset is selected, and designing an algorithm that chooses the optimal seedset \emph{in expectation}. Their $(1-1/e-\epsilon)$-approximation algorithm relies on the submodularity of the \emph{influence function}, \ie the function $f(S)$ which gives the expected number of nodes that activate given that nodes in $S$ are active.

In this section, we shall show that algorithmic results for submodular and/or supermodular optimization do \emph{not} directly apply to our problem, even if we restrict ourselves to (a) graphs of constant diameter, (b) diffusion problems with a small number of fixed thresholds, or if (c) we choose the thresholds uniformly at random as in \cite{KKT}. Moreover, we see neither diminishing, nor increasing marginal returns even if we restrict ourselves to (d) connected seedsets.

\subsection{Fixed threshold case}

In this section, we construct two families of technology diffusion instances where the threshold function $\theta$ is given as input.  Each family will be on a graph of diameter at most $4$, and require at most $2$ different threshold values, and each will consider connected seedsets.  The first family will \emph{fail} to exhibit the submodularity property while the second will \emph{fail} to exhibit supermodularity.

Let $\{G, \theta\}$ be an arbitrary technology diffusion problem. We shall write $f_{G, \theta}(S)$ be the total number of nodes that eventually activate after seedset $S$ activates. When $G$ and $\theta$ are clear from the context, we simply refer to $f_{G, \theta}(S)$ as $f(S)$.

\subsubsection{The influence function is not submodular.}

\begin{figure}\begin{center}
\includegraphics[width=3.5in]{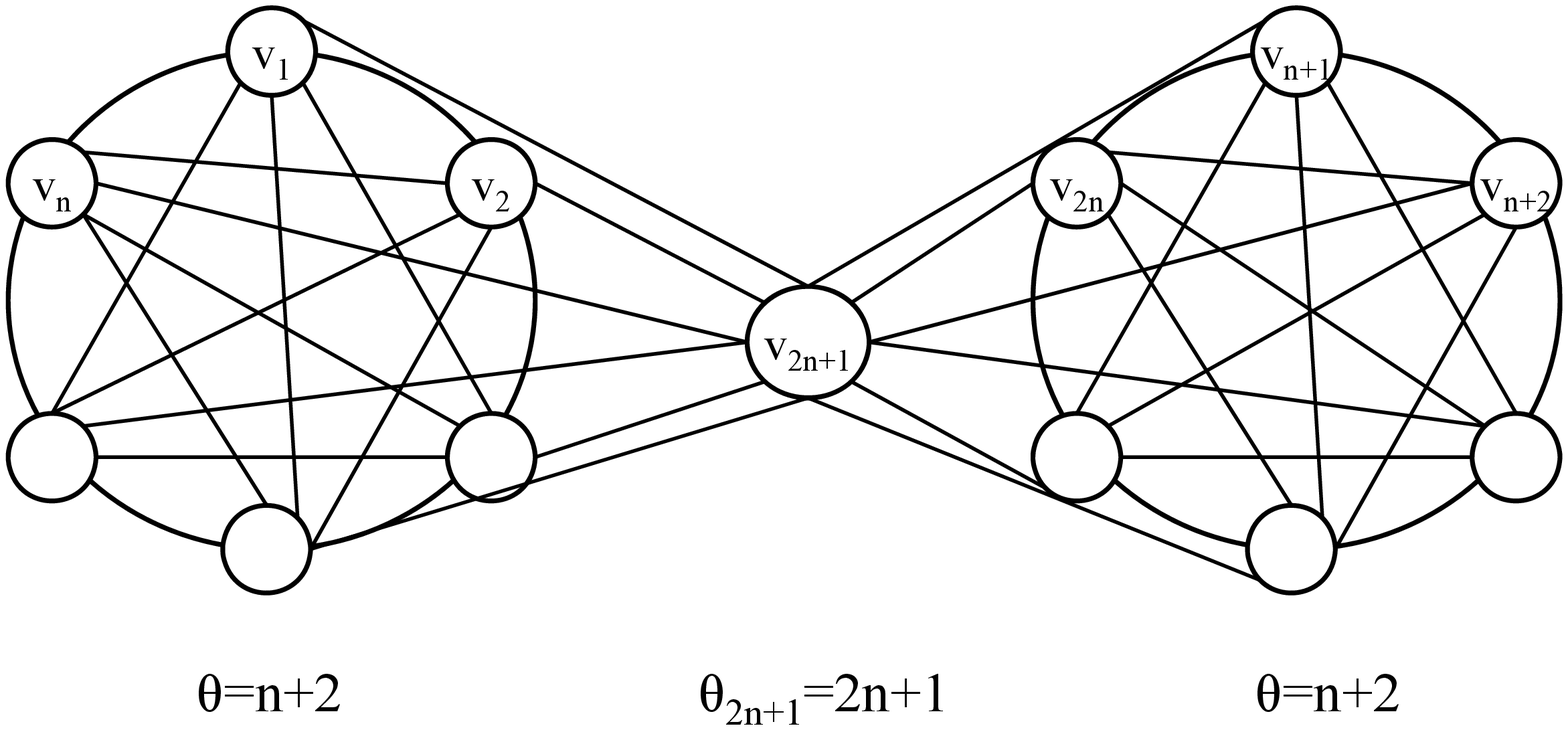}
\caption{An instance of the technology diffusion problem.}\label{fig:sup2}
\end{center}
\end{figure}

Let $n$ be a sufficiently large integer such that the number of nodes in the graph is $2n+1$. This family of technology diffusion problems (which again is implicitly parameterized by $n$) is shown in Figure~\ref{fig:sup2} and defined as follows:
\begin{itemize}
\item The node set is $\{v_1, v_2, ..., v_{2n +1}\}$.
\item The edge set is constructed as follows,
\begin{itemize}
\item The subsets $\{v_1, ..., v_n\}$ and $\{v_{n + 1}, ..., v_{2n}\}$ form two cliques.
\item Node $v_{2n+1}$ is connected to all other nodes in the graph, \ie edges  $\{v_1,v_{2n+1}\}, ... ,\{v_{2n},v_{2n+1}\}$.
\end{itemize}
\item The threshold function is
\begin{itemize}
\item for $i \leq 2n$, $\theta(v_i) = n + 2$.
\item $\theta(v_{2n + 1}) = 2n + 1$.
\end{itemize}
\end{itemize}
To show this problem is non-submodular, we shall find two disjoint sets $S_1$ and $S_2$ such that
\begin{equation}\label{eqn:dnonsub}
f(S_1) + f(S_2) <f(S_1 \cup S_2)
\end{equation}
We chose $S_1 = \{v_1, ..., v_n\}$ and $S_2 = \{v_{2n+1}\}$.  Note that $S_1$ and $S_2$ are connected, and that $f(S_1) = n$, $f(S_2) = 1$, while $f(S_1 \cup S_2) = 2n+1$ so that (\ref{eqn:dnonsub}) holds. \qed

\subsubsection{The influence function is not supermodular.}

\begin{figure}\begin{center}
\includegraphics[width=2.5in]{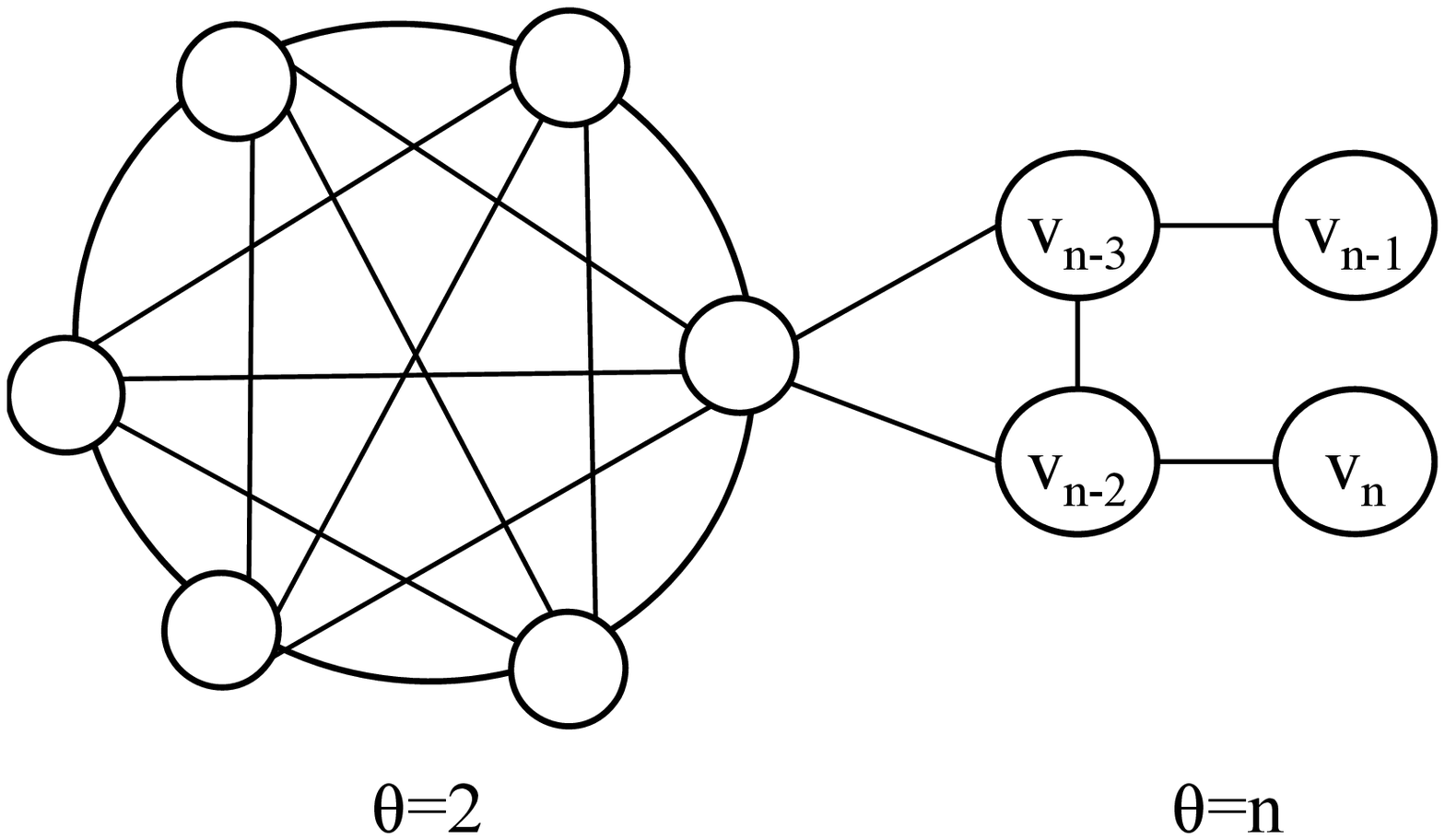}
\caption{Another instance of the technology diffusion problem.}\label{fig:sup1}
\end{center}
\end{figure}

Let $n$ be a sufficiently large integer that represents the
number of nodes in the graph. Our family of technology diffusion problems ${G, \theta}$ (implicitly parameterized by $n$) shown in Figure~\ref{fig:sup1} and defined as follows:
\begin{itemize}
\item The node set is $\{v_1, ..., v_n\}$.
\item The edge set is defined as follows:
\begin{itemize}
\item For any $1 \leq i < j \leq n - 4$, $\{v_i, v_j\}$ is in the edge set, \ie the subgraph induced by
$\{v_1, ..., v_{n - 4}\}$ is a complete graph.
\item The remaining edges are $\{v_1, v_{n - 3}\}$, $\{v_1, v_{n - 2}\}$, $\{v_{n - 3}, v_{n - 1}\}$, $\{v_{n - 2}, v_n\}$, and $\{v_{n - 3},v_{n - 2}\}$.
\end{itemize}
\item The threshold function is
\begin{itemize}
\item For $i \leq n - 4$, $\theta(v_i) = 2$.
\item For $i > n - 4$, $\theta(v_i) = n$.
\end{itemize}
\end{itemize}
To show this problem is not supermodular, we choose two disjoint sets $S_1$ and $S_2$ such that
\begin{equation}\label{eqn:dnonsup}
f(S_1) + f(S_2) > f(S_1 \cup S_2)
\end{equation}
We choose $S_1=\{v_{n - 3}\}$ and $S_2 = \{v_{n - 2}\}$. Note that $S_1$ and $S_2$ are connected, and $f(S_1) = f(S_2) = n - 3$, while $f(S_1 \cup S_2) = n -2$ so that (\ref{eqn:dnonsup}) indeed holds. \qed

\subsection{Randomized threshold case}

We now consider a modified version of our problem, where, as in \cite{KKT}, we assume that thresholds are chosen uniformly at random:
\begin{definition}[Randomized technology diffusion optimization problem.]  The randomized technology diffusion model is as before, with the exception that nodes choose their thresholds uniformly and independently at random from the set $\{2,3,...,n\}$.
Thus, the randomized technology diffusion optimization problem is to find the smallest feasible seedset $S$ \emph{in expectation over the choice of thresholds}, when $G$ is given as input.
\end{definition}
We follow \cite{KKT} and let the influence function $f_G(S)$ be the \emph{expected} number of nodes that are eventually activated, \ie $f_G(S) = \E_{\theta}[f_{G, \theta}(S)]$, where $f_{G, \theta}(S)$ is the number of activated nodes, and expectation is taken over the choice of thresholds. We present two families of problem instances: each family will be on a graph of diameter at most $4$, and will consider connected seedsets. The first family will \emph{fail} to exhibit submodularity of $f_G(S)$,  while the second will \emph{fail} to exhibit supermodularity.


\subsubsection{The influence function is not submodular.}

Let $n$ be a sufficiently large integer such that the number of nodes in the network is $2n +1$. Our family of $G$ (parameterized by $n$) is defined as
\begin{itemize}
\item The node set is $\{v_1, v_2, ..., v_{2n +1}\}$.
\item The edge set is constructed as follows,
\begin{itemize}
\item The subsets $\{v_1, ..., v_n\}$ and $\{v_{n + 1}, ..., v_{2n}\}$ form two cliques.
\item The remaining edges are $\{v_{2n + 1}, v_1\}$ and $\{v_{2n + 1}, v_{2n}\}$. 
\end{itemize}
\end{itemize}
Notice that this family of graphs is \emph{almost} identical to the non-submodular example presented in the previous section, shown in Figure~\ref{fig:sup2}, except that now, the middle node $v_{2n+1}$ is only connected to $v_1$ and $v_{2n}$. We shall find two disjoint set $S_1$ and $S_2$ such that
\begin{equation}\label{eqn:rnonsub}
f_G(S_1) + f_G(S_2) < f_G(S_1 \cup S_2).
\end{equation}
Our choice of $S_1$ and $S_2$ is $S_1 = \{v_1, ..., v_{n}\}$ and $S_2 = \{v_{2n+1}\}$.
We start with computing $f_G(S_1)$:
{\small
\begin{equation}\label{eqn:fg}
f_G(S_1) = 
\E[f_{G, \theta}(S_1)\mid \theta(v_{2n+1})\leq n+1]\Pr[ \theta(v_{2n+1})\leq n+1] + \E[f_{G, \theta}(S_1)\mid \theta(v_{2n+1})>n+1]\Pr[ \theta(v_{2n+1})>n+1]
\end{equation}
}
Notice that
\begin{align}
\E[f_{G, \theta}(S_1) \mid \theta(v_{2n+1}) \leq n+1]
 &= \E[f_{G, \theta}(S_1\cup S_2)] = f_G(S_1\cup S_2)\label{eqn:fgs2}\\
\E[f_{G, \theta}(S_1)\mid \theta(v_{2n+1})>n+1] &= n \notag
\end{align}
Therefore, we may rewrite (\ref{eqn:fg}) as
\begin{equation}\label{eqn:fgrec}
f_G(S_1) = f_G(S_1\cup S_2)\Pr[\theta(v_{2n+1})\leq n+1]+ n \Pr[\theta(v_{2n+1})>n+1] = \frac{f_G(S_1\cup S_2)}{2}+\frac n 2.
\end{equation}
We next move to compute $f_G(S_2)$.  To understand how the influence of $S_2=\{v_{2n+1}\}$ spreads, we condition on the thresholds of its neighbors: $\theta(v_1),\theta(v_{2n})$.
\begin{align}
f_G(S_2) &\leq 1\cdot\Pr[\theta(v_{1})>2 \cap \theta(v_{2n})>2]
+ (2n+1)\cdot\Pr[\theta(v_{1})=2 \cup \theta(v_{2n})=2]\notag\\
&=1 (1- \tfrac1{2n})(1- \tfrac1{2n})+ (2n+1)(2\tfrac1{2n}(1-\tfrac1{2n})+\tfrac1{2n}\tfrac1{2n})\notag\\
&=1+2n\tfrac1{2n}(2(1-\tfrac1{2n})+\tfrac1{2n})
\leq 3 \label{eq:fgs2222}
\end{align}
Therefore, from
(\ref{eqn:fgrec}) and (\ref{eq:fgs2222}) we have
\begin{equation}\label{eq:123}
f_G(S_1) + f_G(S_2) \leq 3 + \tfrac12 (f_G(S_1 \cup S_2)+n)
\end{equation}
Recall that our goal is to show that  $f_G(S_1)+f_G(S_2) < f_G(S_1\cup S_2)$.  Using (\ref{eq:123}), we now see that it suffices to prove that
$$f_G(S_1 \cup S_2)> n+6$$
We prove this by conditioning on the event that $S_1 \cup S_2$ activates node $v_{2n}$:
\begin{align*}
f_G(S_1 \cup S_2)
&=
f_G(S_1 \cup S_2 \cup \{v_{2n}\})\Pr[\theta(v_{2n})\leq n+2]+(n+1)\Pr[\theta(v_{2n})>n+2]\\
&\geq (n+2+\tfrac {n-1}2) \tfrac{n+1}{2n} +(n+1)\tfrac{n-1}{2n}\\
&=n+1+\tfrac{n+1}4
\end{align*}
where the first inequality follows because the thresholds of half of the nonseed nodes $\{v_{n+1}, ..., v_{2n-1}\}$ are $\leq n + 1$ in expectation. Thus, we indeed have that $S_1$ and $S_2$ are connected and $f_G(S_1)+f_G(S_2) < f_G(S_1\cup S_2)$ when $n$ is sufficiently large.
\qed


\subsubsection{The influence function is not supermodular.}

Let $n$ be a sufficiently large integer such that the number of nodes in the network is $2n +1$.
Our family of graphs (parameterized by $n$) is defined as follows,

\begin{itemize}
\item The node set is $\{v_1, v_2, ..., v_{2n +1}\}$.
\item The edge set is constructed as follows,
\begin{itemize}
\item The subsets $\{v_1, ..., v_n\}$ and $\{v_{n + 1}, ..., v_{2n}\}$ form two cliques.
\item Node $v_{2n+1}$ is connected to all other nodes in the graph.
\item There is an additional edge $\{v_1,v_{2n}\}$.
\end{itemize}
\end{itemize}

Notice that this family of graphs is almost identical to the one shown in Figure~\ref{fig:sup2}, except for the addition of a single edge $\{v_1,v_{2n}\}$. We shall find two disjoint set $S_1$ and $S_2$ such that
\begin{equation}\label{eqn:rnonsub}
f_G(S_1) + f_G(S_2) > f_G(S_1 \cup S_2).
\end{equation}
Our choice of $S_1$ and $S_2$ is $S_1 = \{v_1, ..., v_{n}\}$ and $S_2 = \{v_{n+1},...,v_{2n}\}$. Notice that these sets are connected by the edge $\{v_1,v_{2n}\}$.
By symmetry we have that $f(S_1)=f(S_2)$, so we start by computing $f_G(S_1)$.  Let $T$ be the number of active nodes in $S_2$, and let $A$ be the event that node $v_{2n+1}$ is active.
\begin{align}
\E[f_{G, \theta}(S_1)]
& \geq n + (1+\E[T |A, S_1 \text{ active}])\Pr[A|S_1\text{ active}]\notag\\
& \geq n + (1 + n\cdot\tfrac{n+1}{2n}) \tfrac{n}{2n}\notag\\
& = n + \tfrac12(1+\tfrac{n+1}4)\label{eq:supXYZ}
\end{align}
where the second inequality follows because we used the trivial bound  $E[T |A, S_1 \text{ active}] \geq n\tfrac{n+1}{2n}$ where we ignore all cascading effects; we simply assume that each of the $n$ nodes in $S_2$ is connected to an active component of size $n+1$.
On the other hand,
\begin{equation}
\E[f_{G, \theta}(S_1 \cup S_2)]
 \leq 2n + 1 \label{eq:supWYZ}
\end{equation}
Thus we indeed have $f_G(S_1) + f(S_2) \geq 2n + 1+\tfrac{n+1}4  >  2n+1 =f_G(S_1\cup S_2) $ for all $n$.
\qed

\section{\ifnum\conference=0 Supplement:\fi Expository examples and figures}\label{apx:example}

\ifnum \conference=0
\begin{figure}[h]
\else
\begin{figure*}[htp]
\fi
\centering
\subfigure{
\begin{pspicture}(5,5)
\cnodeput(0,2.5){A}{A}
\cnodeput(-1,1.5){B}{B}
\cnodeput(2,0.5){C}{C}
\cnodeput(3, 2){D}{D}
\cnodeput(0, -0.5){F}{F}
\cnodeput(2, -1){E}{E}
\ncline{A}{B}
\ncline{B}{C}
\ncline{A}{C}
\ncline{C}{D}
\ncline{A}{D}
\ncline{C}{F}
\ncline{C}{E}
\ncline{E}{F}
\ncline{B}{E}
\end{pspicture}
}
\subfigure{
\begin{tabular}{c}
\hline
Threshold function  \\
\hline
$\theta(A) = 5$ \\
$\theta(B) = 2$ \\
$\theta(C) = 3$ \\
$\theta(D) = 5$ \\
$\theta(E) = 4$ \\
$\theta(F) = 6$ \\
\hline
\end{tabular}
}
\hspace{5mm}
\subfigure
{
\begin{tabular}{c}
\hline
A connected activation sequence \\
\hline
$x_{A, 1} = 1$, $\left(\forall t \neq 1, x_{A, t} = 0\right)$\\
$x_{B, 2} = 1$, $\left(\forall t \neq 2, x_{B, t} = 0\right)$ \\
$x_{C, 3} = 1$, $\left(\forall t \neq 3, x_{B, t} = 0\right)$ \\
$x_{D, 5} = 1$, $\left(\forall t \neq 5, x_{B, t} = 0\right)$ \\
$x_{E, 6} = 1$, $\left(\forall t \neq 6, x_{B, t} = 0\right)$ \\
$x_{F, 4} = 1$, $\left(\forall t \neq 4, x_{B, t} = 0\right)$ \\
\hline
\end{tabular}
}
\caption{A problem instance and a feasible connected activation sequence. }
\label{fig:diffuseexample}
\ifnum\conference=0
\end{figure}
\else
\end{figure*}
\fi

\ifnum\conference=0
\begin{figure}[h]
\else
\begin{figure*}[b]
\fi
\centering
\includegraphics[width=4in]{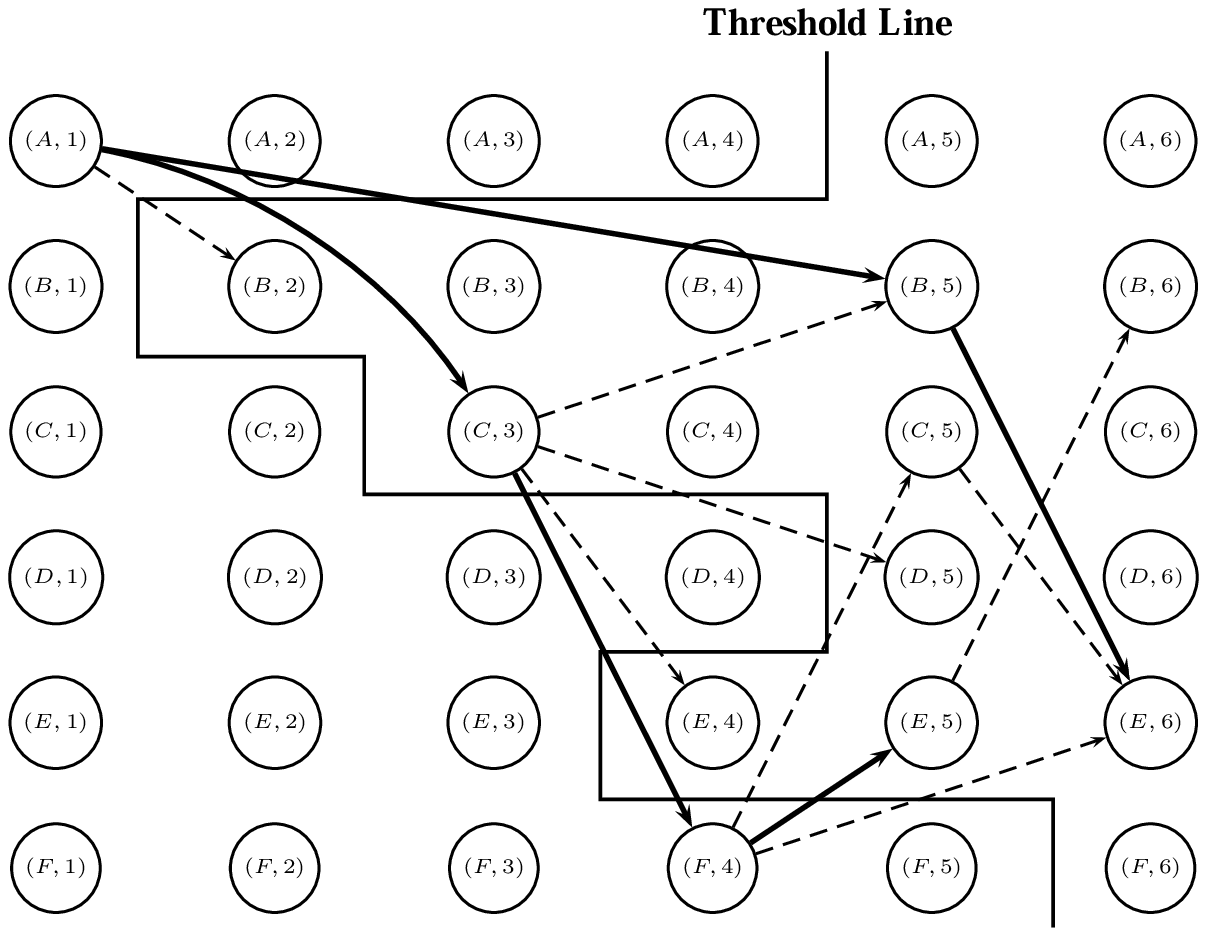}
\caption{The $\mathcal H$ graph and the trajectories of flows.}
\label{fig:hgraph2}
\ifnum\conference=0
\end{figure}
\else
\end{figure*}
\fi

We now present examples of the constructions we used in Section~\ref{sec:linearize} and Section~\ref{sec:roundingAlgo}.
We start with the problem instance $\{G,\theta\}$ in Figure~\ref{fig:diffuseexample}, and present a feasible connected activation sequence $T$ for this problem instance. This feasible connected activation sequence $T$ uniquely corresponds to the seedset $S=\{A,D\}$, since these are the only nodes that have $T(u)<\theta(u)$.

The flow graph $\mathcal H$ used for the relaxed linear program is shown in  Figure~\ref{fig:hgraph2}.  The solid line is the \emph{threshold line}. The (solid and dotted) trajectories represent some paths that can be used to push some amount of flow $f\in[0,1]$ between the nodes in $\mathcal H$ (\ie so that the flow constraints are satisfied).
Notice that every trajectory in $\mathcal H$ corresponds to an edge in the original graph $G$.
Let us consider the $(E, 6)$-flow problem. The solid trajectories in Figure~\ref{fig:hgraph2} illustrate a feasible flow to solve the problem, which we use as the representative flow $\mathcal F_{E,6}$. Notice that $(E,6)$-flow has demand from two nodes $(E, 5)$ and $(E, 6)$ and thus
$\mathcal F_{E,6}$ has two sinks. We decompose $\mathcal F_{E,6}$ into two paths $\mathcal P_1 = (A,1),(C,3),(F,4),(E,5)$ and
$\mathcal P_2 = (A, 1), (B, 5), (E, 6)$.
The \emph{border node} for path $\mathcal P_1$ is $\border(\mathcal P)= (F,4)$ and
the border node for path $\mathcal P_2$ is $\border(\mathcal P)= (A,1)$. Thus,
$\beta(E, 6) = \{(A, 1), (F,4)\}$ and $B(E, 6) = \{A, F\}$.

As an example of \proc{Get-Seq}, suppose that the seedset is $S=\{A,C,F\}$ (note that $S$ is connected), and refer to the graph in Figure~\ref{fig:diffuseexample}.  First, we would ``activate'' vertices in $\mathcal H$ that correspond to seedset nodes, namely $X_{A,1},...,X_{A,6}$, and $X_{C,1},...,X_{C,6}$, and $X_{F,1}, ..., X_{F,6}$.  Next, we iterate over
each of the timesteps. At $t = 2$, since the $\{A, B\} \in  E(G)$ and thus $(X_{A, 1}, X_{B, 2})$ is in $E(\mathcal H)$, we may activate $X_{B, 2}$ as well as all $X_{B, t}$ for all $t > 2$.
At time $t = 3$, we do not find any new nodes that can be activated.
Similarly, at time $t = 4$, we can activate $X_{E, 4}$-$X_{E, 6}$ since $\{E, C\}$ is an edge; at time $t = 5$, we can activate $X_{D, 5}$-$X_{D,6}$. Finally at $t = 6$, all nodes are activated and nothing needs to be done at this step. So we finally obtain the activation sequence:
$$T=(\{A,C,F\}, \{B\}, \bot, \{E\}, \{D\}, \bot ).$$

\end{document}